\documentclass[a4paper,onecolumn,11pt]{quantumarticle}
\pdfoutput=1
\usepackage[utf8]{inputenc}
\usepackage[english]{babel}
\usepackage[T1]{fontenc}
\usepackage{amsmath}
\usepackage{hyperref}

\usepackage{tikz}
\usepackage{lipsum}

\usepackage[T1]{fontenc}
\usepackage{geometry}

\geometry{a4paper,margin=1in}

\usepackage{amsmath,amssymb,amsthm,mathtools,bm}

\usepackage{physics}

\usepackage{graphicx}   
\usepackage{dcolumn}    

\usepackage{enumitem}
\setlist[itemize]{leftmargin=2em}
\setlist[enumerate]{leftmargin=2em}

\usepackage{float}      

\usepackage{pgfplots}
\pgfplotsset{compat=1.18}

\usepackage{tikz-3dplot}

\usepackage{tikz} 

\usepackage{hyperref}
\hypersetup{colorlinks=true,linkcolor=black,citecolor=blue,urlcolor=blue}

\numberwithin{equation}{section} 
\theoremstyle{plain}
\newtheorem{theorem}{Theorem}[section]
\newtheorem{lemma}[theorem]{Lemma}
\newtheorem{corollary}[theorem]{Corollary}

\theoremstyle{definition}

\theoremstyle{remark}

\usepackage{hyperref}

\makeatletter
\renewcommand{\thetheorem}{\arabic{section}.\arabic{theorem}}

\makeatother

\usepackage{algpseudocode}
\makeatletter
\newcounter{algorithm}                   
\renewcommand{\thealgorithm}{\arabic{algorithm}}

\def\fps@algorithm{tbp}
\def\ftype@algorithm{4}                  
\def\ext@algorithm{loa}                  
\def\fnum@algorithm{Algorithm~\thealgorithm}

\newenvironment{algorithm}{\@float{algorithm}}{\end@float}
\newenvironment{algorithm*}{\@dblfloat{algorithm}}{\end@dblfloat}
\makeatother

\usepackage{float}
\floatstyle{ruled}
\restylefloat{algorithm}

\algrenewcommand\algorithmicrequire{\textbf{Input:}}
\algrenewcommand\algorithmicensure{\textbf{Output:}}

\usepackage{float}
\floatstyle{ruled}                 
\floatname{algorithm}{Algorithm}   

\usepackage{graphicx}   
\usepackage[caption=false]{subfig}

\let\REQUIRE\Require
\let\ENSURE\Ensure
\let\STATE\State
\let\FORALL\ForAll
\let\RETURN\Return
\let\COMMENT\Comment
\let\ENDFOR\EndFor

\algrenewcommand\algorithmicrequire{\textbf{Input:}}
\algrenewcommand\algorithmicensure{\textbf{Output:}}

\usepackage{amsmath}              
\numberwithin{equation}{section}

\makeatletter
\renewcommand{\theequation}{\arabic{section}.\arabic{equation}}
\makeatother

\begin{document}

\title{A Unified Complexity-Algorithm Account of Constant-Round QAOA Expectation Computation}

\author{Jingheng Wang}
\author{Shengminjie Chen}%
  \email{csmj@ict.ac.cn}
\author{Xiaoming Sun}
\author{Jialin Zhang}
 \affiliation{%
State Key Lab of Processors, Institute of Computing Technology,
Chinese Academy of Sciences, 100190, Beijing, China.
}
 \altaffiliation{School of Computer Science and Technology, University of Chinese Academy of Sciences, Beijing, 100049, China.}

\maketitle

\begin{abstract}
The Quantum Approximate Optimization Algorithm (QAOA) is widely studied for combinatorial
optimization and has achieved significant advances both in theoretical guarantees and
practical performance, yet for general combinatorial optimization problems the expected
performance and classical simulability of fixed-round QAOA remain unclear. Focusing on
Max-Cut, we first show that for general graphs and any fixed round $p\ge2$, exactly
evaluating the expectation of fixed-round QAOA at prescribed angles is $\mathrm{NP}$-hard,
and that approximating this expectation within additive error $2^{-O(n)}$ in the number
$n$ of vertices is already $\mathrm{NP}$-hard. To evaluate the expected performance of QAOA, we propose a dynamic
programming algorithm leveraging tree decomposition. As a byproduct, when the $p$-local
treewidth grows at most logarithmically with the number of vertices, this yields a
polynomial-time \emph{exact} evaluation algorithm in the graph size $n$. Beyond Max-Cut,
we extend the framework to general Binary Unconstrained Combinatorial Optimization (BUCO).
Finally, we provide reproducible evaluations for rounds up to $p=3$ on representative
structured families, including the generalized Petersen graph $GP(15,2)$, double-layer
triangular 2-lifts, and the truncated icosahedron graph $C_{60}$, and report cut ratios
while benchmarking against locality-matched classical baselines.
\end{abstract}

\section{Introduction}

The Quantum Approximate Optimization Algorithm (QAOA)~\cite{Farhi2014QAOA}, as a leading variational approach for combinatorial optimization, has attracted broad attention and has been applied to a range of problems, including Max-Cut and related constraint satisfaction tasks~\cite{Farhi2014QAOA,Hadfield2019QAOAOperator,Zhou2020QAOAPRX,Blekos2024QAOAReview}. Building on these applications, many efforts aim to improve QAOA’s empirical efficiency and solution quality via problem-tailored mixers and the Quantum Alternating Operator Ansatz~\cite{Hadfield2019QAOAOperator}, warm-start strategies~\cite{Egger2021WarmStart,Tate2023Warmest}, annealing-based initialization~\cite{Sack2021QAInit}, adaptive and counterdiabatic variants~\cite{Zhu2022ADAPTQAOA,Chandarana2022DCQAOA}, parameter-setting schemes for weighted instances~\cite{Sureshbabu2024ParamWeighted}, and hardware-scale studies that quantify execution-time and noise constraints for QAOA~\cite{Weidenfeller2022Scaling}. However, the assessment of these enhancements still relies predominantly on numerical benchmarks; beyond a few settings where guarantees can be inherited from classical relaxations or are proved for restricted graph families, rigorous, instance-wide performance bounds on general inputs remain scarce, so QAOA is still most often regarded as a heuristic quantum algorithm.

To obtain theoretical guarantees for QAOA, a basic objective is to understand the
\emph{expected} performance of \emph{constant-round} QAOA, treating the number of
alternating phase--mixer rounds $p$ as a fixed constant. A central role is played by the QAOA
expectation
\[
  \langle C\rangle=\langle\psi_p(\bm{\gamma},\bm{\beta})|C|\psi_p(\bm{\gamma},\bm{\beta})\rangle,
\]
which is both the objective optimized by classical parameter-tuning routines and the quantity whose value underpins approximation guarantees.
In this work, we primarily focus on the Max-Cut objective, where $C$ is the
standard cut Hamiltonian on an input graph.

Farhi \emph{et al.}~\cite{Farhi2014QAOA} showed that one-round QAOA attains an
approximation ratio of at least $0.6924$ for Max-Cut on $3$-regular graphs. Building
on this, Wurtz and Love~\cite{WurtzLove2021PRA} proved worst-case approximation ratios
of at least $0.7559$ for $p=2$ and $0.7924$ for $p=3$ on $3$-regular graphs, the
latter under a conjecture that graphs with no ``visible'' cycles are worst case. On
sparse, locally tree-like inputs, constant-round QAOA has been analyzed from several
angles : Basso \emph{et al.}~\cite{Basso2022FOCS,Basso2022TQC} studied its performance
on large sparse hypergraphs, mixed spin-glass models, and large-girth regular graphs,
identifying both average-case limitations at fixed round and favorable high-round
behavior on large-girth regular graphs and the Sherrington--Kirkpatrick model;
Marwaha~\cite{Marwaha2021LocalVsQAOA} and Hastings~\cite{Hastings2019BoundedDepth}
showed that suitable local classical algorithms can match or even outperform low-round
QAOA on high-girth regular graphs and on problems such as MAX-3-LIN-2 and triangle-free
Max-Cut. More recently, Li \emph{et al.}~\cite{LiSuYangZhang2024} moved beyond the
high-girth regime by deriving an exact iterative formula for the expected cut
fraction of (multi-angle) QAOA on certain low-girth expander graphs,
and provided numerical evidence that constant-round QAOA and its variants can
outperform the best-known classical local algorithms on these families.

Viewed through the lens of the expectation $\langle C\rangle$, many of
these works share a common structure: they either derive explicit formulas or iterative
schemes for $\langle C\rangle$ on particular graph families, or else bound it directly, and then
translate these evaluations into approximation-ratio guarantees or comparisons with
classical algorithms. From this perspective, understanding the performance of constant-round
QAOA naturally leads to the classical problem of evaluating or approximating $\langle C\rangle$ as
a function of the input instance and its structure.

Despite important progress in the theoretical performance analysis of QAOA, two
fundamental questions remain open:

\begin{itemize}
     
     \item \textbf{Structural solvability landscape.}
On the algorithmic side, positive results for expectation evaluation are mostly known for
graphs with additional structure, typically treated on a case-by-case basis using ad hoc
combinatorial arguments or dynamic-programming techniques. These tractability results are
scattered across particular graph families and are not organized by a unifying structural
parameter such as treewidth, local treewidth (cf.~\cite{Eppstein2000LocalTW,DemaineHajiaghayi2004LocalTW,DemaineFominHajiaghayiThilikos2005HMinorFree}), or clique-/rank-width (cf.~\cite{OumSeymour2006RankWidth}), so they do not yet provide a
parameter-driven picture that systematically explains when QAOA expectation evaluation
should be feasible and when it should be intractable.

    
    \item \textbf{Expectation vs.\ sampling.}
Within the QAOA literature, most rigorous complexity-theoretic results concern
\emph{sampling} tasks or individual output probabilities rather than expectation
values. For example, even one-round QAOA can generate output distributions that are
classically intractable to sample from under standard assumptions~\cite{FarhiHarrow2016QAOASupremacy},
and there are average-case hardness results for approximating output probabilities of
random one-round QAOA circuits within additive error $2^{-O(n)}$~\cite{Krovi2022}.
By contrast, despite the extensive work analyzing the approximation guarantees of
constant-round QAOA on various graph and spin-glass ensembles (see the performance
results discussed above), the classical complexity of evaluating the QAOA expectation at
prescribed angles on general graphs for round $p\ge 2$ remains largely unresolved.

\end{itemize}

In this work, we try to answer the above two questions through a unified \emph{complexity--algorithm} account of \emph{constant-round} QAOA \emph{expectation} computation. The main contributions are summarized as follows:
\begin{enumerate}
\item \textbf{Hardness on general graphs for round $p\ge2$.}
For Max-Cut on graphs and any round $p\ge2$, exactly evaluating the QAOA expectation
at prescribed angles is $\mathrm{NP}$-hard, and even achieving additive
error at most $c\,2^{-2n}$ for some constant $c>0$ is already $\mathrm{NP}$-hard under this reduction.
This directly targets the quantity optimized in practice and complements earlier
results, which establish hardness of estimating output probabilities for single-round
QAOA but do not address expectation values.

\item \textbf{Algorithmic–structural tractability and generalization.} We design an \emph{iterative expectation evaluator} for constant $p$ with total running time $\exp\!\bigl(O(p\cdot \mathrm{ltw}_p(G))\bigr)\cdot \mathrm{poly}(n)$, where $\mathrm{ltw}_p(G)$ measures the treewidth of $p$-local graphs of G. On graph families whose local treewidth grows at most logarithmically in $n$, this yields \emph{polynomial-time exact} expectations. The framework further \emph{generalizes beyond Max-Cut} to BUCO via pseudo-Boolean expansions, yielding parallel evaluability guarantees in these settings.

\item \textbf{Reproducible experiments and locality-matched baselines.} We implement the $p$-local evaluator and report normalized expected cut ratios $\langle C\rangle/|E|$ for rounds up to $p=3$ on three structured families (e.g., generalized petersen graph $GP(15,2)$, double-layer triangular $2$-lifts, and truncated icosahedron graph $C_{60}$). To ensure fair comparison, we include classical $k$-local baselines matched to QAOA’s lightcone radius (e.g., multi-step threshold-flip rules and Barak–Marwaha–type Gaussian-sum rules), enabling reproducible, like-for-like benchmarking of low round QAOA versus classical locality.

\end{enumerate}

Organization. Sec.~\ref{sec:preliminaries} fixes notation and structural parameters (including local treewidth). Sec.~\ref{QAOAhard} proves $\mathrm{NP}$-hardness of
exact evaluation and shows that, for general graphs and any fixed round $p\ge2$,
approximating the expectation within additive error $2^{-O(n)}$ in the number of
vertices $n$ is already $\mathrm{NP}$-hard. Sec.~\ref{QAOADP} presents the evaluator and its complexity guarantees and treats BUCO. Sec.~\ref{sec:experiment} reports experiments on three structured families (Generalized Petersen graph $GP(15,2)$, double-layer triangular $2$-lifts, and truncated icosahedron graph $C_{60}$), followed by conclusions in Sec.~\ref{sec:conclusion}.

\section{PRELIMINARIES}\label{sec:preliminaries}

This section introduces the basic notions of QAOA, together with auxiliary definitions needed in our proofs and algorithms. Throughout, we write $[n]=\{1,\dots,n\}$.

\subsection{Definitions in Graph Theory}\label{sec:2.1}

Let $G=(V,E)$ be a simple undirected graph with $n=|V|$ and $m=|E|$. For each node $v\in V$, let $N(v)$ denote the neighborhood set and $\deg(v)=|N(v)|$. For an edge $e=(u,v)\in E$ and integer $p\ge 0$, the $p$-hop vertex neighborhood represents the node set whose distance from $u$ and $v$ is not greater than $p$, i.e.,
\begin{equation}\label{eq:2.1}
N_p(e)=\Bigl\{\, w\in V \ \Bigm|\ \min\!\bigl(\operatorname{dist}_G(u,w),\,\operatorname{dist}_G(v,w)\bigr)\le p \Bigr\}.
\end{equation}
where $\operatorname{dist}_G(u,w)$ is distance between the node $u$ and the node $w$. Normally, this distance is measured by the shortest path length in $G$ from $u$ to $w$. Leveraging the $p$-hop neighborhood set, the edge set of $p$-local subgraph $G_p(e)$ around the edge $e$ is as follows:
\begin{equation}\label{eq:2.2}
\begin{aligned}
E\bigl(G_p(e)\bigr)
= \left\{\, (x,y)\in E \;\middle|\;
\begin{array}{l}
x,y\in N_p(e),\\[2pt]
\min\{\operatorname{dist}_G(u,x),\operatorname{dist}_G(v,x),
      \operatorname{dist}_G(u,y),\operatorname{dist}_G(v,y)\}\le p-1
\end{array}
\right\}.
\end{aligned}
\end{equation}
For a vertex $v$, define $N_p(v)$ and $G_p(v)$ analogously.

As a natural extension of graphs, a hypergraph is $H=(V,\mathcal{E})$ with $\mathcal{E}\subseteq 2^V$. Its primal (Gaifman) graph $PG(H)$ connects $x,y\in V$ whenever there exists $f\in\mathcal{E}$ with $\{x,y\}\subseteq f$. Neighborhoods and distances on $H$ are taken via $PG(H)$.
For a hyperedge $S\in\mathcal{E}$ and an integer $p\ge 0$, define the $p$-hop
vertex neighborhood of $S$ by
\[
  N_p^H(S) = \bigl\{ v\in V : dist_{PG(H)}(v,S)\le p \bigr\},
  \qquad
  dist_{PG(H)}(v,S) := \min_{u\in S} dist_{PG(H)}(u,v).
\]
The corresponding $p$-local induced sub-hypergraph around $S$ is
\[
  H_p(S) := H\bigl[ N_p^H(S) \bigr],
\]
that is, the sub-hypergraph of $H$ with vertex set $N_p^H(S)$ and with hyperedge
set $\{ T\in\mathcal{E} : T\subseteq N_p^H(S) \}$.


\subsection{QAOA and Its Expectation Performance}
We briefly recall the QAOA ansatz for a combinatorial optimization problem. Consider an objective function $f(x)$ defined over $n$-bit strings, our goal is to find a bit string $x$ such that $f(x)$ is as large as possible. Normally, the above function can be mapped into a Hamiltonian $C$ that encodes function values for all possible input as a diagonal matrix in the computational basis, i.e.,
\[
C |\bm x\rangle = f(\bm x) |\bm x\rangle \quad \forall \bm x \in \{0,1\}^n
\]

The QAOA ansatz at round $p \in \mathbb{N}$ is a general quantum algorithm for obtaining an approximate solution to a combinatorial optimization problem. Qubits are indexed by the bit set $[n]$. Starting from the uniform superposition over all computational basis states, i.e.,  $|+\rangle^{\otimes n} = \frac{1}{\sqrt{2^n}}\sum_{x \in \{0,1\}^n } |x\rangle $, QAOA alternates between using the phase operator with the parameter $\bm\gamma$ and the mixing operator with the parameter $\bm\beta$, i.e.,
\[
|\psi_p(\bm{\gamma,\beta}) \rangle= U(B,\beta_p)U(C,\gamma_p)\cdots U(B,\beta_1)U(C,\gamma_1) |+\rangle^{\otimes n} = \Bigl(\prod_{\ell=1}^p e^{-i\beta_\ell B}\,e^{-i\gamma_\ell C}\Bigr)\,|+\rangle^{\otimes n}
\]
where $U(B,\beta_p) = e^{-i\beta_p B} = \prod_{j \in [n]} e^{-i \beta_p X_j}$ is the mix operator, $U(C,\gamma_p) = e^{-i \gamma_p C}$ is the phase operator, and angles $(\bm{\gamma,\beta})\in\mathbb{R}^p\times\mathbb{R}^p$. Measuring the QAOA state $|\psi_p(\bm{\gamma,\beta}) \rangle$ on the computational basis states, a good solution for combinatorial optimization problem can be recovered and its expected performance is as follows:
\[
F_p(G;\bm{\gamma,\beta})= \langle\psi_p(\bm{\gamma,\beta})|\,C\,|\psi_p(\bm{\gamma,\beta})\rangle
\]
Normally, obtaining the best angles is not easy. As an alternative, some suitable angles can be obtained by combining with the classical optimizer.  

In special cases for Max-Cut, qubits are indexed by the node set $V$ and the problem Hamiltonian (cost operator) aggregates edgewise parity penalties, while the mixer flips single qubit:
\begin{equation}
C=\sum_{(u,v)\in E}\frac{1-Z_uZ_v}{2},
\qquad B=\sum_{j\in V}X_j
\end{equation}
It is convenient to decompose the cost into edge terms
\begin{equation}
C_e=\frac{1-Z_uZ_v}{2},\qquad C=\sum_{e\in E}C_e,
\end{equation}
Leveraging the above expression, the expected performance of QAOA for Max-Cut problem can be expressed as the sum of edgewise contributions, i.e.,
\begin{equation}\label{eq: contribution}
F_p(G;\bm{\gamma,\beta})
  = \sum_{e\in E}\langle\psi_p(\bm{\gamma,\beta})|\,C_e\,|\psi_p(\bm{\gamma,\beta})\rangle .
\end{equation}

Evaluating each edge contribution through the Heisenberg-evolved observable
\(U_p^{\dagger} C_e U_p\) and exploiting the fact that this operator is supported only on
the \(p\)-local graph \(G_p(e)\) of \(e\), we compute the edge contribution using only the
restricted subcircuit on the subgraph \(G_p(e)\):

\begin{equation}
\mathrm{contrib}_p(e)
= \bigl\langle + \bigr|^{\otimes |V(G_p(e))|}
  \!U_p^\dagger\, C_e\, U_p
  \bigl|+\bigr\rangle^{\otimes |V(G_p(e))|}
\end{equation}
where $U_p = \prod_{\ell=1}^p e^{-i\beta_\ell B}\,e^{-i\gamma_\ell C},$ and $V(G_p(e))$ is the vertex set of $G_p(e)$. Summing $\mathrm{contrib}_p(e)$ over $e\in E$ reproduces \eqref{eq: contribution}. The restricted subcircuit on the subgraph is normally referred as 'local-lightcone' evaluation that is the basis for our dynamic-programming algorithms and for the structural parameters introduced later.

In addition, the above formalism can be easily extended to weighted $k$-ary constraints.  For a hypergraph $H=(V,\mathcal{E})$
with a hyperedge $f\subseteq V$ of arity $|f| \le k$ and weight $w_f\ge 0$, define the parity penalty
\begin{equation}
C_f=\frac{w_f}{2}\Bigl(1-\prod_{i\in f} Z_i\Bigr).
\end{equation}
Taking $C=\sum_{f\in\mathcal{E}} C_f$ yields a QAOA instance for hypergraph Max-Cut–type objectives.

\subsection{(Hyper)Tree Decompositions and Local Width}
Tree decompositions capture how a graph can be assembled from small, overlapping pieces arranged along a tree; they provide the standard vehicle for dynamic programming on sparse structure. Let $H=(V_H,E_H)$ be a graph.  A tree decomposition is a pair $(T,\{X_a\}_{a\in V(T)})$ whose nodes $a$ are labelled by subsets (``bags'') $X_a\subseteq V_H$ and that satisfies:
\begin{enumerate}
     \item \emph{Covering}: $\bigcup_{a} X_a=V_H$;
    \item \emph{Edge coverage}: for every $(x,y)\in E_H$ there exists $a \in V(T)$ with $\{x,y\}\subseteq X_a$;
    \item \emph{Running intersection}: for each $v\in V_H$, the set $\{a\in V(T):\, v\in X_a\}$ induces a connected subtree of $T$.
\end{enumerate}
The width of the decomposition is $\max_a |X_a|-1$, and the treewidth $\operatorname{tw}(H)$ is the minimum width over all decompositions~\cite{Bodlaender1996LinearTW}.  Intuitively, small treewidth means that all interactions can be localized within small bags along a tree-shaped backbone.

For a hypergraph $H=(V,\mathcal{E})$ we adopt the standard convention that
$\operatorname{tw}(H)=\operatorname{tw}\!\bigl(PG(H)\bigr)$,
where $PG(H)$ is the primal (Gaifman) graph on $V$ with edges between any two vertices co-occurring in a hyperedge.  Thus any tree decomposition of $PG(H)$ can be used verbatim for $H$: bags contain vertices, and every hyperedge is fully contained in some bags.  This choice aligns the hypergraph case with our operator definitions based on Pauli $Z$-parities.

In light of the $p$-hop lightcone convention, we quantify the local structural complexity seen by an edge at depth $p$.  For $p\ge 0$, define the $p$-local treewidth of $G$ by
\begin{equation}
\operatorname{ltw}_p(G)=\max_{e\in E}\ \operatorname{tw}\bigl(G_p(e)\bigr).
\end{equation}
  For hypergraphs, $\operatorname{ltw}_p$ is computed on the primal graph $PG(H)$.  By definition, $\operatorname{ltw}_p(G)$ is monotone in $p$ and bounded by the global treewidth ($\operatorname{ltw}_p(G)\le \operatorname{tw}(G)$). This parameter measures the “hardest” $p$-local neighborhood and will govern the state size of our dynamic programs.

\section{Inapproximability of the Expectation of Fixed-Round QAOA}{\label{QAOAhard}}
In this section, we establish the computational complexity of evaluating the expected cut value of QAOA on Max-Cut, i.e., for any round \(p \ge 2\), the exact evaluation problem given a graph \(G\) and angles \((\bm{\gamma,\beta})\) is NP-hard. In addition, when assuming \(\mathbf{P}\neq\mathbf{NP}\), no polynomial-time classical algorithm can approximate this expectation within additive error \(2^{-O(n)}\) . Our first formal statement is as follows:

\begin{theorem}[Main Theorem]\label{thm:3.1}
For any round \(p \ge 2\), if \(\mathbf{P} \ne \mathbf{NP}\), then no classical polynomial-time algorithm can \emph{exactly} compute the expected performance of QAOA for the Max-Cut problem at any arbitrarily given parameters \((\bm{\gamma}, \bm{\beta})\).
\end{theorem}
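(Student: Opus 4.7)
The plan is a polynomial-time Karp reduction from an NP-hard source problem---most naturally the decision version of Max-Cut---to exact computation of $F_p(G;\bm{\gamma},\bm{\beta})$ for any fixed $p\ge 2$. I would settle $p=2$ first and then lift to all larger $p$ by angle padding.

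Given a Max-Cut instance on a graph $H=(V_H,E_H)$, I would augment $H$ to a graph $G$ by attaching a small \emph{anchor} gadget: introduce two new vertices $a,b$, the distinguished edge $e^\star=(a,b)$, and edges connecting $\{a,b\}$ to every vertex of $V_H$. By construction the $2$-local lightcone satisfies $G_2(e^\star)=G$, so by the edgewise decomposition \eqref{eq: contribution} the single contribution $\mathrm{contrib}_2(e^\star)$ depends on the global QAOA state on $G$, whereas the remaining edges of $G$ either lie in $H$ or are incident to the anchor and have $2$-local neighborhoods we can keep small by a careful gadget design (e.g., by inserting auxiliary short paths or by scaling down gadget weights). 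Their individual contributions should then be computable in closed form in $\mathrm{poly}(n)$ time and can be subtracted off, isolating $\mathrm{contrib}_2(e^\star)$ as the sole nontrivial term.

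Next, the crux is choosing angles $(\gamma_1,\beta_1,\gamma_2,\beta_2)$ so that $\mathrm{contrib}_2(e^\star)$ becomes an analytically tractable and, crucially, invertible function of the cut-value distribution of $H$. Two promising regimes look natural: (i) \emph{resonant} angles such as $\gamma_1=2\pi/q$ for an integer $q>|E_H|$, which turn $e^{-i\gamma_1 C}$ into a discrete Fourier weighting that cleanly separates the different cut-value classes of $H$; and (ii) \emph{perturbative} angles of order $1/\mathrm{poly}(n)$ on the second round, allowing Taylor expansion of the final phase/mixer pair to produce a low-degree polynomial in the cut spectrum whose coefficients expose specific moments. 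By tuning the gadget connectivity to the angle choice, the resulting expression should allow one to recognise whether the Max-Cut value of $H$ meets a prescribed threshold, completing the decision reduction.

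Finally, extending from $p=2$ to any fixed $p\ge 2$ follows by angle padding: setting $\gamma_\ell=\beta_\ell=0$ for $\ell>2$ makes the extra rounds act as the identity and reduces $p$-round evaluation to $p=2$ evaluation, so the same reduction applies verbatim. The main obstacle will be the coupled design of anchor gadget and angle choice: one needs $\mathrm{contrib}_2(e^\star)$ to be an \emph{invertible} functional of the Max-Cut value of $H$ under the chosen angles, while simultaneously arranging that every non-anchor edge has a sufficiently local lightcone for its own contribution to be computable in closed form. Balancing these two requirements---locality of the ancillary terms against globality of the distinguished term---is the principal technical content of the reduction.
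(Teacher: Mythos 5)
The proposal captures a correct high-level instinct (build a gadget so that some lightcone spans the whole graph, then pick angles that let the expectation reveal Max-Cut), but there are two concrete gaps, each severe enough that the plan as written does not go through.

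\textbf{First, the reduction must be a Turing reduction, not a Karp reduction, and the mechanism for inverting the expectation is missing.} The paper's proof never tries to decide Max-Cut from a single evaluation. It fixes $\bm\beta$ once and for all, leaves $\phi$ as a free scalar, observes that
$\langle C\rangle = -\tfrac12 h(e^{i\phi}) + c_0$
for a Laurent polynomial $h$ of degree at most $m=|E(G')|$, and then queries the (hypothetical) exact evaluator at the $2m+1$ roots of unity $\phi_j = 2\pi j/(2m+1)$ to recover all coefficients of $h$ by inverse DFT. The whole engine is polynomial interpolation plus a controlled statement that the \emph{largest nonzero exponent} of $h$ is an affine function of $\mathrm{MAXCUT}(G')$, with contributions from other edges proven to vanish above that degree. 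Nothing in your ``resonant'' or ``perturbative'' regimes produces a comparable invertibility argument: a single real number $F_p$ at one fixed angle does not expose moments or coefficients. You would need to either query at many angle values (in which case you should say so and work out the interpolation), or prove that a single scalar output cleanly separates the two sides of the Max-Cut decision—a much stronger and unmotivated property that the paper does not claim.

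\textbf{Second, the hope of isolating $\mathrm{contrib}_2(e^\star)$ by subtracting off the other edges' contributions does not survive your own gadget.} Once $a$ and $b$ are joined to every vertex of $V_H$, the $2$-hop lightcone of \emph{any} edge incident to $V_H$ or to $\{a,b\}$ is the whole graph: all vertices are within distance $1$ of $a$, so $N_2(e) = V(G)$ for every edge $e$. There is no way to ``keep the other lightcones small'' without changing the construction in a way you have not specified, and ``scaling down gadget weights'' does not change the lightcone at all (it changes magnitudes, not supports). The paper handles this differently: it does \emph{not} assume the other edges' contributions are computable in closed form. Instead, Lemmas~\ref{lem:5.4} and~\ref{lem:5.5} show that the leading exponent of the sum $h = \sum_e h_e$ is contributed solely by the edge $(x_0,y_0)$, because the degree of $x_0$ plus the degree of $y_0$ strictly dominates every other edge's endpoint degree sum. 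Getting that dominance to hold, and getting the top coefficient to be provably nonzero (Lemma~\ref{lem:3.7}), is exactly what forces the heavy machinery of the paper's gadget (vertex blow-up to control the structure of maximum cuts, the global bipartite frame to give $x_0,y_0$ dominating degrees, the controller $w$ to make the top coefficient's phases sum to something nonzero). Your two-anchor gadget has none of this structure, so even if you repaired the interpolation step you would not be able to argue that the top coefficient is determined by $e^\star$ alone, nor that it is nonzero.

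One part of your plan is sound: padding with $\gamma_\ell=\beta_\ell=0$ for $\ell>2$ is a legitimate way to reduce $p$-round to $2$-round evaluation, since the theorem only requires hardness for \emph{some} choice of parameters. The paper instead uses all $p$ rounds with near-zero but nonzero inner angles, which it exploits to get the $(p-1)\cdot\mathrm{MC}$ term in the leading exponent; both routes are in principle acceptable for the headline statement. But the two gaps above are the real technical content of the theorem, and they are not addressed by the proposal.
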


The key idea of the proof is to reduce an arbitrary graph $G$ to a new graph $G'$. The construction ensures that, at QAOA round $p$, the local lightcone of the relevant observable spans the entire graph, so that the expectation of the cut objective encodes $\mathrm{MAXCUT}(G')$; moreover, $\mathrm{MAXCUT}(G)$ can be recovered from $\mathrm{MAXCUT}(G')$ in polynomial time; then we rewrite \eqref{eq:3.1} as a Laurent polynomial over the unit circle and show that its \emph{largest nonzero exponent} encodes information about $\mathrm{MAXCUT}(G')$. Leveraging the sampleability of $h(e^{i\phi})$, we can recover this exponent via a discrete inverse Fourier transform in polynomial time and thus obtain $\mathrm{MAXCUT}(G)$. If the expected performance of QAOA for Max-Cut could be exactly computed in time polynomial in the graph size, this would contradict $P \neq NP$.

\begin{proof}
To begin the proof, we first present the definition of the expected value of QAOA. For a fixed constant $p\ge 2$, the QAOA expectation can be written as
\begin{equation}\label{eq:3.1}
\mathbb{E}[C] \,=\, \langle\bm{\gamma,\beta}\mid C\mid \bm{\gamma,\beta}\rangle. 
\end{equation}
In our proof, we choose a special parameter sequence with the assumption $\cos\psi\gg\sin\psi$ and $\cos\psi\ge1-2^{-n^2}$ as below:
\[
\bm \beta=\Bigl\{\,\psi,\ldots,\psi,\,\tfrac{\pi}{4},\,\tfrac{\pi}{4}\Bigr\},\qquad
\bm \gamma=\{\phi,\phi,\ldots,\phi\},
\]
where \(n=|V(G')|\). Pick the exact dyadic angle
\(
 \psi:=2^{-\lceil n^2/2\rceil}\in(0,\tfrac{\pi}{2}).
\)
 Using \(\cos x\ge 1-\tfrac{x^2}{2}\) we obtain
 \[
 \cos\psi\ \ge\ 1-\tfrac{\psi^2}{2}\ \ge\ 1-2^{-n^2}.
 \]
 Here \(\psi=1/2^{\lceil n^2/2\rceil}\) is dyadic, so its binary encoding length is \(\Theta(n^2)\) bits. The constant \(\pi\) is fixed and independent of \(n\), so no numerical-precision or encodability issues arise.


\subsection{Graph Construction and Reduction (\texorpdfstring{$G \to G'$}{G to G'})}\label{sec:5.2}

First, we introduce the construction procedure for $G'$ with $|V(G')|=n$ and $|E(G')|=m$. Let $G$ be the original graph with $|V(G)|=n_0$ and $|E(G)|=m_0$ (assume $G$ is connected). Construct $G'$ as follows:


\paragraph{Vertex blow-up.}
For every vertex $u \in V$, create a bipartite complete graph
\(
  B(u) \cong K_{n_0+1,\,10n_0},
\)
with bipartition $(L_u,R_u)$ satisfying $|L_u| = n_0+1$ and $|R_u| = 10n_0$, and add all edges between $L_u$ and $R_u$ to $E'$. We fix an arbitrary but consistent labelling inside each gadget: we enumerate the left and right side as
\[
  L_u = \{u_{1,0},u_{1,1},\dots,u_{1,n_0}\},\;R_u=\{u_{2,0},u_{2,1},\dots,u_{2,10n_0-1}\}.
\]
Here the first index ($1$ or $2$) indicates whether the vertex lies in $L_u$ or $R_u$, and the second index $i$ records its position. All vertices of $B(u)$ are regarded as the replacement gadget for $u$.

\paragraph{Synchronous edges for original edges.}
For every original edge $(u,v) \in E$ and for every index $i \in \{0,1,\dots,n_0\}$, add an edge
\(
  \{\,u_{1,i},\,v_{1,i}\,\}
\)
between the $i$-th vertex of $L_u$ and the $i$-th vertex of $L_v$. Thus each original edge generates $(n_0+1)$ parallel position-aligned edges. Insert all such edges into $E'$.

\paragraph{Global bipartite frame.}
Add an extra complete bipartite graph
\(
  K_{100n_0^2,\,100n_0^2},
\)
whose two parts are
\[
  X = \{x_0, x_1, \dots, x_{100n_0^2-1}\}, \qquad
  Y = \{y_0, y_1, \dots, y_{100n_0^2-1}\}.
\]
Insert all edges of this bipartite graph into $E'$. Moreover, for every $u \in V$, connect the vertices $u_{1,i} \in L_u$ for $i\in[0,n_0]$ and $u_{2,j} \in R_u$ for $j\in[0,10n_0-1]$ to both $x_0$ and $y_0$.
\paragraph{Global controller.}
Add a new vertex $w$ and connect it to
  $x_0, y_0$, and every $u_{1,0}$ for $u \in V$.
Insert all these edges into $E'$.

\medskip
After the above four steps we obtain
\(
  V' \;=\; \bigcup_{u \in V} V(B(u)) \;\cup\; X \;\cup\; Y \;\cup\; \{w\},
\)
and the number of vertices is
\[
  |V'| \;=\; n_0 \bigl( (n_0+1) + 10n_0 \bigr) + 200n_0^2 + 1
         \;=\; 211n_0^2 + n_0 + 1.
\]

We will present in Appendix~\ref{app:C} a polynomial-time procedure that recovers \(\mathrm{MAXCUT}(G)\) from \(\mathrm{MAXCUT}(G')\).

\begin{figure}[htbp]
  \centering
  \begin{tikzpicture}[
    scale=1,
    dot/.style={circle,fill=black,inner sep=1pt},
    odot/.style={circle,draw,inner sep=1pt},
    core/.style={line width=0.9pt},
    aux/.style={thin},
    >=latex
  ]

  \draw[very thick,rounded corners] (0,0) rectangle (12,4);
  \node[anchor=west] at (0.25,0.35) {$\text{expanded }G$};

  \node[dot,label={[yshift=2pt]90:$w$}] (w) at (6,5) {};

  \foreach \i/\y in {1/3.6, 2/3.2, 3/2.8, 4/2.4, 5/2.0}{
    \node[odot] (Llarge\i) at (1.1,\y) {};
  }
  \node at (1.1,1.6) {$\vdots$};

  \foreach \i/\y in {1/3.6, 2/3.2, 3/2.8}{
    \node[odot] (Lsmall\i) at (3.1,\y) {};
  }
  \node at (3.1,2.4) {$\vdots$};

  \foreach \i in {1,2,3,4,5}{
    \foreach \j in {1,2,3}{
      \draw (Llarge\i) -- (Lsmall\j);
    }
  }

  \foreach \i/\y in {1/3.6, 2/3.2, 3/2.8}{
    \node[odot] (Rsmall\i) at (8.9,\y) {};
  }
  \node at (8.9,2.4) {$\vdots$};

  \foreach \i/\y in {1/3.6, 2/3.2, 3/2.8, 4/2.4, 5/2.0}{
    \node[odot] (Rlarge\i) at (10.9,\y) {};
  }
  \node at (10.9,1.6) {$\vdots$};

  \foreach \i in {1,2,3,4,5}{
    \foreach \j in {1,2,3}{
      \draw (Rlarge\i) -- (Rsmall\j);
    }
  }

  \draw[core] (Lsmall1) -- (Rsmall1);
  \draw[core] (Lsmall2) -- (Rsmall2);
  \draw[core] (Lsmall3) -- (Rsmall3);

  \draw[core] (w) -- (Lsmall1);
  \draw[core] (w) -- (Rsmall1);

  \node[dot,label=left:$x_0$] (x0) at (4.0,-0.8) {};
  \node[dot] (x2) at (4.0,-1.4) {};
  \node[dot] (x3) at (4.0,-2.0) {};
  \node at (4.0,-2.5) {$\vdots$};
  \node at (3.5,-1.8) {$X$};

  \node[dot,label=right:$y_0$] (y0) at (8.0,-0.8) {};
  \node[dot] (y2) at (8.0,-1.4) {};
  \node[dot] (y3) at (8.0,-2.0) {};
  \node at (8.0,-2.5) {$\vdots$};
  \node at (8.5,-1.8) {$Y$};

  \foreach \x in {x0,x2,x3}{
    \foreach \y in {y0,y2,y3}{
      \draw[core] (\x) -- (\y);
    }
  }

  \draw[core] (w) -- (x0);
  \draw[core] (w) -- (y0);

  \foreach \j in {1,2,3}{
    \draw[aux] (x0) -- (Lsmall\j);
    \draw[aux] (x0) -- (Rsmall\j);
    \draw[aux] (y0) -- (Lsmall\j);
    \draw[aux] (y0) -- (Rsmall\j);
  }

  \foreach \j in {1,2,3,4,5}{
    \draw[aux] (x0) -- (Llarge\j);
    \draw[aux] (x0) -- (Rlarge\j);
    \draw[aux] (y0) -- (Llarge\j);
    \draw[aux] (y0) -- (Rlarge\j);
  }

  \end{tikzpicture}
  \caption{Construction of $G'$ in the 2-vertex base case.}
  \label{fig:sec3-construction}
\end{figure}
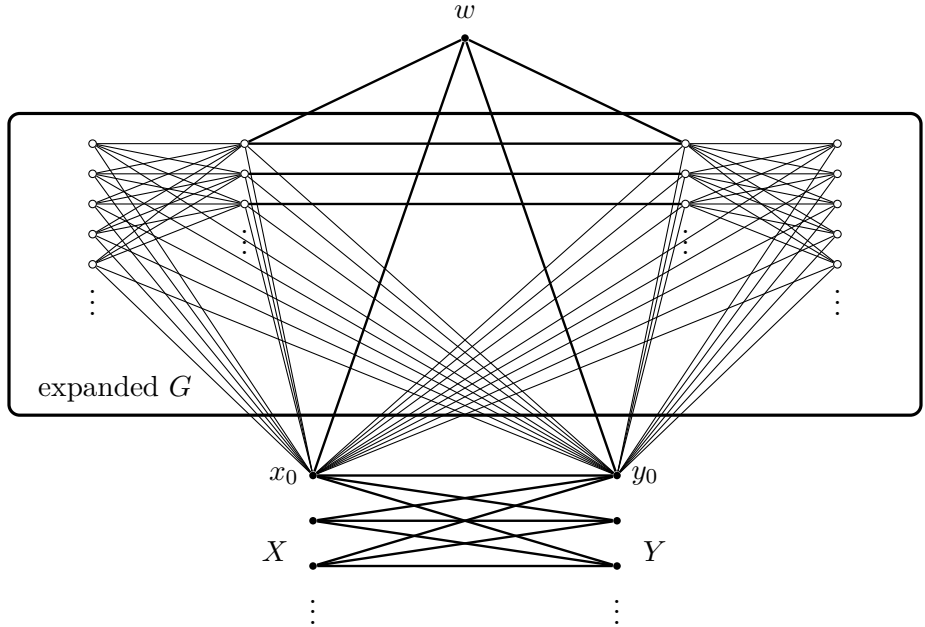

\subsection{Laurent Polynomial Representation and Decomposition}\label{sec:5.3}
Next, we introduce the expanding procedure for the expected performance of QAOA as a Laurent polynomial function.
Firstly, we have
\begin{equation}
\langle \bm{\gamma,\beta}\mid C\mid\bm{\gamma,\beta}\rangle
=\sum_{(u,v)\in E} \frac{1}{2}\,\langle\bm{\gamma,\beta}\mid (1-Z_uZ_v)\mid\bm{\gamma,\beta}\rangle, 
\end{equation}
Using layer labels $j\in\{-p,\ldots,-1,0,1,\ldots,p\}$, a single term can be written as \cite{Basso2022TQC}:
\begin{equation}\label{3.2}
\begin{aligned}
\langle\bm{\gamma,\beta}\mid Z_uZ_v\mid\bm{\gamma,\beta}\rangle
&= \sum_{\{\bm z\}} \, z_u^{[0]}z_v^{[0]}\,
\exp\!\Bigl(i\phi\,\sum_{j=1}^{p} \bigl(C_{j}(\bm z)-C_{-j}(\bm z) \bigr)\Bigr)
\prod_{v=1}^{n} f(\bm z_v),\\
\end{aligned}
\end{equation}
where, for each vertex $u\in V$, we write
\[
  \bm z_u = \bigl(z_u^{[j]}\bigr)_{j=-p}^p \in \{\pm 1\}^{2p+1}
\]
for its spin values on the $2p+1$ layers, and $\bm z^{[j]} = (z_u^{[j]})_{u\in V}$ denotes the spin configuration on layer $j$. The layer-$j$ cut value is
\[
  C_j(\bm z)
  = \sum_{(p,q)\in E}\frac{1}{2}\bigl(1 - z_p^{[j]} z_q^{[j]}\bigr),
\]
and the one-qubit mixer kernel is
\[
  f(\bm z_v)
  = \frac{1}{2}\,
    \langle z_v^{[1]}|e^{i\beta_1 X}|z_v^{[2]}\rangle\cdots
    \langle z_v^{[p]}|e^{i\beta_p X}|z_v^{[0]}\rangle\,
    \langle z_v^{[0]}|e^{-i\beta_p X}|z_v^{[-p]}\rangle\cdots
    \langle z_v^{[-2]}|e^{-i\beta_1 X}|z_v^{[-1]}\rangle .
\]
The single-qubit matrix elements are
\[
  \langle u|e^{\pm i\beta_i X}|v\rangle =
  \begin{cases}
    \cos\beta_i, & u = v,\\[2pt]
    \pm i\sin\beta_i, & u \neq v.
  \end{cases}
\]
Detailed derivations of~\eqref{3.2} can be found in Appendix~\ref{app:B}.
Formally, the above equation can be reformulated into the Laurent polynomial function as below:
\begin{equation}
\langle Z_uZ_v\rangle\;=\; \langle\bm{\gamma,\beta}\mid Z_uZ_v\mid\bm{\gamma,\beta}\rangle\;=\; h_{(u,v)}(e^{i\phi}),
\end{equation}
\begin{equation}
h_{(u,v)}(x)\;=\;\sum_{k=-m}^{m} w(k)\,x^{k}, 
\end{equation}
with $m$ the maximal possible degree and $w(k)$ the coefficients.  Hence
\begin{equation}
\langle\bm{\gamma,\beta}\mid C\mid\bm{\gamma,\beta}\rangle\;=\;-\frac{1}{2}h(e^{i\phi})+c_0\;=\;-\frac{1}{2}\sum_{(u,v)\in E} h_{(u,v)}(e^{i\phi})+c_0. 
\end{equation}
where $c_0=\sum_{(u,v)\in E}\frac{1}{2}=\frac{m}{2}$ is a constant.

\subsection{The Polynomial $h_{(x_0,y_0)}$ Induced by the Observable $Z_{x_0 }Z_{y_0}$}\label{sec:5.4}
Since the Laurent polynomial for the global observable \(C(\bm z)\) decomposes into a sum of Laurent polynomials associated with single-edge observables, we therefore first focus on the edge polynomial \(h_{(x_0,y_0)}(x)\) induced by \(Z_{x_0}Z_{y_0}\), with particular attention to its \emph{largest nonzero exponent}.
 For convenience, we denote a possible assignment for $\bm z=\{\bm z_i\}$ where $\bm z_i \in \{\pm 1\}^{2p+1}$ $\forall i \in V$ as a configuration. The detailed proof of lemmas in this subsection can be referred in Appendix~\ref{app:A}.

Before describing the largest nonzero exponent, we first introduce the Endpoint consistency phenomenon for each observable $Z_uZ_v$, which can make us ignore many items.
\begin{lemma}[Endpoint consistency]\label{lem:5.2}
For any vertex $q$, the dependence of the contribution to $\langle Z_uZ_v\rangle$ on $z_q^{[0]}$ falls into the following cases:
\begin{enumerate}
  \item If $q\notin\{u,v\}$ and $z_q^{[p]}\neq z_q^{[-p]}$, then replacing $z_q^{[0]}$ by $-z_q^{[0]}$ flips the sign of the contribution. The two configurations related by this flip therefore cancel in the sum.
  \item If $q\in\{u,v\}$ and $z_q^{[p]}=z_q^{[-p]}$, then flipping $z_q^{[0]}$ again flips the sign of the contribution, so the corresponding pair of configurations cancels.
  \item In all remaining cases (that is, when $q\notin\{u,v\}$ with $z_q^{[p]}=z_q^{[-p]}$, or $q\in\{u,v\}$ with $z_q^{[p]}\neq z_q^{[-p]}$), the contribution is independent of $z_q^{[0]}$.
\end{enumerate}
\end{lemma}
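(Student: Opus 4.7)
The plan is to isolate the dependence of the summand in \eqref{3.2} on the single variable $z_q^{[0]}$, holding all other spin values fixed. Only two pieces of the summand touch $z_q^{[0]}$: the prefactor $z_u^{[0]}z_v^{[0]}$, which sees $z_q^{[0]}$ precisely when $q\in\{u,v\}$, and the mixer kernel $f(\bm z_q)$, inside which exactly the two matrix elements $\langle z_q^{[p]}|e^{i\beta_p X}|z_q^{[0]}\rangle\,\langle z_q^{[0]}|e^{-i\beta_p X}|z_q^{[-p]}\rangle$ involve $z_q^{[0]}$. The phase $\exp\!\bigl(i\phi\sum_{j=1}^p (C_j-C_{-j})\bigr)$ only sees layers $j=\pm 1,\dots,\pm p$, and the kernels $f(\bm z_{q'})$ for $q'\neq q$ are untouched. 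I would therefore regroup the sum as $\sum_{\text{other }z}\sum_{z_q^{[0]}=\pm 1}$ and pair each configuration with its partner under $z_q^{[0]}\mapsto -z_q^{[0]}$.

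The pair behaviour splits on two binary alternatives. If $z_q^{[p]}=z_q^{[-p]}=s$, the two matrix elements evaluate to $\cos^2\beta_p$ when $z_q^{[0]}=s$ and to $\sin^2\beta_p$ when $z_q^{[0]}=-s$; using the theorem's choice $\beta_p=\pi/4$, both equal $\tfrac12$, so the matrix product is invariant under the flip. If instead $z_q^{[p]}\neq z_q^{[-p]}$, exactly one of the two matrix elements is off-diagonal regardless of $z_q^{[0]}$, and the product equals $\pm i\sin\beta_p\cos\beta_p$ with the sign flipping under $z_q^{[0]}\mapsto -z_q^{[0]}$ (this uses only the relative sign between $e^{+i\beta_p X}$ and $e^{-i\beta_p X}$, and holds for any $\beta_p$).

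Combining these with the presence or absence of the prefactor $z_q^{[0]}$ then yields the three cases. In case~1 the matrix product flips sign and there is no prefactor, so the pair cancels. In case~2 the matrix product is invariant while the prefactor flips sign, again yielding cancellation. The two subcases of case~3 correspond to the remaining combinations: either no prefactor with an invariant matrix product, or a prefactor sign-flip exactly compensating the matrix-product sign-flip; in both the summand is independent of $z_q^{[0]}$.

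The main subtlety is the reliance on $\beta_p=\pi/4$: this is precisely what equates $\cos^2\beta_p$ and $\sin^2\beta_p$ and removes the residual $z_q^{[0]}$-dependence in the equal-endpoint branch. Without it, case~2 and the first subcase of case~3 would both fail, so the lemma is genuinely tailored to the parameter sequence fixed at the start of the theorem's proof rather than valid for generic angles. I expect this parameter-specific step to be the only nontrivial point; the rest is a mechanical case split on the four $(q\in\{u,v\}?,\,z_q^{[p]}=z_q^{[-p]}?)$ quadrants.
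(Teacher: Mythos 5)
Your proposal is correct and follows essentially the same route as the paper's proof: isolate the two one-qubit boundary mixer factors and (when $q\in\{u,v\}$) the observable prefactor, compute their behaviour under the flip $z_q^{[0]}\mapsto -z_q^{[0]}$ at $\beta_p=\pi/4$, and enumerate the four quadrants. Your regrouping of the analysis into two binary alternatives (matrix-product sign behaviour crossed with presence of the prefactor) is a slightly cleaner presentation than the paper's direct enumeration, and your closing remark that only case~1 and the $q\in\{u,v\}$, $z_q^{[p]}\neq z_q^{[-p]}$ subcase survive for generic $\beta_p$ is an accurate observation that the paper leaves implicit.
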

Intuitively, once endpoint-consistency is imposed and since all cost layers are
diagonal in the $Z$ basis, the only dependence on the layer-$0$ spin of a given
vertex $q$ comes from the two end-layer mixers on $q$ (and, when $q\in\{u,v\}$,
from the observable factor $z_u^{[0]}z_v^{[0]}$). At the boundary
$\beta_p=\tfrac{\pi}{4}$ the relevant one-qubit matrix elements of
$e^{\pm i\beta_p X}$ all have the same magnitude and differ only by a sign,
so flipping $z_q^{[0]}$ either flips an overall sign (cases~(1)–(2), giving
pairwise cancellation) or leaves the contribution unchanged (case~(3)),
exactly as stated in the lemma.

\begin{lemma}[No high-degree contribution under asynchronous flips]\label{lem:5.3}
For the observable \(Z_{x_0}Z_{y_0}\), if \(z_{x_0}^{[p]}\neq z_{y_0}^{[p]}\), then there is no contribution to coefficients with exponents
\[
k\ge \Bigl((p-1)\,\mathrm{MC}+200n_0^2+2n_0(11n_0+1)\Bigr),
\]
\end{lemma}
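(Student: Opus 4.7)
The plan is to track where the large exponent $k = \sum_{j=1}^{p}(C_j(\bm z)-C_{-j}(\bm z))$ can come from, layer by layer. First, I would invoke Lemma~\ref{lem:5.2} on the observable $Z_{x_0}Z_{y_0}$: the only configurations that survive pairwise cancellation satisfy $z_q^{[p]}=z_q^{[-p]}$ for every $q\notin\{x_0,y_0\}$, together with $z_{x_0}^{[p]}\neq z_{x_0}^{[-p]}$ and $z_{y_0}^{[p]}\neq z_{y_0}^{[-p]}$. This endpoint-consistency restriction ties the two outermost layers $\pm p$ together on the complement of $\{x_0,y_0\}$ and is the lever for the whole bound.

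Next, I would split the exponent into an outer-layer piece $C_p-C_{-p}$ and the inner-layer sum $\sum_{j=1}^{p-1}(C_j-C_{-j})$. For inner layers I use only the trivial bound $C_j - C_{-j} \le C_j \le \mathrm{MC}$, yielding the $(p-1)\,\mathrm{MC}$ contribution. For the outer-layer difference, every edge whose two endpoints both avoid $\{x_0,y_0\}$ sees identical spins at layers $p$ and $-p$ by endpoint consistency and thus contributes nothing to $C_p-C_{-p}$. The single edge $(x_0,y_0)$ itself also contributes nothing under the hypothesis $z_{x_0}^{[p]}\neq z_{y_0}^{[p]}$: since $z_{x_0}^{[-p]}=-z_{x_0}^{[p]}$ and $z_{y_0}^{[-p]}=-z_{y_0}^{[p]}$ are then also distinct, the edge is cut on both $\pm p$. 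Consequently $C_p-C_{-p}$ is supported on edges incident to exactly one of $x_0,y_0$.

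What remains is to enumerate those incident edges in $G'$. From the global bipartite frame $K_{X,Y}$ they contribute $2(100n_0^2-1)$ edges after discarding $(x_0,y_0)$; from the attachments of every gadget $B(u)$ to both $x_0$ and $y_0$ they contribute $2n_0(11n_0+1)$ edges; and from the controller $w$ two more. Each such edge contributes at most $\pm 1$ to $C_p-C_{-p}$, so $|C_p-C_{-p}|\le 200n_0^2 + 2n_0(11n_0+1)$. Combining with the $(p-1)\,\mathrm{MC}$ inner-layer bound gives the exponent cap in the statement, i.e., $w(k)=0$ for every $k$ above the claimed threshold.

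The main obstacle is book-keeping rather than any deep new idea: one must apply endpoint consistency to all non-endpoint vertices simultaneously, remember that the $(x_0,y_0)$ edge drops out precisely because the asynchrony hypothesis forces it to be cut on both layers $\pm p$, and then verify that the three families of edges incident to $\{x_0,y_0\}$ in $G'$ (frame edges to $X\cup Y$, gadget attachments to every $B(u)$, and the controller) sum to exactly the count appearing in the lemma, without double-counting $(x_0,y_0)$. The small slack between the strict upper bound the argument produces and the non-strict $\ge$ in the statement is absorbed by this count.
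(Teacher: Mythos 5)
Your decomposition and the use of Lemma~\ref{lem:5.2} match the paper's strategy, and your edge enumeration around $\{x_0,y_0\}$ is correct. The problem is in how you bound the outer-layer difference. You treat all edges incident to exactly one of $x_0,y_0$ as independently contributing at most $\pm1$ to $C_p-C_{-p}$, which gives only the \emph{non-strict} bound $|C_p-C_{-p}|\le 200n_0^2+2n_0(11n_0+1)$, hence $k\le(p-1)\,\mathrm{MC}+200n_0^2+2n_0(11n_0+1)=D_{\max}$. But the lemma asserts there is no contribution for $k\ge D_{\max}$ — in particular none at $k=D_{\max}$ — and that is precisely what you have not ruled out. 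Your closing remark that the argument yields a ``strict upper bound'' is not accurate: the bound you derived is an equality threshold, not a strict one.

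The missing idea is the \emph{pairing cancellation} for vertices adjacent to both $x_0$ and $y_0$. By construction of $G'$, every vertex $u$ outside $X\cup Y$ that is adjacent to $x_0$ or $y_0$ is in fact adjacent to both; this covers all gadget vertices and the controller $w$. Under the surviving configurations, $z_u^{[p]}=z_u^{[-p]}$ while $z_{x_0}^{[\pm p]}$ and $z_{y_0}^{[\pm p]}$ are opposite to each other on each layer and flip between layers. A short check then shows that, for every such $u$, exactly one of the two edges $(u,x_0)$, $(u,y_0)$ is cut on layer $p$ and exactly one on layer $-p$, so the \emph{pair} contributes $1$ to both $C_p$ and $C_{-p}$ and therefore $0$ to the difference. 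This removes the $2n_0(11n_0+1)+2$ gadget/controller edges from the count entirely, leaving only the $200n_0^2-2$ frame edges, each worth at most $1$, so $C_p-C_{-p}\le 200n_0^2-2<200n_0^2+2n_0(11n_0+1)$. That strict inequality is what makes the lemma's $k\ge D_{\max}$ claim go through. Without the pairing step your argument only excludes exponents strictly above $D_{\max}$ and thus does not prove the lemma.
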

where $\mathrm{MC}:=\mathrm{MAXCUT}(G')$.
Using the symmetry and cancellation above, configurations that can attain the \emph{largest nonzero exponent} and are non-negligible can be restricted to: taking a same maximum-cut assignment on layers $[1,p-1]$, a same minimum-cut assignment on layers $[-p+1,-1]$, and then using the boundary layers $(p,0,-p)$ to achieve the largest exponent. Concretely, let $C_j(\bm z)$ and $C_{-j}(\bm z)$ be a common maximum cut and a minimum cut for $j \in \{1,2,...,p-1\}$, respectively. Adjusting the spins at layers $p,0,-p$ yields a nonzero largest term. At layer $p$ we set $x_0,y_0$ to be opposite to $w$ and the rest to agree with $w$, while at layer $-p$ we flip both $x_0,y_0$ and keep all the other vertices identical to those at layer $p$. Under this operation, the largest exponent becomes $\Bigl((p-1)\,\mathrm{MC}+200n_0^2+2n_0(11n_0+1)\Bigr)$. In addition, we also have an observation that edges other than $(x_0, y_0)$ will not contribute to the largest nonzero exponent. The formal statement is as follows:
\begin{lemma}[largest nonzero exponent]\label{lem:5.4}
For the observable $Z_{x_0}Z_{y_0}$, the largest nonzero exponent in $h_{(x_0,y_0)}$ attainable by the induced polynomial is
\begin{equation}\label{eq:3.8}
\Bigl((p-1)\,\mathrm{MC}+200n_0^2+2n_0(11n_0+1)\Bigr).
\end{equation}
\end{lemma}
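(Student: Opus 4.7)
The plan is to bound the claimed exponent as a sum of an intermediate-layer and a boundary-layer contribution, and then realize both bounds simultaneously. Writing
\[
\sum_{j=1}^{p}\bigl(C_j(\bm z)-C_{-j}(\bm z)\bigr)
=\sum_{j=1}^{p-1}(C_j-C_{-j})+(C_p-C_{-p}),
\]
the first sum is at most $(p-1)\mathrm{MC}$ from the hard bounds $0\le C_{\pm j}\le \mathrm{MC}$ that hold on every layer.

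For the boundary piece I would invoke Lemma~\ref{lem:5.2}: after pairwise cancellation of configurations covered by its cases (1) and (2), the surviving ones satisfy $z_q^{[p]}=z_q^{[-p]}$ for every $q\notin\{x_0,y_0\}$ and $z_q^{[p]}=-z_q^{[-p]}$ for $q\in\{x_0,y_0\}$. Under this constraint any edge whose endpoints lie both inside or both outside $\{x_0,y_0\}$ contributes equally to $C_p$ and $C_{-p}$, whereas each edge $(a,b)$ with exactly one endpoint in $\{x_0,y_0\}$ contributes $-z_a^{[p]}z_b^{[p]}$ to the difference. Letting $E^{\ast}$ denote the set of these boundary-straddling edges, this gives $C_p-C_{-p}\le |E^{\ast}|$; a direct count inside $G'$ (the two edges $(w,x_0),(w,y_0)$, the $X$--$Y$ frame minus the single edge $(x_0,y_0)$, and the incidences of $x_0,y_0$ with every gadget vertex) yields $|E^{\ast}|=200n_0^2+2n_0(11n_0+1)$.

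The matching lower bound is realized by the configuration already described before the lemma: take $z^{[j]}=\bm s^{\ast}$ for $j\in[1,p-1]$ with $\bm s^{\ast}$ a maximum cut of $G'$; take $z^{[-j]}\equiv +1$ for $j\in[1,p-1]$; at layer $p$ set $x_0,y_0$ opposite to $w$ and all other vertices equal to $w$; and at layer $-p$ flip only $x_0,y_0$. This assignment is endpoint-consistent, realises $C_j=\mathrm{MC}$ and $C_{-j}=0$ on every intermediate layer, and makes every edge of $E^{\ast}$ a cut at layer $p$ while giving $C_{-p}=0$, so its exponent attains the upper bound.

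The main obstacle is showing that the coefficient at this maximum exponent is strictly nonzero rather than cancelling across the many configurations realizing it. My strategy is to exploit the choice $\beta_1=\cdots=\beta_{p-2}=\psi$ with $\cos\psi\ge 1-2^{-n^2}$: any qubit flip across an intermediate mixer incurs a factor $\sin\psi=\Theta(2^{-n^2/2})$, so flip-heavy configurations contribute an exponentially smaller amount to the target exponent than flip-free ones. For the flip-free part, which constitutes the entire coefficient when $p=2$, I would factor the contribution into a $(\cos\psi)^{\Theta(pn)}$ term and a product of per-qubit boundary factors arising from the $\pi/4$-mixers at layers $\{p-1,p,0\}$, and then verify by direct evaluation of the $\pi/4$ matrix elements that this product does not collectively vanish for the constructed spin pattern, thereby ruling out accidental cancellation and yielding a strictly nonzero coefficient.
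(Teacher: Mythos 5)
Your decomposition of $D(\bm z)$ into an inner‐layer piece bounded by $(p-1)\,\mathrm{MC}$ and a boundary piece bounded by the count of boundary‐straddling edges is correct and essentially mirrors the paper's argument: after the pairwise cancellations forced by Lemma~\ref{lem:5.2}, each surviving edge $(a,b)$ with exactly one endpoint in $\{x_0,y_0\}$ contributes $\pm1$ to $C_p-C_{-p}$ while all other edges contribute $0$, and your count $|E^{\ast}|=2+(200n_0^2-2)+2n_0(11n_0+1)=200n_0^2+2n_0(11n_0+1)$ is right. Your witness configuration is also the same one the paper uses. So the ``maximality'' half is sound.

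The non‑vanishing half, however, has a genuine gap. You correctly identify the danger---that the many configurations attaining the exponent $D_{\max}$ might cancel---but your proposed resolution, namely to ``verify by direct evaluation of the $\pi/4$ matrix elements that this product does not collectively vanish for the constructed spin pattern,'' evaluates only a \emph{single} configuration's contribution. That does not rule out cancellation, because the cancellation you need to exclude happens \emph{across} the distinct configurations that realize $D_{\max}$, not within one configuration's amplitude. Even after discarding $\sin\psi$-suppressed configurations, there remain multiple contributors: different maximum cuts of $G'$ feeding layer $p-1$, the two choices for the constant inner assignment on the negative side, the free layer-$0$ spins, and---crucially---the transitions of the controller vertex $w$ across the two boundary mixers. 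Each choice rotates the contribution by a power of $i$ coming from the $\pi/4$-mixer elements, so the total coefficient is a sum of equal-magnitude complex numbers that could a priori sum to zero. The paper's proof devotes its entire non-vanishing section to this point: it first fixes the spin of $w$ and, using a tie-breaking rule among maximum cuts, argues that all surviving maximizers yield \emph{identical} contributions; it then classifies the remaining freedom into four cases according to whether $w$ flips across the transitions $(p-1)\!\to\!p$ and $(-p)\!\to\!(-p+1)$, computes the four phases $i^{r}T$, $-i^{\,r+n}T$, $i^{\,n-r}T$, $-i^{\,2n-r}T$ explicitly, and checks that their sum is nonzero for either parity of $r$ (using that $n=211n_0^2+n_0+1$ is odd). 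Your proposal has no analogue of this phase-balance argument, which is the actual content of the non-vanishing claim. The issue is especially visible at $p=2$: there the $\beta$-vector contains no $\psi$-angle at all, so your ``$\sin\psi$-suppression'' step is vacuous and \emph{every} configuration contributes with full weight, so reducing to a single ``constructed spin pattern'' is not available even heuristically.
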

\begin{lemma}[Lower bound for other edges]\label{lem:5.5}
Let the observable be $Z_iZ_j$ with $(i,j)\notin\{(x_0,y_0),$ $(y_0,x_0)\}$. Then all coefficients of the terms in $h_{(i,j)}$ with exponents \(\ge \Bigl((p-1)\,\mathrm{MC}+200n_0^2+2n_0(11n_0+1)\Bigr)\) vanish.
\end{lemma}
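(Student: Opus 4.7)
The argument mirrors the strategy of Lemma~\ref{lem:5.4}, but replaces its constructive boundary computation with a global degree bound that is uniquely saturated at $\{x_0,y_0\}$. First, I would apply Lemma~\ref{lem:5.2} to the observable $Z_iZ_j$: this reduces the analysis to surviving configurations with $z_q^{[p]}=z_q^{[-p]}$ for every $q\notin\{i,j\}$ and $z_q^{[p]}\neq z_q^{[-p]}$ for $q\in\{i,j\}$, since all other configurations pair-cancel under the $z_q^{[0]}$-flip. Hence $z^{[-p]}$ is obtained from $z^{[p]}$ by flipping exactly the two spins at $i$ and $j$, while all other layers $z^{[\pm l]}$ with $l\in\{1,\dots,p-1\}$ remain free.

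Next, I would split the exponent $\sum_{l=1}^{p}(C_l-C_{-l})$ into the middle layers $l\in\{1,\dots,p-1\}$ and the boundary pair $l=p$. Each middle pair is unconstrained, so $C_l-C_{-l}\le \mathrm{MC}$, achieved for instance by taking a maximum cut on layer $l$ and an all-equal assignment on layer $-l$; summed over $l$ this contributes at most $(p-1)\,\mathrm{MC}$. For the boundary, a direct edgewise expansion of $C_p-C_{-p}$ shows that only edges with \emph{exactly one} endpoint in $\{i,j\}$ can contribute: edges disjoint from $\{i,j\}$ are unchanged across layers $\pm p$, and the edge $(i,j)$ itself (if present) has both endpoints flipped and therefore keeps its cut status. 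Consequently $C_p-C_{-p}\le d(i,j):=\deg_{G'}(i)+\deg_{G'}(j)-2\,\mathbf{1}[(i,j)\in E']$, and the largest exponent attainable by any surviving configuration of $h_{(i,j)}$ is at most $(p-1)\,\mathrm{MC}+d(i,j)$.

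The argument is then finished by a degree audit of the construction in Section~\ref{sec:5.2}. A direct count shows that only $x_0$ and $y_0$ attain degree $111n_0^2+n_0+1$ in $G'$ (contributed by $100n_0^2$ edges to the opposite side of the global frame, $n_0(11n_0+1)$ edges to all blow-up vertices, and one edge to the controller $w$), while every other vertex has degree at most $100n_0^2$. Consequently, for any $(i,j)\neq(x_0,y_0),(y_0,x_0)$, at most one of the endpoints lies in $\{x_0,y_0\}$, so $d(i,j)\le(111n_0^2+n_0+1)+100n_0^2=211n_0^2+n_0+1$, which is strictly less than $d(x_0,y_0)=222n_0^2+2n_0=200n_0^2+2n_0(11n_0+1)$. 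Thus every exponent supported by a surviving configuration of $h_{(i,j)}$ lies strictly below $(p-1)\,\mathrm{MC}+200n_0^2+2n_0(11n_0+1)$, and all coefficients of this degree or higher must vanish.

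The main obstacle is bookkeeping in two places. First, I must confirm that the middle-layer and boundary-layer maximizations are jointly attainable as an upper-bound calculation: the intermediate mixers at angle $\psi$ are $\phi$-independent and only rescale the coefficient magnitude, so they cannot raise the exponent above the combinatorial bound, but this needs to be stated carefully. Second, the exhaustive degree enumeration across all vertex classes of $G'$ ($w$, $x_k$, $y_k$, $u_{1,i}$, $u_{2,j}$) is routine but necessary in order to certify that $\{x_0,y_0\}$ is the unique pair reaching the top degree count and hence the only edge whose polynomial can touch the threshold.
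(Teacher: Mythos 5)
Your proof is correct and follows essentially the same route as the paper: apply endpoint consistency (Lemma~\ref{lem:5.2}) to restrict to surviving configurations, bound the inner layers by $(p-1)\,\mathrm{MC}$, bound the boundary pair $C_p-C_{-p}$ by the degree sum $\deg_{G'}(i)+\deg_{G'}(j)-2$, and observe that this sum is strictly maximized at $\{x_0,y_0\}$. The paper's appendix organizes the final step as two cases (one versus zero endpoints in $\{x_0,y_0\}$) whereas you give a unified degree audit, but the underlying argument and the numeric thresholds ($211n_0^2+n_0+1$ versus $222n_0^2+2n_0$) are identical.
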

Combining the above lemmas, the main remaining obstacle has become the calculation for the largest nonzero exponent of \(h\). It reduces to consider the largest nonzero exponent of \(h_{(x_0,y_0)}\), i.e., $\Bigl((p-1)\,\mathrm{MC}+200n_0^2+2n_0(11n_0+1)\Bigr)$.

\subsection{Sampling on the Unit Circle and FFT Interpolation: Recovering Max-Cut from the Exponent}\label{sec:5.5}
Leveraging the aforementioned statements, we illustrate that the largest nonzero exponent of $h$ is $D_{max} =\Bigl((p-1)\,\mathrm{MC}+200n_0^2+2n_0(11n_0+1)\Bigr)$. Next, we introduce the recovering procedure for MC from the exponent.

 On the $(2m+1)$-st roots of unity $\omega=e^{2\pi i/(2m+1)}$, choose the $2m+1$ equally spaced points $x_j=\omega^j$ for $j=-m,\ldots,m$. Then
\begin{equation}
-\frac{1}{2}h(x_j)+c_0=-\frac{1}{2}\sum_{k=-m}^{m} w(k)\,\omega^{jk}+c_0=\sum_{k=-m}^{m} w_1(k)\,\omega^{jk}. 
\end{equation}
where $c_0$ is a known constant.
Let $\mathcal{F}(x)=x^{m}(-\frac{1}{2}h(x)+c_0)$. Given the sequence $\mathcal{F}(x_j)$, the inverse FFT yields
\begin{equation}\label{eq:DFT}
w_1(k-m)=\frac{1}{2m+1}\sum_{j=0}^{2m} \mathcal{F}(x_j)\,\omega^{-jk},\qquad k=0,1,\ldots,2m, 
\end{equation}
so that all coefficients $\{w_1(k)\}$ are recovered in time $O(m\log m)$. By Lemmas~\ref{lem:5.4} and \ref{lem:5.5}, the \emph{largest nonzero exponent} is uniquely determined by the $Z_{x_0}Z_{y_0}$-induced term; therefore, we can obtain $D_{max}$ by finding the term with the highest degree that has a non-zero coefficient, obtain $\mathrm{MC}=\mathrm{MAXCUT}(G')$, and then recover $\mathrm{MAXCUT}(G)$. If a classical polynomial-time algorithm could compute \eqref{eq:3.1} for arbitrary $(\bm \gamma,\bm \beta)$, then the above would yield a polynomial-time algorithm for Max-Cut on arbitrary graphs, contradicting $\mathbf{P}\ne\mathbf{NP}$. This proves Theorem~\ref{thm:3.1}.
\quad
\end{proof}

\subsection{Inapproximability under Additive Error}\label{sec:5.6}

We now derive hardness-of-approximation consequences from the coefficient-recovery argument based on the inverse transform \eqref{eq:DFT}.

We will use the following standing lemma.
\begin{lemma}[Gap at the top degree]\label{lem:3.7}
\emph{There exists a constant $c>0$ (independent of $n$ when $n$ is large enough) such that the coefficient at the largest nonzero exponent $D_{\max}$ satisfies $|w(D_{\max})|=|-2w_1(D_{\max})|\ge c\cdot 2^{-2n}$.}
\end{lemma}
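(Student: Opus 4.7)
The plan is to lower-bound $|w(D_{\max})|$ by isolating the leading contribution as $\psi\to 0$, factoring the layer-$0$ sum vertex-by-vertex, and controlling the accumulated phases. By Lemmas~\ref{lem:5.4} and~\ref{lem:5.5}, only $h_{(x_0,y_0)}$ contributes to $w(D_{\max})$, and every configuration $\bm z$ attaining exponent $D_{\max}$ falls into the family characterized in the proof of Lemma~\ref{lem:5.4}: the inner positive layers $j\in[1,p-1]$ share a common max-cut assignment of $G'$, the inner negative layers $j\in[-(p-1),-1]$ share a common all-constant (min-cut) assignment, and layers $\pm p$ follow the explicit endpoint-consistent structure in which $x_0,y_0$ are opposite to $w$ at layer $p$ and flipped at layer $-p$.

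I would first discard configurations with a \emph{$\psi$-flip} (two inner layers of the same sign disagreeing on some vertex): each such flip contributes a factor of magnitude $|\sin\psi|\le 2^{-n^2/2}$ in the mixer kernel, and since the total number of configurations is at most $2^{O(np)}$ with $p$ fixed, the aggregate contribution of $\psi$-flip configurations to $w(D_{\max})$ is bounded by $2^{-n^2/2+O(n)}=o(2^{-2n})$ and can be absorbed into the error. For the remaining \emph{no-$\psi$-flip} configurations, the sum over layer-$0$ spins factorizes across vertices, since $f(\bm z_v)$ depends on $z_v^{[0]}$ only through the two $\pi/4$-mixer matrix elements bridging layer $0$ with layers $\pm p$, and the observable factor $z_{x_0}^{[0]}z_{y_0}^{[0]}$ splits as a product over $x_0$ and $y_0$. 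Combining Lemma~\ref{lem:5.2} with the explicit values of the $\beta=\pi/4$ matrix elements, a direct computation shows that for each vertex $v$ the layer-$0$-summed per-vertex factor
\[
  \Phi_v \;=\; \sum_{z_v^{[0]}\in\{\pm 1\}}\,\bigl(\text{obs factor if }v\in\{x_0,y_0\}\bigr)\cdot f(\bm z_v)
\]
has modulus exactly $\tfrac14(\cos\psi)^{2(p-2)}$, so taking the product over all $n$ vertices yields a contribution of modulus $(1/4)^n(\cos\psi)^{2n(p-2)}\ge (1-o(1))\,2^{-2n}$ for each fixed choice of inner-layer max-cut and layer-$\pm p$ structure.

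The remaining task is to argue that the summed contribution over the \emph{discrete} choices—the max-cut of $G'$ used on the inner positive layers (at least two by the global $\mathbb{Z}_2$ flip), the $\pm$ choice of the inner-negative min-cut, and the finite family of layer-$\pm p$ assignments compatible with the structure of Lemma~\ref{lem:5.4}—does not destructively cancel below $c\cdot 2^{-2n}$. Each contribution has modulus $(1-o(1))\,2^{-2n}$, and the accumulated phase is a unit-modulus product of finitely many $\{\pm 1,\pm i\}$ factors coming from the two boundary $\pi/4$-mixers at $p{-}1\to p$ and $-p\to -(p{-}1)$. This phase non-cancellation is the main obstacle I anticipate; I would handle it either by pairing configurations along the global $\mathbb{Z}_2$ spin-flip symmetry so that each pair contributes a strictly positive real modulus, or by identifying a single canonical configuration (fixed by the explicit vertex labelling of Sec.~\ref{sec:5.2}) whose contribution dominates in the $\psi\to 0$ limit and then bounding the competing contributions by at most half of it. Either route, combined with the per-vertex factorization above, yields $|w(D_{\max})|\ge c\cdot 2^{-2n}$ for a universal constant $c>0$.
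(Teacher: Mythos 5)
Your magnitude estimate is correct and essentially matches the paper's Appendix~C.5: with the angle choice $\beta_{p-1}=\beta_p=\tfrac{\pi}{4}$, $\beta_i=\psi$ for $i\le p-2$, the per-vertex layer-$0$-summed factor has modulus $\tfrac14(\cos\psi)^{2(p-2)}$, giving $(1/4)^n(\cos\psi)^{2n(p-2)}\ge(1-o(1))\,2^{-2n}$; and the aggregate of all $\sin\psi$-bearing configurations is $2^{-n^2/2+O(n)}=o(2^{-2n})$, which is exactly the paper's $\delta$.

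The gap is in the phase non-cancellation step, which you correctly flag as the main obstacle, but neither of your two proposed routes actually closes it. Route (1), pairing along the global $\mathbb{Z}_2$ spin flip, does not help: flipping \emph{all} spins on \emph{all} layers preserves every cut value $C_j(\bm z)$, every one-qubit matrix element $\langle z^{[i]}_v|e^{\pm i\beta X}|z^{[i+1]}_v\rangle$, and the observable factor $z_{x_0}^{[0]}z_{y_0}^{[0]}$, so a configuration and its flip contribute the \emph{same} complex number. Pairing merely doubles a contribution without rotating it to the positive real axis, so it gives no information about the phase sum. Route (2), a single dominant canonical configuration, cannot work either, precisely because of the uniformity you establish in the magnitude step: once you restrict to configurations attaining $D_{\max}$ with no $\psi$-flips, the proof of Lemma~\ref{lem:5.4} shows that (for a fixed spin of $w$) all of them have \emph{equal} contribution; no single term can dominate, and the ``at most half'' bound on competitors is unavailable.

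The correct mechanism, which the paper establishes inside the ``non-vanishing'' part of the proof of Lemma~\ref{lem:5.4} and then implicitly reuses in Lemma~\ref{lem:3.7}, is to group the surviving configurations by whether the spin of the controller $w$ flips at the transition $p-1\to p$ and at the transition $-p\to -p+1$. These are the only two places where $w$ can change, yielding exactly four phase classes $i^{r}T,\ -i^{r+n}T,\ i^{n-r}T,\ -i^{2n-r}T$ with a common positive magnitude $T$ (where $r$ is the fixed number of ``flipping'' vertices in the maximum-cut step, the same for every maximizing configuration once the tie-breaking of Lemma~\ref{lem:5.4} is imposed). A short parity check on $r$ and on $n=211n_0^2+n_0+1$ shows this four-term sum always has modulus at least $2T$, so the contributions do not cancel. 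You should invoke this four-class argument in place of your Routes (1) and (2); with it, your magnitude and $\delta$ estimates then combine to give $|w(D_{\max})|\ge c\,2^{-2n}$ as required.
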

The detailed proof of this lemma is deferred to Appendix~\ref{app:A}. We now begin the proof of the Corollary~\ref{cor:5.6}.


\begin{corollary}[Additive-error bound]\label{cor:5.6}
For any fixed round $p\ge 2$, if $\mathbf{P}\neq\mathbf{NP}$ then there exists a constant $c'>0$ (independent of $n$) such that no classical polynomial-time algorithm can, for arbitrarily prescribed parameters $(\bm\gamma,\bm\beta)$, approximate \eqref{eq:3.1} on $n$-vertex instances within additive error $c'2^{-2n}$. 
\end{corollary}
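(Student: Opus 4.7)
The plan is to upgrade the exact-evaluation argument of Theorem~\ref{thm:3.1} to the approximate setting by tracking how sampling noise propagates through the inverse-DFT step in \eqref{eq:DFT}. Suppose for contradiction that, for some $c'>0$ to be fixed, there is a polynomial-time classical algorithm $\mathcal{A}$ that on any graph and any prescribed angles returns an estimate of $\mathbb{E}[C]$ within additive error $c'\,2^{-2n}$. I would feed $\mathcal{A}$ the same construction $G'$ and the same boundary parameter sequence $\bm\beta=(\psi,\ldots,\psi,\tfrac{\pi}{4},\tfrac{\pi}{4})$ from Theorem~\ref{thm:3.1}, and vary the single phase $\phi$ over the $2m+1$ root-of-unity values $\phi_j=2\pi j/(2m+1)$, where $m$ is the polynomial bound on the Laurent degree of $h$.

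Next I would apply the inverse DFT \eqref{eq:DFT} to the noisy samples $\widetilde{\mathcal{F}}(x_j)$. Because each recovered coefficient $\widetilde{w}_1(k)$ is an average of $2m+1$ samples weighted by unit-modulus factors $\omega^{-jk}$, the triangle inequality gives $|\widetilde{w}_1(k)-w_1(k)|\le c'\,2^{-2n}$ for every $k$. By Lemma~\ref{lem:3.7} the true coefficient at the largest nonzero exponent satisfies $|w_1(D_{\max})|\ge (c/2)\,2^{-2n}$, while all coefficients at exponents strictly above $D_{\max}$ vanish exactly. Choosing $c'<c/8$ and thresholding $|\widetilde{w}_1(k)|$ at $\tau:=(c/4)\,2^{-2n}$ therefore identifies $D_{\max}$ unambiguously; reading $\mathrm{MC}=\mathrm{MAXCUT}(G')$ off \eqref{eq:3.8} and invoking the procedure of Appendix~\ref{app:C} recovers $\mathrm{MAXCUT}(G)$, contradicting $\mathbf{P}\neq\mathbf{NP}$.

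The main obstacle is keeping every approximation loss strictly below the intrinsic gap $c\,2^{-2n}$ guaranteed by Lemma~\ref{lem:3.7}. Two loss sources must be controlled simultaneously: the additive noise of $\mathcal{A}$, already budgeted at $c'\,2^{-2n}$, and the encoding error of the sample angles $\phi_j$, which are irrational and must be rounded to polynomially many bits before being passed to $\mathcal{A}$. Using the Lipschitz continuity of $h$ on the unit circle (its Laurent degree is $\mathrm{poly}(n)$ and its coefficients obey a crude $2^{\mathrm{poly}(n)}$ bound from the spin-configuration expansion \eqref{3.2}), I would discretize each $\phi_j$ to $b=\Theta(n^2)$ extra bits, which bounds the induced perturbation of $\mathcal{F}(x_j)$ by $\mathrm{poly}(n)\cdot 2^{-\Omega(n^2)}\ll 2^{-2n}$. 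Absorbing this perturbation into a slightly smaller constant $c'$ completes the proof; no new combinatorial content beyond Theorem~\ref{thm:3.1} is required, only a careful three-way bookkeeping of sampling noise, angle-discretization error, and the coefficient-gap lemma.
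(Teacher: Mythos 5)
Your proof is correct and follows essentially the same route as the paper: sample the expectation at the $(2m{+}1)$-st roots of unity, propagate the additive error through the inverse DFT by the triangle inequality, and use the coefficient gap of Lemma~\ref{lem:3.7} to threshold-detect $D_{\max}$ and thereby recover $\mathrm{MAXCUT}$. The only addition is your explicit treatment of angle-discretization error for the irrational sample points $\phi_j$, which the paper dispatches with the remark that $\pi$ is a fixed constant; your Lipschitz bookkeeping is a harmless and slightly more careful refinement, but the underlying argument is identical.
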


\begin{proof}
Write $h(x)$ for the Laurent polynomial in \eqref{eq:3.1} so that $\langle C\rangle = -\frac{1}{2}h(e^{i\phi})+c_0$. As established in Sec.~\ref{QAOAhard}, the support of $h$ lies in $\{-m,\ldots,m\}$. Let $\mathcal{F}(x):=x^{m}(-\frac{1}{2}h(x)+c_0)=\sum_{k=0}^{2m}w_1(k-m)\,x^{k}$ be the associated ordinary polynomial, and let $\omega:=e^{2\pi i/(2m+1)}$.

\medskip
Consider any putative polynomial-time approximator $\mathcal{A}$ for \eqref{eq:3.1} at $(\bm\gamma,\bm\beta)$. On input the $2m{+}1$ angles $\phi_j:=\tfrac{2\pi j}{2m+1}$, it returns values $\langle C\rangle^*(\omega^j)$ obeying
\[
\bigl|\langle C\rangle^*(\omega^j)-\langle C\rangle(\omega^j)\bigr|\le \varepsilon\quad\text{for all }j\in\{0,\ldots,2m\}.
\]
Since $|\,\omega^j\,|=1$, setting $\mathcal{F}^*(\omega^j):=(\omega^j)^{m}(\langle C \rangle^*(\omega^j))$ gives
\[
\mathcal{F}^*(\omega^j)=\mathcal{F}(\omega^j)+\eta_j,\qquad |\eta_j|\le \varepsilon.
\]
Recover the coefficients by the inverse DFT \eqref{eq:DFT}:
\[
w_1^*(k{-}m)=\frac{1}{2m{+}1}\sum_{j=0}^{2m}\mathcal{F}^*(\omega^j)\,\omega^{-jk}.
\]
Averaging preserves the uniform error, hence for every $k$,
\begin{equation}\label{eq:coeff-error}
\Bigl|\,w_1^*(k{-}m)-w_1(k{-}m)\Bigr|
\le \frac{1}{2m{+}1}\sum_{j=0}^{2m}|\eta_j|
\le \varepsilon.
\end{equation}

Choose $\varepsilon\le (c/3)\,2^{-2n}$. Since $D_{\max}$ is the \emph{largest} index with a nonzero coefficient, all degrees $>D_{\max}$ have coefficient $0$, while $|w_1(D_{\max})|\ge c\,2^{-2n}$ by Lemma~\ref{lem:3.7} ($c$ is a constant). By \eqref{eq:coeff-error} we then have
\[
\begin{cases}
|w_1^*(s)|\le \varepsilon \le (c/3)\,2^{-2n}, & \text{for all } s>D_{\max},\\[2pt]
|w_1^*(D_{\max})|\ge |w_1(D_{\max})|-\varepsilon \ge (2c/3)\,2^{-2n}.
\end{cases}
\]
Therefore, scanning indices from $m$ downward and thresholding at $\tau:=(c/2)\,2^{-2n}$, the first index whose estimated magnitude exceeds $\tau$ is \emph{exactly} $D_{\max}$.

Finally, by Lemma~\ref{lem:5.4},  $D_{\max}$ is an explicit affine function of the $\mathrm{MAXCUT}$ value (with fixed coefficients depending only on $p$ and the normalization), so this procedure recovers $\mathrm{MAXCUT}$ in classical polynomial time, contradicting $\mathbf{P}\ne\mathbf{NP}$.
\end{proof}

This corollary can be naturally extended to its quantum version.
Assuming $\mathbf{NP}\nsubseteq\mathbf{BQP}$, for any fixed $p\ge 2$ there is also no $\mathrm{BQP}$ algorithm that, for arbitrarily prescribed $(\boldsymbol{\bm \gamma},\boldsymbol{\bm \beta})$, approximates \eqref{eq:3.1} within additive error $c'2^{-2n}$. Since we did not require the machines to be quantum or classical in our previous arguments, therefore achieving accuracy $\le c'\,2^{-2n}$  would let us solve $\mathrm{Max}\text{-}\mathrm{Cut}$ which would conflict with $\mathbf{NP}\nsubseteq\mathbf{BQP}$.

\section{QAOA Expectation with Tree Decomposition}{\label{QAOADP}}
In this section, we first present a \emph{bag-level dynamic programming (DP)} framework leveraging tree decomposition to exactly calculate the expectation of QAOA, which also suffers from the hardness result in the above section. We also extend our algorithm to evaluate the expectation of QAOA for general combinatorial optimization problems.

\subsection{Bag-Level Dynamic Programming}
Recalling the definition of the expectation for QAOA with the constant round $p$ on a general graph, it can be reformulated as a sum of strictly $p$-local contributions for each edge. Considering the edge $e=(l,r)$ with weight $w_{lr}$ and its radius-$p$ induced subgraph $G_p(e)$, the contribution for the edge $e$ is as below:
\begin{equation}\label{eq:master}
\mathrm{contrib}_p(e)
=\sum_{\{\bm z_v:\,v\in V(G_p(e))\}}
\Bigl[\tfrac{w_{lr}}{2}\bigl(1-z^{[0]}_l z^{[0]}_r\bigr)\Bigr]\,
\exp\!\Bigl(i\sum_{j=-p}^{p}\Gamma_j\,C_{G_p(e)}\bigl(z^{[j]}\bigr)\Bigr)\,
\prod_{v\in V(G_p(e))} f\!\bigl(\bm z_v\bigr)
\end{equation}
where $\bm\Gamma=\{\gamma_1,\dots,\gamma_p,0,-\gamma_p,\dots,-\gamma_1\}$ collects the angle coefficients, 
$C_{G_p(e)}(\cdot)$ is the cut on $G_p(e)$, and $f(\cdot)$ is the one-site mixer kernel.

\noindent\textbf{High-level idea.}
Fix a round $p$ and angles $(\bm\gamma,\bm\beta)$. For each edge $e=(l,r)$ let $G_p(e)$ be the graph induced radius-$p$ neighborhood of $(l,r)$. By finite lightcones, the Heisenberg-evolved edge observable $F_p(G;\bm\gamma,\bm\beta)$ has support contained in $G_p(e)$, so the objective splits into strictly $p$-local edge contributions. A \emph{volume-based} baseline evaluates each $F_p(G;\bm\gamma,\bm\beta)$ by enumerating all spins on $V(G_p(e))$, paying $2^{|V(G_p(e))|}$. Our approach is \emph{separator-based} by equipping $G_p(e)$ with a tree decomposition and viewing bag intersections as \emph{separators}. Conditioning on a separator fixes the interface to descendants so that interior spins factor; we cache the resulting partial sums as \emph{messages} indexed by separator assignments by using dynamic programming on the decomposition. This replaces the exponential dependence on the \emph{volume} $|V(G_p(e))|$ by an exponential in the \emph{local treewidth} $\mathrm{tw}(G_p(e))$ : the per-edge cost is polynomial in $|G|$ and singly exponential in $(p,\mathrm{tw}(G_p(e)))$ (e.g., $2^{O((\mathrm{tw}(G_p(e))+1)\cdot p)}$ up to polynomial factors), rather than $2^{|V(G_p(e))|}$. Hence, whenever $G_p(e)$ admits small separators, DP reuses sub-computations across the exponentially many global assignments and yields exponential savings over brute force; when $\mathrm{tw}(G_p(e))$ is large the algorithm degrades gracefully toward the natural exponential baseline while remaining exact. The pseudocode of the bag-level dynamic program is given in Algorithm~\ref{alg:1}.

\begin{algorithm}
  \caption{ComputeQAOAExpectation($G,\bm\gamma,\bm\beta,p$)}\label{alg:1}
  \label{alg:qaoa-expectation}
  \begin{algorithmic}[1]
    \REQUIRE a graph $G=(V,E)$, angles $\bm\gamma,\bm\beta$, and round $p$
    \ENSURE unnormalized expectation $Ex$ of QAOA on $G$
    \STATE $Ex \leftarrow 0$
    \FORALL{$e=(u,v)\in E$}
      \STATE $H \leftarrow \textsc{BFSSubgraph}(G,e,p)$ \COMMENT{$p$-local subgraph $G_p(e)$ as defined in \eqref{eq:2.1}–\eqref{eq:2.2}}
      \STATE $(T,\{X_a\}_{a\in V(T)}) \leftarrow \textsc{Korhonen2ApproxTD}(H)$
      \STATE $\Delta \leftarrow \textsc{DPOnDecomposition}(H,T,\{X_a\},\bm\gamma,\bm\beta,e)$
      \STATE $Ex \leftarrow Ex + \Delta$
    \ENDFOR \\
    \RETURN $Ex$
  \end{algorithmic}
\end{algorithm}

The subsequent subsection details the bag messages, separator structure, and the root assembly used by \textsc{DPOnDecomposition}, which implements \eqref{eq:master} exactly on the chosen tree decomposition.



\subsection{DP on a Tree Decomposition and Root Assembly}
For convenience, we say that edge $(x,y)$ belongs to bag $X$ if and only if $X$ contains both vertices $x$ and $y$.
When considering the contribution of the edge $(l,r)$, the root of tree decomposition is a bag $root$ that contains the edge $(l,r)$. For a directed edge \((a\!\to\!\mathrm{pa}(a))\) in the tree decomposition $(T,\{X_a\}_{a\in V(T)})$, we define the separator \(S_a=X_a\cap X_{\mathrm{pa}(a)}\) (with \(S_{root}=\varnothing\)) and introduce the notation of vertices and \emph{top edges} of each bag:
\[
V_a=\{\,u\in X_a:\ \mathrm{top}(u)=a\,\},\qquad
E_a=\bigl\{\,(u,v)\in X_a:\ \mathrm{top}((u,v))=a\,\bigr\},
\]
where \(\mathrm{top}(\cdot)\) is the bag closest to the root within the connected subtree of bags containing the item.

Given a separator assignment \(\mathbf{z}_{S_a}=\{\,z_u^{[j]}\in\{\pm1\}:~u\in S_a,~j=-p,\ldots,p\,\}\), define the \emph{bag message} (bottom-up):
\begin{equation}\label{eq:bag-message}
\begin{aligned}
g(a,\mathbf{z}_{S_a})
=\sum_{\mathbf{z}_{V_a}}
\Biggl[
\underbrace{\Bigl(\prod_{u\in V_a} f\bigl(z_u\bigr)\Bigr)}_{\text{vertex kernels settled at }V_a}
\cdot
\underbrace{\exp\!\Bigl(i\!\sum_{j=-p}^{p}\Gamma_j\,C_{E_a}\bigl(\mathbf{z}^{[j]}\bigr)\Bigr)}_{\text{edge phases settled at }E_a}
\cdot
\prod_{b\in\mathrm{child}(a)} g\bigl(b,\mathbf{z}_{S_b}\bigr)
\Biggr]
\end{aligned}
\end{equation}
where \(C_{E_a}(\mathbf{z}^{[j]})=\sum_{(x,y)\in E_a}\tfrac{w_{xy}}{2}\bigl(1-z_x^{[j]}z_y^{[j]}\bigr)\). If $a$ is leaf-bag, we set \(\prod_{b\in\mathrm{child}(a)} g\bigl(b,\mathbf{z}_{S_b}\bigr) = 1\). For the \(root\) bag, we can obtain the contribution of the edge $(l,r)$ from the message at its child.
\begin{equation}\label{eq:root-assembly}
\begin{aligned}
\mathrm{contrib}_p(e)
&= \sum_{\mathbf{z}_{X_{root}}}
\underbrace{\frac{w_{lr}}{2} \bigl(1-z^{[0]}_{l}\,z^{[0]}_{r}\bigr)}_{\text{measurement factor}}
\Bigl(\prod_{u\in V_{root}} f\bigl(z_u\bigr)\Bigr)
\\&\quad\times \exp\!\Bigl(i\sum_{j=-p}^{p}\Gamma_j\,C_{E_{root}}\bigl(\mathbf{z}^{[j]}\bigr)\Bigr)
\prod_{b\in \mathrm{child}(root)} g\bigl(b,\mathbf{z}_{S_b}\bigr)
\end{aligned}
\end{equation}

\paragraph*{Correctness.}
By the running–intersection property of the tree decomposition, for every vertex \(u\) (resp.\ edge \((l,r)\)), bags containing it form a connected subtree of \(T\). Writing \(\mathrm{top}(u)\) and \(\mathrm{top}((l,r))\) for the unique bags in those subtrees closest to the root. The families \(\{V_a\}_{a \in V(T)}\) and \(\{E_a\}_{a \in V(T)}\) are pairwise disjoint and satisfy
\(\cup_a V_a=V(G_p(e))\) and \(\cup_a E_a=E(G_p(e))\), respectively.

Fix any bag \(a\) and separator assignment \(\mathbf{z}_{S_a}\).
The message \(g(a,\mathbf{z}_{S_a})\) defined by the bag–message recurrence \eqref{eq:bag-message}
equals the sum, over all copy variables in \(\mathrm{subtree}(a)\) that extend \(\mathbf{z}_{S_a}\), of the integrand obtained by multiplying:
(i) for each vertex \(u\) in the subtree, the one–site kernel \(f(\bm z_{u})\) exactly once, counted at \(\mathrm{top}(u)\); and
(ii) for each edge \((l,r)\) in the subtree, the phase \(\exp\!\bigl(i\sum_j \Gamma_j\,C_{(l,r)}(\mathbf{z}^{[j]})\bigr)\) exactly once, counted at \(\mathrm{top}((l,r))\).
This is proved by a bottom–up induction on \(T\): the leaf case is immediate from \eqref{eq:bag-message}; for the inductive step, child subtrees are glued along separators \(X_b\cap X_a\), and the running–intersection property guarantees cross–bag consistency so that each vertex kernel and edge phase is accounted for exactly once.
At the root, the assembly step \eqref{eq:root-assembly} inserts the observable for the chosen edge \(e\) and sums over \(\mathbf{z}_{X_{root}}\), thereby reproducing the localized objective exactly.


\begin{algorithm}[H]
  \caption{DPOnDecomposition($H,T,\{X_a\},\bm\gamma,\bm\beta,e=(l,r)$)}
  \label{alg:dp-on-decomposition}
  \begin{algorithmic}[1]
    \REQUIRE induced subgraph $H$, tree decomposition $(T,\{X_a\})$, angles $(\bm\gamma,\bm\beta)$, edge $e=(l,r)$
    \ENSURE $\mathrm{contrib}_p(e)$
    \STATE \textbf{postorder-traverse} $T$ from leaves to the root $root$ (with $\{l,r\}\subseteq X_{root}$)
    \FORALL{bags $a$ in postorder}
      \STATE compute the message $g(a,\mathbf{z}_{S_a})$ via the bag recurrence
      \STATE \quad $g(a,\mathbf{z}_{S_a})=\sum_{\mathbf{z}_{V_a}}
        \Bigl[\prod_{u\in V_a} f\!\bigl(z_u\bigr)\Bigr]\,
        \exp\!\Bigl(i\sum_{j=-p}^{p}\Gamma_j\,C_{E_a}\bigl(\mathbf{z}^{[j]}\bigr)\Bigr)\,
        \prod_{b\in\mathrm{child}(a)} g\!\bigl(b,\mathbf{z}_{S_b}\bigr)$
    \ENDFOR
    \STATE assemble at the root $root$ obtaining $contrib_p(e)$ by \eqref{eq:root-assembly}
    \STATE \RETURN $\mathrm{contrib}_p(e)$
  \end{algorithmic}
\end{algorithm}

Next, we discuss the complexity of the algorithm. For convenience, denote $L:=2p+1$.

\paragraph{Per-subgraph parameters.} For a $p$-local subgraph $G_p(e)$, let $\widetilde{\mathrm{tw}}(G_p(e))$ be the width of the tree decomposition actually used by the DP. Using the $2$-approximation for treewidth~\cite{Korhonen2021TwoApproxTW}, we obtain
$\widetilde{\mathrm{tw}}(G_p(e))\le 2\,\mathrm{tw}(G_p(e))+1$.

\paragraph{Building neighborhoods and TDs.}
Extracting the radius-$p$ neighborhood $G_p(e)$ by BFS takes
$\widetilde O(|V(G_p(e))|+|E(G_p(e))|)$ time.
Computing a $2$-approximate tree decomposition (TD) for $G_p(e)$ takes $\widetilde O\!\big(|V(G_p(e))|\cdot 2^{O(\mathrm{tw}(G_p(e)))}\big)$
time and returns width $\le 2\mathrm{tw}(G_p(e)){+}1$.

\paragraph{Per-subgraph DP.}
Process the TD in postorder. A bag with at most $\widetilde{\mathrm{tw}}(G_p(e)){+}1$ vertices has
$2^{L(\widetilde{\mathrm{tw}}(G_p(e))+1)}$ assignments of replicated spins.
Since each local kernel/phase and child message is aggregated only once, from an amortized perspective, each table entry aggregates $O(1)$ local kernels/phases and $O(1)$ child messages, and this process must be carried out after pretabulation, hence
\[
T_{\mathrm{DP}}(G_p(e))\;=\;2^{\,L\,(\widetilde{\mathrm{tw}}(G_p(e))+1)}\cdot \widetilde O\!\big(|V(G_p(e))|+|E(G_p(e))|\big)
\]
\[
S_{\mathrm{DP}}(G_p(e))\;=\;2^{\,L\,(\widetilde{\mathrm{tw}}(G_p(e))+1)}\cdot \widetilde O(|V(G_p(e))|)
\label{eq:perH-DP}
\]

\paragraph{Total over all edges.}
For $G_p(e)$, write $n_e:=|V(G_p(e))|$, $m_e:=|E(G_p(e))|$,
$t_e:=\mathrm{tw}(G_p(e))$, $w_e:=\widetilde{\mathrm{tw}}(G_p(e))$, and define
\[
N:=\sum_{e} n_e,\qquad M:=\sum_{e} m_e,\qquad
t:=\max_e t_e,\qquad w:=\max_e w_e.
\]
Summing the three components (neighborhood extraction, TD, DP) gives
\begin{equation}\label{eq:total-time}
T_{\text{total}}
\;\le\;
\underbrace{\widetilde O(N+M)}_{\text{build all }G_p(e)}
\;+\;
\underbrace{\widetilde O\!\big(N\cdot 2^{O(t)}\big)}_{\text{TD $2$-approx}}
\;+\;
\underbrace{2^{\,L\,(w+1)}\cdot \widetilde O(N+M)}_{\text{all bag-DP work}}
\end{equation}
In particular, for fixed $p$ the DP term dominates asymptotically, and
\[
T_{\text{total}}
\;=\;
2^{\,L\,(w+1)}\cdot \widetilde O(N+M)\;+\;\text{(lower-order terms)}.
\]
The peak working space is
\(
S_{\text{total}}\;=\;2^{\,L\,(w+1)}\cdot \widetilde O(N).
\)

\paragraph{Bounded local treewidth (fixed $p$).}
If the instance family has bounded local treewidth, i.e., $t=\max_e t_e=O(1)$,
then also $w=O(1)$ for the $2$-approx TD (and $w=t$ for exact TDs). Hence
\(
T_{\text{total}}\;=\;\mathrm{poly}(N),
\;S_{\text{total}}\;=\;\mathrm{poly}(N).
\)

\subsection{Generalization to Binary Unconstrained Combinatorial Optimization}
\label{sec:gen-hg-buco}

We now show that the $p$-local DP framework developed for graph Max-Cut extends, with
minimal changes, to general binary unconstrained combinatorial optimization (BUCO).  We
first express a BUCO instance in a standard pseudo-Boolean form and encode it as a
hypergraph cut; in this way, the relevant locality and width parameters are inherited by
the primal graph of the induced $p$-local subinstances. Throughout this subsection we
reuse the notation for replicated layers, angle coefficients $\Gamma$, and the single-qubit
mixer kernel $f(\cdot)$ from Secs.~\ref{sec:preliminaries} and~\ref{QAOADP}.

Any BUCO objective admits the pseudo-Boolean expansion
\begin{equation}\label{eq:buco-poly}
C_B(\bm z)
 = c_0+\sum_{j}c_j z_j+\sum_{j<k}c_{jk} z_j z_k
   +\sum_{j<k<\ell}c_{jk\ell} z_j z_k z_\ell+\cdots,\qquad z_i\in\{\pm1\},
\end{equation}
where the sum ranges over all nonempty subsets $S\subseteq V$ and $c_S\in\mathbb{R}$ are
(real) coefficients.  The corresponding target Hamiltonian is
\begin{equation}\label{eq:buco-ham}
C_B
 = c_0 I+\sum_{S\neq\emptyset} c_S \prod_{i\in S} Z_i.
\end{equation}

Grouping monomials by their support, let
\[
  \mathcal{E}
  :=\{\,S\subseteq V:\ c_S\neq 0\,\}
\]
and define a hypergraph $H=(V,\mathcal{E})$ whose hyperedges are the supports $S$ of
nonzero monomials.  For each $S\in\mathcal{E}$ set
\[
  w_S := -\,2c_S
\]
and view $w_S\in\mathbb{R}$ as a weight on $S$.
The associated hypergraph cut objective and Hamiltonian are defined by
\begin{equation}\label{eq:hg-cut}
C_H(\bm z)
 = \frac12\sum_{S\in\mathcal{E}} w_S\Bigl(1-\prod_{u\in S} z_u\Bigr),
 \qquad z_u\in\{\pm1\},
\end{equation}
\begin{equation}\label{eq:hg-ham}
C_H
 = \frac12\sum_{S\in\mathcal{E}} w_S\Bigl(1-\prod_{u\in S} Z_u\Bigr).
\end{equation}
A direct calculation shows that for every spin configuration $\bm z$,
\[
  C_H(\bm z)
   = \frac12\sum_{S\in\mathcal{E}} w_S
     - \frac12\sum_{S\in\mathcal{E}} w_S\prod_{u\in S} z_u
   = -\sum_{S\in\mathcal{E}} c_S + \sum_{S\in\mathcal{E}} c_S\prod_{u\in S} z_u.
\]
Thus
\[
  C_B(\bm z)
   = c_0 + \sum_{S\in\mathcal{E}} c_S\prod_{u\in S} z_u
   = c_* + C_H(\bm z),
   \qquad
   c_* := c_0 + \sum_{S\in\mathcal{E}} c_S.
\]
In particular, $C_B$ and $C_H$ differ only by an additive constant, so they have the same
maximizers:
\[
  \arg\max_{\bm z} C_B(\bm z)
   = \arg\max_{\bm z} C_H(\bm z),
  \qquad
  \max_{\bm z} C_B(\bm z)
   = c_* + \max_{\bm z} C_H(\bm z).
\]
At the Hamiltonian level we likewise have
\[
  C_B = c_* I + C_H,
\]
so for any QAOA state $\ket{\psi_p}$ the expectations are related by
\[
  \big\langle \psi_p\big|\,C_B\,\big|\psi_p\big\rangle
   = c_* + \big\langle \psi_p\big|\,C_H\,\big|\psi_p\big\rangle.
\]
That is, evaluating the BUCO objective under QAOA is equivalent (up to an additive
constant) to evaluating the weighted hypergraph cut objective $C_H$, namely the total
weight of cut hyperedges.

Hence, in the remainder of this subsection we focus on the hypergraph cut form
\eqref{eq:hg-ham}.  Fix a hyperedge $S_0\in\mathcal{E}$ and write the corresponding
observable as
\[
  \mathcal{O}_{S_0}
   = \frac{w_{S_0}}{2}\Bigl(1-\prod_{i\in S_0}Z_i\Bigr).
\]
We let $H_{S_0}=H_p(S_0)$ denote the induced $p$-local sub-hypergraph defined in
Sec.~\ref{sec:preliminaries}.  With the layer spins $\bm z^{[j]}$ and the single-site
kernel $f(\cdot)$ as in Appendix~\ref{app:B}, the localized formula is
\begin{equation}\label{eq:hg-local-main}
\big\langle \psi_p\big|\,\mathcal{O}_{S_0}\,\big|\psi_p\big\rangle
=\sum_{\{\bm z_v:v\in V(H_{S_0})\}}
\frac{w_{S_0}}{2}\Bigl(1-\prod_{i\in S_0} z_{i}^{[0]}\Bigr)\,
\exp\!\Bigl(i\sum_{j=-p}^{p}\Gamma_j\,C_{H_{S_0}}\bigl(\bm z^{[j]}\bigr)\Bigr)\,
\prod_{v\in V(H_{S_0})} f\bigl(\bm z_v\bigr),
\end{equation}
where
\[
  C_{H_{S_0}}(\bm z^{[j]})
   = \frac12\sum_{S\in\mathcal{E}(H_{S_0})} w_S
       \Bigl(1-\prod_{u\in S} z_u^{[j]}\Bigr).
\]

On the hypergraph $H_{S_0}$, take a graph tree decomposition $(T,\{X_a\})$ of the primal
graph $PG(H_{S_0})$ such that every hyperedge $S\in\mathcal{E}(H_{S_0})$ is contained in
at least one bag $X_a$ (here we say a hyperedge $S$ belongs to $X_a$ if and only if $X_a$
contains all vertices in $S$) and the running–intersection property holds. Choose a root
bag $root$ with $S_0\subseteq X_{root}$. For each arc $(a\to\mathrm{pa}(a))$, set
$S_a=X_a\cap X_{\mathrm{pa}(a)}$ and $V_a=X_a\setminus S_a$. Define the set of hyperedges
introduced at $a$ by
\begin{equation}\label{eq:introduced-edges}
\mathcal{E}_a=\{\,S\in\mathcal{E}(H_{S_0}):~S\subseteq X_a,\ S\nsubseteq X_{\mathrm{pa}(a)}\,\}.
\end{equation}
For a separator configuration $\mathbf{z}_{S_a}$, define the bag message
\begin{equation}\label{eq:hg-message}
g(a,\mathbf{z}_{S_a})
=\sum_{\mathbf{z}_{V_a}}
\Biggl[
\prod_{u\in V_a} f\bigl(\bm z_u\bigr)\,
\exp\!\Bigl(i\sum_{j=-p}^{p}\Gamma_j\,C_{\mathcal{E}_a}\bigl(\mathbf{z}^{[j]}\bigr)\Bigr)\,
\prod_{b\in\mathrm{child}(a)} g\bigl(b,\mathbf{z}_{S_b}\bigr)
\Biggr],
\end{equation}
where
\begin{equation}\label{eq:hg-Ea}
C_{\mathcal{E}_a}(\mathbf{z}^{[j]})
=\frac12\sum_{S\in \mathcal{E}_a} w_S\Bigl(1-\prod_{u\in S} z_u^{[j]}\Bigr).
\end{equation}
If $a$ is a leaf bag, we set
\[
\prod_{b\in\mathrm{child}(a)} g\bigl(b,\mathbf{z}_{S_b}\bigr) = 1.
\]

The root assembly is
\begin{equation}\label{eq:hg-root}
\begin{aligned}
\big\langle \psi_p\big|\,\mathcal{O}_{S_0}\,\big|\psi_p\big\rangle
&= \sum_{\mathbf{z}_{X_{root}}}
\frac{w_{S_0}}{2}\Bigl(1-\prod_{i\in S_0} z_i^{[0]}\Bigr)\,
\Bigl[\prod_{u\in V_{root}} f\bigl(\bm z_u\bigr)\Bigr]\,
\\&\quad\times \exp\!\Bigl(i\sum_{j=-p}^{p}\Gamma_j\,C_{\mathcal{E}_{root}}\bigl(\mathbf{z}^{[j]}\bigr)\Bigr)\,
\prod_{b\in \mathrm{child}(root)} g\bigl(b,\mathbf{z}_{S_b}\bigr).
\end{aligned}
\end{equation}

\begin{algorithm}[tbp]
  \caption{ComputeQAOAExpectation--BUCO
    \texorpdfstring{$(C_B,\boldsymbol{\gamma},\boldsymbol{\beta},p)$}{(C_B,gamma,beta,p)}}
  \label{alg:qaoa-expectation-buco}
  \begin{algorithmic}[1]
    \REQUIRE BUCO objective $C_B(\bm z)
      = c_0 + \sum_{S\neq\emptyset} c_S \prod_{i\in S} z_i$ on spins $z_i\in\{\pm1\}$,
      angles $(\boldsymbol{\gamma},\boldsymbol{\beta})$, round $p$
    \ENSURE $F_p(C_B;\boldsymbol{\gamma},\boldsymbol{\beta})
      = \big\langle \psi_p\big|\,C_B\,\big|\psi_p\big\rangle$
    \STATE Construct the hypergraph $H=(V,\mathcal{E})$ with one hyperedge $S\in\mathcal{E}$
      for each nonzero monomial $c_S\prod_{i\in S}z_i$, and set $w_S \gets -\,2c_S$
    \STATE $c_* \gets c_0 + \sum_{S\in\mathcal{E}} c_S$
    \STATE $Ex \gets c_*$ \hfill // constant shift: $C_B = c_* I + C_H$
    \FORALL{$S_0\in\mathcal{E}$}
      \STATE $H_{S_0} \gets H_p(S_0)$
      \STATE Compute a tree decomposition $(T,\{X_a\})$ of $PG(H_{S_0})$
        with a root bag $root$ such that $S_0\subseteq X_{root}$
      \STATE $\Delta \gets \langle \psi_p|\mathcal{O}_{S_0}|\psi_p\rangle$
        via bag messages and root assembly \eqref{eq:hg-message}--\eqref{eq:hg-root}
      \STATE $Ex \gets Ex+\Delta$
    \ENDFOR \\
    \RETURN $Ex$
  \end{algorithmic}
\end{algorithm}

We next analyse the time complexity of the hypergraph algorithm.

\paragraph{Setup (hypergraph, per-local instance).}
Let $H=(V,\mathcal{E})$ be a hypergraph and write
$r:=\max_{S\in\mathcal{E}}|S|$ for its (global) maximum hyperedge size.
For each hyperedge $S\in\mathcal{E}$, let $H_S:=H_p(S)$.  Write $n_S:=|V(H_S)|$,
\(
A_S:=\sum_{T\in\mathcal{E}(H_S)}|T|
\)
for the total arity within $H_S$,
and let $t_S:=\mathrm{tw}(PG(H_S))$ be the treewidth of the primal graph.
Let $w_S$ denote the TD width actually used by the DP on $H_S$, and set $L:=2p+1$.
Define the global aggregates
\[
N_H:=\sum_{S\in\mathcal{E}} n_S,\qquad
A_H:=\sum_{S\in\mathcal{E}} A_S,\qquad
t:=\max_{S} t_S,\qquad
\kappa:=\max_{S} w_S.
\]
We use the unit-cost RAM model; per-round trigonometric coefficients are pretabulated
(affecting only polynomial factors).

\paragraph{Tree decomposition on $PG(H_S)$.}
Using any single-exponential $2$-approximation for treewidth, we obtain a TD for
$PG(H_S)$ of width $w_S\le 2t_S{+}1$ in time
\[
T_{\mathrm{TD}}(H_S)\;=\;\widetilde O\!\big(n_S\cdot 2^{O(t_S)}\big).
\]

\paragraph{Per-local DP on $H_S$.}
Each bag of $PG(H_S)$ contains at most $w_S{+}1$ vertices and thus has
$2^{L(w_S+1)}$ replicated-spin assignments. Aggregating local kernels/phases and child
messages per assignment gives
\[
T_{\mathrm{DP}}(H_S)\;=\;2^{\,L\,(w_S+1)}\cdot \widetilde{O}\!\big(n_S + A_S\big),
\qquad
S_{\mathrm{DP}}(H_S)\;=\;2^{\,L\,(w_S+1)}\cdot \widetilde{O}(n_S),
\]
where $\widetilde O(\cdot)$ hides factors polynomial in $p,w_S,r$ and the input size of
$H_S$.

\paragraph{Global cost (sum over local $p$-instances).}
Extracting each $H_S$ on $PG(H)$ costs $\widetilde O(n_S + A_S)$, hence
$\sum_S \widetilde O(n_S + A_S)=\widetilde O(N_H + A_H)$.
Summing the TD and DP costs over $S\in\mathcal{E}$ and using $t=\max_S t_S$ and
$\kappa=\max_S w_S$ yields
\begin{equation}\label{eq:total-hyper}
 T_{\text{total}}^{\mathrm{(hyper)}}
\ \le\
\underbrace{\widetilde O(N_H + A_H)}_{\text{extract all }H_S}
\ +\ 
\underbrace{\widetilde O\!\big(N_H\cdot 2^{O(t)}\big)}_{\text{TD $2$-approx on }PG(H_S)}
\ +\ 
\underbrace{2^{\,L\,(\kappa+1)}\cdot \widetilde O(N_H + A_H)}_{\text{all bag-level DP}}.
\end{equation}
In particular (for fixed $p$), the DP term dominates asymptotically:
\[
T_{\text{total}}^{\mathrm{(hyper)}}\;=\;2^{\,L\,(\kappa+1)}\cdot \widetilde O(N_H + A_H)\;+\;\text{(lower-order terms)},
\qquad
S_{\text{total}}^{\mathrm{(hyper)}}\;=\;2^{\,L\,(\kappa+1)}\cdot \widetilde O(N_H),
\]
where the space bound follows by processing the $H_S$ sequentially and freeing temporaries
after each subproblem.

\paragraph{Bounded local treewidth on $PG(H_S)$ (fixed $p$).}
If $\max_S t_S=O(1)$, then also $\kappa=O(1)$ for the $2$-approx TD (and $\kappa=t$ with
exact TDs), whence
\[
T_{\text{total}}^{\mathrm{(hyper)}}\;=\;\mathrm{poly}(N_H + A_H),
\qquad
S_{\text{total}}^{\mathrm{(hyper)}}\;=\;\mathrm{poly}(N_H).
\]
With exact TDs the DP factor tightens from $2^{L(\kappa+1)}$ to $2^{L(t+1)}$.
\medskip

\section{Experiments}\label{sec:experiment}
We conduct fixed-round QAOA experiments on three classes of structured graphs (see
Fig.~\ref{fig:three-structured-graphs}) using the exact \emph{\(p\)-local} expectation
evaluator proposed in Sec.~\ref{QAOADP}. Using the normalized cut ratio
\(\langle C\rangle/|E|\) as the metric, we compare constant-round performance across
these graphs, which differ in their short-cycle density and local geometry, and include
classical baselines that are \emph{locality-matched} to QAOA for a fair comparison\footnote{%
  Source code and instructions are available at
  \url{https://github.com/WABINSH/QAOA_expectation_experiment.git}.}.

\subsection{Graph Families and Rationale}

\paragraph{Generalized Petersen graph $GP(15,2)$.}
The generalized Petersen graphs $GP(n,k)$ form a well-studied family~\cite{FruchtGraverWatkins1971GP};
here we use $GP(15,2)$, a canonical benchmark on $30$ vertices and $45$ edges (in particular $3$-regular),
built from an outer $15$-cycle, an inner $15$-cycle connected with step $2$, and $15$ radial spokes.
Formally (indices modulo $15$), the edge sets are
$\{v_i,v_{i+1}\}$, $\{v_i,w_i\}$, and $\{w_i,w_{i+2}\}$ for $i=0,\dots,14$.
This construction injects a controlled supply of short odd cycles so the graph is non-bipartite with girth $5$.
The dihedral symmetries (rotations/reflections of the index $i$) keep local neighborhoods essentially uniform,
enabling controlled studies of short-cycle effects at fixed degree.

\paragraph{Double-layer triangular lattice via iterated 2-lifts.}
We view this graph as a simple example of an iterated $2$-lift of a triangular base graph in the sense of Bilu and Linial~\cite{BiluLinial2006Lifts}.
Iterated $2$-lifts provide a scalable construction with highly regular local neighborhoods and abundant short cycles.
The uniform local structure enables like-for-like comparisons between QAOA and classical local rules under the same protocol.


\paragraph{Truncated icosahedron graph $C_{60}$.}
$C_{60}$ is the truncated icosahedron graph of the truncated icosahedron, a 3-regular
fullerene formed by a pentagon–hexagon tiling with two edge types. Its girth is~5, and
the pentagonal faces introduce unavoidable local frustration for the Max-Cut objective.
The high symmetry distributes this frustration uniformly across the graph, making
$C_{60}$ a demanding test case for purely \(k\)-local classical rules; geometrically, it is the familiar truncated icosahedron / fullerene
polyhedron~\cite{Kostant1994TruncIco}. Meanwhile, the
bounded degree and homogeneous local neighborhoods resemble nearest-neighbor quantum
hardware layouts. For our evaluator, every edge-radius-3 neighborhood has treewidth~3,
so a \(p\)-local dynamic-programming scheme can exactly compute expectation values at
shallow \(p\). These features make $C_{60}$ a physically meaningful, symmetry-rich
stress test on which shallow QAOA can be probed against challenging classical baselines.

\begin{table}[h!]
\centering
\caption{$p$-local treewidth for the three instances.}
\label{tab:p-local-treewidth}
\begin{tabular}{lccc}
\hline
Instance & $p=1$ & $p=2$ & $p=3$ \\
\hline
$GP(15,2)$                 & 1 & 2 & 3 \\
Double-layer triangular 2-lift  & 2 & 2 & 2 \\
$C_{60}$                   & 1 & 2 & 3 \\
\hline
\end{tabular}
\end{table}

\begin{figure*}[t]
  \centering
  \begin{tabular}{@{}c@{\hspace{6pt}}c@{\hspace{6pt}}c@{}}

    \resizebox{0.31\textwidth}{!}{

\resizebox{0.31\textwidth}{!}{
\begin{tikzpicture}[
  x=1cm,y=1cm,
  line cap=round,line join=round
]
  \tikzset{
    vertex/.style={circle,draw,fill=white,inner sep=0pt,minimum size=6.5pt},
    outer/.style={line width=0.9pt},
    inner/.style={line width=0.9pt},
    rung/.style={line width=0.7pt,dashed,opacity=0.5}
  }

  \def\n{15}
  \def\R{4.3}
  \def\r{2.35}
  \def\start{90}

  \foreach \i in {0,...,14} {
    \pgfmathsetmacro{\ang}{\start - 360/\n*\i}
    \node[vertex] (v\i) at (\ang:\R) {};
    \node[vertex] (w\i) at (\ang:\r) {};
  }

  \foreach \i in {0,...,14} {
    \pgfmathtruncatemacro{\ip}{mod(\i+1,\n)}
    \draw[outer] (v\i) -- (v\ip);     
    \draw[rung]  (v\i) -- (w\i);      
    \pgfmathtruncatemacro{\ik}{mod(\i+2,\n)}
    \draw[inner] (w\i) -- (w\ik);     
  }
\end{tikzpicture}
}

    }
    &
    \resizebox{0.31\textwidth}{!}{\resizebox{0.31\textwidth}{!}{
\begin{tikzpicture}[scale=3.0, x=1cm,y=1cm, line cap=round, line join=round]

  \def\dotSz{5.0pt}  
  \def\edgeW{0.9pt} 
  \tikzset{
    vertex/.style={circle,draw,fill=white,inner sep=0pt,minimum size=\dotSz},
    edge/.style={line width=\edgeW}
  }

  \coordinate (v0)  at (-0.4318,  0.8984);
  \coordinate (v1)  at ( 0.1916,  0.2540);
  \coordinate (v2)  at ( 0.8468, -0.5606);
  \coordinate (v3)  at (-0.5854, -0.5616);
  \coordinate (v4)  at (-0.2724,  0.1560);
  \coordinate (v5)  at ( 0.0472,  1.0000);
  \coordinate (v6)  at (-0.5634, -0.8558);
  \coordinate (v7)  at ( 0.7583, -0.2811);
  \coordinate (v8)  at (-0.8899, -0.4723);
  \coordinate (v9)  at ( 1.0000, -0.0755);
  \coordinate (v10) at ( 0.0384, -0.1859);
  \coordinate (v11) at (-0.1463,  0.7238);
  \coordinate (v12) at ( 0.5874,  0.5625);
  \coordinate (v13) at (-0.8484,  0.5579);
  \coordinate (v14) at (-0.1893, -0.2540);
  \coordinate (v15) at ( 0.4287, -0.9001);
  \coordinate (v16) at (-0.0495, -1.0000);
  \coordinate (v17) at ( 0.2757, -0.1565);
  \coordinate (v18) at (-0.7558,  0.2785);
  \coordinate (v19) at ( 0.5622,  0.8554);
  \coordinate (v20) at (-0.0358,  0.1852);
  \coordinate (v21) at ( 0.1451, -0.7243);
  \coordinate (v22) at ( 0.8892,  0.4721);
  \coordinate (v23) at (-1.0000,  0.0737);

  \draw[edge] (v0) -- (v5);
  \draw[edge] (v0) -- (v11);
  \draw[edge] (v0) -- (v13);
  \draw[edge] (v1) -- (v4);
  \draw[edge] (v1) -- (v10);
  \draw[edge] (v1) -- (v12);
  \draw[edge] (v2) -- (v7);
  \draw[edge] (v2) -- (v9);
  \draw[edge] (v2) -- (v15);
  \draw[edge] (v3) -- (v6);
  \draw[edge] (v3) -- (v8);
  \draw[edge] (v3) -- (v14);
  \draw[edge] (v4) -- (v10);
  \draw[edge] (v4) -- (v18);
  \draw[edge] (v5) -- (v11);
  \draw[edge] (v5) -- (v19);
  \draw[edge] (v6) -- (v8);
  \draw[edge] (v6) -- (v16);
  \draw[edge] (v7) -- (v9);
  \draw[edge] (v7) -- (v17);
  \draw[edge] (v8) -- (v23);
  \draw[edge] (v9) -- (v22);
  \draw[edge] (v10) -- (v21);
  \draw[edge] (v11) -- (v20);
  \draw[edge] (v12) -- (v19);
  \draw[edge] (v12) -- (v22);
  \draw[edge] (v13) -- (v18);
  \draw[edge] (v13) -- (v23);
  \draw[edge] (v14) -- (v17);
  \draw[edge] (v14) -- (v20);
  \draw[edge] (v15) -- (v16);
  \draw[edge] (v15) -- (v21);
  \draw[edge] (v16) -- (v21);
  \draw[edge] (v17) -- (v20);
  \draw[edge] (v18) -- (v23);
  \draw[edge] (v19) -- (v22);

  \foreach \k in {0,...,23} {
    \node[vertex] at (v\k) {};
  }
\end{tikzpicture}
}}
    &
    \resizebox{0.31\textwidth}{!}{\def\dotSz{4.0pt}   
\def\edgeW{0.7pt}   

\tikzset{
  vertex/.style={circle,draw,fill=white,inner sep=0pt,minimum size=\dotSz},
  edge/.style={line width=\edgeW},
  rung/.style={edge, dashed, opacity=.55},
  cross/.style={edge, dotted, opacity=.55}
}
\begin{tikzpicture}[scale=3.0, line cap=round, line join=round]
  \tdplotsetmaincoords{70}{120}

  \coordinate (v0) at (0.393360,0.341317,0.853681);
  \coordinate (v1) at (0.689077,0.410480,0.597227);
  \coordinate (v2) at (0.580992,0.721482,0.376712);
  \coordinate (v3) at (0.101487,0.604261,0.790297);
  \coordinate (v4) at (0.218480,0.840304,0.496140);
  \coordinate (v5) at (0.296823,-0.053746,0.953419);
  \coordinate (v6) at (0.501987,-0.354610,0.788835);
  \coordinate (v7) at (-0.116126,-0.527527,0.841564);
  \coordinate (v8) at (0.245401,-0.648978,0.720143);
  \coordinate (v9) at (-0.084973,-0.159084,0.983601);
  \coordinate (v10) at (0.902369,0.092418,0.420938);
  \coordinate (v11) at (0.811070,-0.279639,0.513778);
  \coordinate (v12) at (0.958922,-0.263122,-0.105999);
  \coordinate (v13) at (0.847066,-0.497930,0.185864);
  \coordinate (v14) at (0.994323,0.099304,0.038217);
  \coordinate (v15) at (-0.457174,-0.634236,0.623487);
  \coordinate (v16) at (-0.417262,-0.858731,0.297446);
  \coordinate (v17) at (-0.742664,-0.367566,0.559772);
  \coordinate (v18) at (-0.683075,-0.729592,0.033236);
  \coordinate (v19) at (-0.882855,-0.426648,0.196314);
  \coordinate (v20) at (0.811811,-0.322191,-0.486987);
  \coordinate (v21) at (0.561009,-0.612278,-0.557122);
  \coordinate (v22) at (0.707947,-0.016401,-0.706076);
  \coordinate (v23) at (0.299790,-0.486865,-0.820420);
  \coordinate (v24) at (0.390024,-0.116352,-0.913424);
  \coordinate (v25) at (0.287850,-0.882300,0.372409);
  \coordinate (v26) at (0.579733,-0.806870,0.113452);
  \coordinate (v27) at (-0.034721,-0.985364,0.166889);
  \coordinate (v28) at (0.440780,-0.862317,-0.249245);
  \coordinate (v29) at (0.062833,-0.973875,-0.218218);
  \coordinate (v30) at (0.684503,0.728396,-0.029898);
  \coordinate (v31) at (0.883999,0.426050,-0.192428);
  \coordinate (v32) at (0.744264,0.369278,-0.556511);
  \coordinate (v33) at (0.417999,0.859195,-0.295061);
  \coordinate (v34) at (0.458104,0.635595,-0.621417);
  \coordinate (v35) at (-0.299212,0.489361,0.819145);
  \coordinate (v36) at (-0.389203,0.119964,0.913307);
  \coordinate (v37) at (-0.706719,0.018798,0.707244);
  \coordinate (v38) at (-0.561215,0.613235,0.555861);
  \coordinate (v39) at (-0.811726,0.322826,0.486707);
  \coordinate (v40) at (-0.063294,0.974165,0.216785);
  \coordinate (v41) at (0.034734,0.985166,-0.168051);
  \coordinate (v42) at (-0.441915,0.862397,0.246947);
  \coordinate (v43) at (-0.289264,0.881167,-0.373994);
  \coordinate (v44) at (-0.581754,0.804960,-0.116625);
  \coordinate (v45) at (-0.580625,-0.722525,-0.375277);
  \coordinate (v46) at (-0.100710,-0.602514,-0.791729);
  \coordinate (v47) at (-0.218487,-0.839563,-0.497391);
  \coordinate (v48) at (-0.392808,-0.340133,-0.854407);
  \coordinate (v49) at (-0.688653,-0.411320,-0.597137);
  \coordinate (v50) at (0.084978,0.161757,-0.983165);
  \coordinate (v51) at (0.115937,0.529147,-0.840573);
  \coordinate (v52) at (-0.246837,0.647420,-0.721054);
  \coordinate (v53) at (-0.296626,0.055000,-0.953409);
  \coordinate (v54) at (-0.502866,0.353142,-0.788933);
  \coordinate (v55) at (-0.994181,-0.101305,-0.036638);
  \coordinate (v56) at (-0.959395,0.261347,0.106106);
  \coordinate (v57) at (-0.848785,0.494055,-0.188343);
  \coordinate (v58) at (-0.902126,-0.094891,-0.420909);
  \coordinate (v59) at (-0.811467,0.276143,-0.515041);

  \draw[edge] (v0) -- (v1);
  \draw[edge] (v0) -- (v3);
  \draw[edge] (v0) -- (v5);
  \draw[edge] (v1) -- (v2);
  \draw[edge] (v1) -- (v10);
  \draw[edge] (v2) -- (v4);
  \draw[edge] (v2) -- (v30);
  \draw[edge] (v3) -- (v4);
  \draw[edge] (v3) -- (v35);
  \draw[edge] (v4) -- (v40);
  \draw[edge] (v5) -- (v6);
  \draw[edge] (v5) -- (v9);
  \draw[edge] (v6) -- (v8);
  \draw[edge] (v6) -- (v11);
  \draw[edge] (v7) -- (v8);
  \draw[edge] (v7) -- (v9);
  \draw[edge] (v7) -- (v15);
  \draw[edge] (v8) -- (v25);
  \draw[edge] (v9) -- (v36);
  \draw[edge] (v10) -- (v11);
  \draw[edge] (v10) -- (v14);
  \draw[edge] (v11) -- (v13);
  \draw[edge] (v12) -- (v13);
  \draw[edge] (v12) -- (v14);
  \draw[edge] (v12) -- (v20);
  \draw[edge] (v13) -- (v26);
  \draw[edge] (v14) -- (v31);
  \draw[edge] (v15) -- (v16);
  \draw[edge] (v15) -- (v17);
  \draw[edge] (v16) -- (v18);
  \draw[edge] (v16) -- (v27);
  \draw[edge] (v17) -- (v19);
  \draw[edge] (v17) -- (v37);
  \draw[edge] (v18) -- (v19);
  \draw[edge] (v18) -- (v45);
  \draw[edge] (v19) -- (v55);
  \draw[edge] (v20) -- (v21);
  \draw[edge] (v20) -- (v22);
  \draw[edge] (v21) -- (v23);
  \draw[edge] (v21) -- (v28);
  \draw[edge] (v22) -- (v24);
  \draw[edge] (v22) -- (v32);
  \draw[edge] (v23) -- (v24);
  \draw[edge] (v23) -- (v46);
  \draw[edge] (v24) -- (v50);
  \draw[edge] (v25) -- (v26);
  \draw[edge] (v25) -- (v27);
  \draw[edge] (v26) -- (v28);
  \draw[edge] (v27) -- (v29);
  \draw[edge] (v28) -- (v29);
  \draw[edge] (v29) -- (v47);
  \draw[edge] (v30) -- (v31);
  \draw[edge] (v30) -- (v33);
  \draw[edge] (v31) -- (v32);
  \draw[edge] (v32) -- (v34);
  \draw[edge] (v33) -- (v34);
  \draw[edge] (v33) -- (v41);
  \draw[edge] (v34) -- (v51);
  \draw[edge] (v35) -- (v36);
  \draw[edge] (v35) -- (v38);
  \draw[edge] (v36) -- (v37);
  \draw[edge] (v37) -- (v39);
  \draw[edge] (v38) -- (v39);
  \draw[edge] (v38) -- (v42);
  \draw[edge] (v39) -- (v56);
  \draw[edge] (v40) -- (v41);
  \draw[edge] (v40) -- (v42);
  \draw[edge] (v41) -- (v43);
  \draw[edge] (v42) -- (v44);
  \draw[edge] (v43) -- (v44);
  \draw[edge] (v43) -- (v52);
  \draw[edge] (v44) -- (v57);
  \draw[edge] (v45) -- (v47);
  \draw[edge] (v45) -- (v49);
  \draw[edge] (v46) -- (v47);
  \draw[edge] (v46) -- (v48);
  \draw[edge] (v48) -- (v49);
  \draw[edge] (v48) -- (v53);
  \draw[edge] (v49) -- (v58);
  \draw[edge] (v50) -- (v51);
  \draw[edge] (v50) -- (v53);
  \draw[edge] (v51) -- (v52);
  \draw[edge] (v52) -- (v54);
  \draw[edge] (v53) -- (v54);
  \draw[edge] (v54) -- (v59);
  \draw[edge] (v55) -- (v56);
  \draw[edge] (v55) -- (v58);
  \draw[edge] (v56) -- (v57);
  \draw[edge] (v57) -- (v59);
  \draw[edge] (v58) -- (v59);

  \foreach \i in {0,...,59} \node[vertex] at (v\i) {};
\end{tikzpicture}}
    \\
    \small $GP(15,2)$ &
    \small Double-layer triangular &
    \small Truncated icosahedron C$_{60}$
  \end{tabular}
  \caption{Three structured graph benchmarks.}
  \label{fig:three-structured-graphs}
\end{figure*}
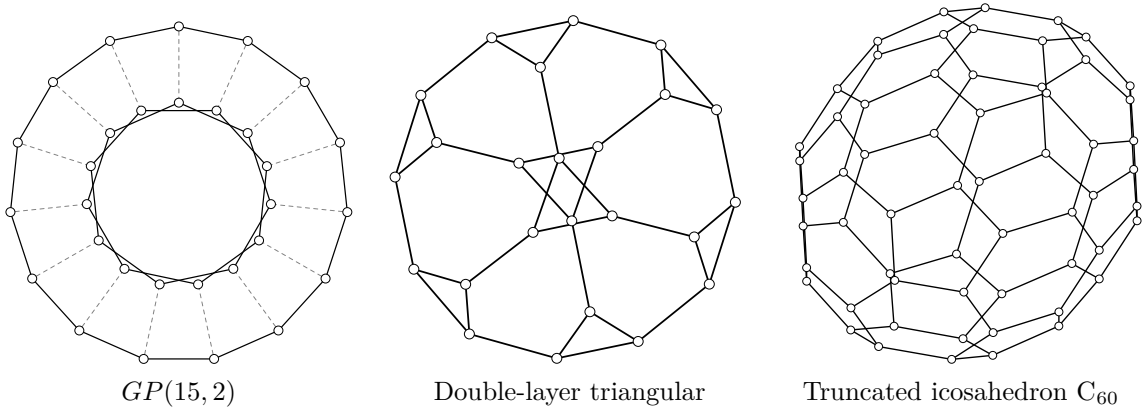

\subsection{Evaluation Protocol and Classical Baselines}

\paragraph{Metric and evaluator.}
We report the normalized expected cut fraction $\langle C\rangle/|E|$. Expectations at fixed round $p$ are computed exactly using a $p$-local evaluator: for each edge, we assemble the contribution from its radius-$p$ neighborhood via tree-decomposition DP and sum over edges.
Our complexity bounds in Sec.~\ref{QAOADP} rely on the near-linear-time $2$-approximation algorithm for treewidth~\cite{Korhonen2021TwoApproxTW}. 
In the experiments of Sec.~\ref{sec:experiment}, however, we instead use a simple branch-and-bound treewidth routine to construct tree decompositions, since our benchmark graphs are small. 
This choice only affects preprocessing time and does not change any of the reported expectations.

\paragraph{Classical locality-matched baselines.}
To match the \(p\)-local lightcone of QAOA, we fix the locality budget at \(k=p\) and compare two representative families. 
For each instance and each \(p\), we tune the family-specific parameters and report the larger expected cut fraction as the ``best classical local'' reference. 
Both families are label-invariant and access only radius-\(k\) neighborhoods.

\emph{(i) Threshold-flip \(\mathrm{Th}\text{-}s(\tau)\)}~\cite{hirvonen2014large}.
Starting from i.i.d.\ unbiased Rademacher spins \(\sigma^{(0)}\in\{\pm1\}^V\),
\[
\Pr[\sigma^{(0)}_v=+1]=\Pr[\sigma^{(0)}_v=-1]=\tfrac12 \quad \text{for all } v\in V,
\]
we perform \(s\) synchronous rounds. 
In the abstract \(\mathrm{Th}\text{-}s(\tau)\) family one may parameterize round-\(t\) thresholds by fractions \(\tau_t\in[0,1]\) and set
\[
\sigma^{(t+1)}_v \;=\;
\begin{cases}
-\sigma^{(t)}_v, & \text{if } \#\{u\in N(v): \sigma^{(t)}_u=\sigma^{(t)}_v\}\;\ge\;\theta_v^{(t)},\\[2pt]
\ \ \sigma^{(t)}_v, & \text{otherwise,}
\end{cases}
\qquad
\theta_v^{(t)}=\bigl\lceil \tau_t\, d_v\bigr\rceil,
\]
so that a flip is triggered once a large enough fraction \(\tau_t\) of the neighbors agree with \(\sigma^{(t)}_v\).
After \(s\) rounds, the final assignment \(\sigma^{(s)}\) induces the cut.
This rule is \(s\)-local (information cannot propagate beyond \(s\) hops), so we instantiate it with \(s\le k=p\).

In our experiments we use a discrete variant tailored to the small, bounded-degree benchmark graphs.
Instead of optimizing over all \(\tau_t\in[0,1]\), we work directly with integer thresholds shared by all vertices in a round:
for each round \(t\) we pick an integer \(t_t\in\{1,2,3\}\) and use
\[
\theta_v^{(t)} = t_t \quad \text{for all } v\in V.
\]
Equivalently, this corresponds to restricting \(\tau_t\) to a coarse grid that depends on the degree, \(\tau_t \approx t_t/d_v\).
For each graph and each round \(p\), we search over \(s\in\{1,\dots,p\}\) and all sequences \((t_t)_{t=0}^{s-1}\) with entries in \(\{1,2,3\}\).
For every candidate choice we approximate the expected cut fraction by Monte Carlo, averaging the cut ratio over \(T=5000\) independent draws of the unbiased initial spins.

\emph{(ii) Barak--Marwaha--type \( \mathrm{BM}\text{-}k \) Gaussian-sum rule}~\cite{barak2021classical}.
Sample i.i.d.\ seeds \(g_u\sim\mathcal{N}(0,1)\) for all \(u\in V\).
For each vertex \(v\), compute a radius-\(k\) linear score with distance-dependent weights \(\alpha=(\alpha_0,\ldots,\alpha_k)\):
\[
S_v(\alpha,g)\;=\;\sum_{d=0}^{k}\alpha_d\!\!\!\sum_{\substack{u\in V:\\ \mathrm{dist}(u,v)=d}}\!\!\! g_u,
\qquad 
\sigma_v \;=\; \mathrm{sign}\bigl(S_v(\alpha,g)\bigr),
\]
with the convention \(\mathrm{sign}(0)=+1\).
This is a one-shot radius-\(k\) local rule; we take \(k=p\) and optimize \(\alpha\) up to an overall scale (which does not affect the signs). For convenience we fix \(\alpha_0=1\) and optimize the remaining coordinates by a small black-box search (random restarts followed by simple coordinate-wise refinement).
Because the scores \(\{S_v(\alpha,g)\}\) are jointly Gaussian, the expected cut fraction can be evaluated exactly using the identity
\(\Pr[\mathrm{sign}(X)\neq \mathrm{sign}(Y)]=\arccos(\rho)/\pi\) for jointly Gaussian \(X,Y\) with correlation \(\rho\), rather than by sampling the seeds \(g\).

\emph{Protocol.}
For each graph and round \(p\), we (a) fix \(k=p\); (b) independently tune \((s,(t_t)_{t=0}^{s-1})\) for \(\mathrm{Th}\text{-}s(\tau)\) and \(\alpha\) for \(\mathrm{BM}\text{-}k\) within the ranges described above; (c) evaluate each family’s expected cut fraction (exactly for \(\mathrm{BM}\text{-}k\) and by Monte Carlo for \(\mathrm{Th}\text{-}s\)); and (d) take their pointwise maximum as the ``best classical local'' baseline.
The optimized baseline parameters are summarized in Tabs.~\ref{tab:baseline-params} and~\ref{tab:thr-params}, and the resulting cut fractions appear in Tabs.~\ref{tab:prodK2}, \ref{tab:tri2lift}, and~\ref{tab:mobius}.

\begin{table}[h!]
\centering
\caption{BM-k parameter settings for Sec.~5.}
\label{tab:baseline-params}
\begin{tabular}{r l l l}
\hline
$p$ & Instance & BM-$k$ params $\boldsymbol{\alpha}$ \\
\hline
1 & $GP(15,2)$                    & $\{1,-0.5770\}$        \\
1 & Double-layer triangular 2-lift & $\{1,-0.4800\}$     \\
1 & Truncated icosahedron $C_{60}$ & $\{1,-0.5779\}$ \\
2 & $GP(15,2)$                    & $\{1,-0.7021,0.2391\}$   \\
2 & Double-layer triangular 2-lift & $\{1,-0.5932,0.2842\}$          \\
2 & Truncated icosahedron $C_{60}$ & $\{1,-0.7255,0.2897\}$         \\
3 & $GP(15,2)$                    & $\{1,-0.7209,0.2793,-0.0973\}$    \\
3 & Double-layer triangular 2-lift & $\{1,-0.6428,0.4050,-0.1638\}$  \\
3 & Truncated icosahedron $C_{60}$ & $\{1,-0.7864,0.4270,-0.2092\}$  \\
\hline
\end{tabular}
\end{table}

\begin{table}[h!]
\centering
\caption{Th-s parameter settings for Sec.~5.}
\label{tab:thr-params}
\begin{tabular}{r l l}
\hline
$p$ & Instance &  Th-$s$ thresholds $(t_t)_{t=0}^{s-1}$ \\
\hline
1 & $GP(15,2)$                    & $\{3\}$ \\
1 & Double-layer triangular 2-lift  & $\{3\}$ \\
1 & Truncated icosahedron $C_{60}$& $\{3\}$ \\
2 & $GP(15,2)$                    & $\{2,3\}$ \\
2 & Double-layer triangular 2-lift     & $\{3,3\}$ \\
2 & Truncated icosahedron $C_{60}$& $\{2,3\}$ \\
3 & $GP(15,2)$                    & $\{3,2,3\}$ \\
3 & Double-layer triangular 2-lift     & $\{2,3,3\}$ \\
3 & Truncated icosahedron $C_{60}$& $\{3,2,3\}$ \\
\hline
\end{tabular}
\end{table}

\begin{table}[h!]
\centering
\caption{QAOA and classical local baselines on $GP(15,2)$($n=30,m=45$).}
\label{tab:prodK2}
\begin{tabular}{r l l r r r r}
\hline
$p$ &  $\boldsymbol{\gamma}$ &  $\boldsymbol{\beta}$ & QAOA & BM-$k$ & Th-$s$ \\
\hline
1 & $\{0.3077\}$ & $\{0.3927\}$ & 0.6624 & \textbf{0.6959} & 0.6882  \\
2 & $\{0.2437,\, 0.4431\}$ & $\{0.5159,\,0.2513\}$ & 0.7360 & \textbf{0.7477} & 0.7021 \\
3 & $\{0.2110,\,0.3990,\,0.4685\}$ & $\{0.6090,\,0.4590,\,0.2350\}$ & \textbf{0.7836} & 0.7561 & 0.7589\\
\hline
\end{tabular}
\end{table}

\begin{table}[h!]
\centering
\caption{QAOA and classical local baselines on the double-layer triangular 2-lift ($n=24,m=36$).}
\label{tab:tri2lift}
\begin{tabular}{r l l r r r r}
\hline
$p$ &  $\boldsymbol{\gamma}$ & $\boldsymbol{\beta}$ & QAOA & BM-$k$ & Th-$s$ \\
\hline
1 & $\{0.2851\}$ & $\{0.3481\}$ & \textbf{0.6599} & 0.6587& 0.6448\\
2 & $\{0.2917,\,0.5623\}$ & $\{0.4090,\,0.2408\}$ & \textbf{0.7208} & 0.6964 & 0.6578 \\
3 & $\{0.2014,\,0.4855,\,0.5916\}$ & $\{0.5410,\,0.3187,\,0.1851\}$ & \textbf{0.7456} & 0.7114 & 0.7145 \\
\hline
\end{tabular}
\end{table}

\begin{table}[h!]
\centering
\caption{QAOA and classical local baselines on the truncated icosahedron graph $C_{60}$ ($n=60,m=90$).}
\label{tab:mobius}
\begin{tabular}{r l l r r r r}
\hline
$p$  &  $\boldsymbol{\gamma}$ &  $\boldsymbol{\beta}$ & QAOA & BM-$k$ & Th-$s$  \\
\hline
1 & $\{0.3078\}$ & $\{0.3927\}$ & 0.6925 & \textbf{0.6959} & 0.6870 \\
2 & $\{0.2490,\,0.4451\}$ & $\{0.5252,\,0.2469\}$ & 0.7514 & \textbf{0.7584} & 0.7297 \\
3 & $\{0.2110,\,0.3990,\,0.4685\}$ & $\{0.6090,\,0.4590,\,0.2350\}$ & \textbf{0.7893} & 0.7868 & 0.7801  \\
\hline
\end{tabular}
\end{table}

\begin{figure}[t]
\centering
\definecolor{OIblue}{RGB}{0,114,178}
\definecolor{OIorange}{RGB}{230,159,0}
\definecolor{OIgreen}{RGB}{0,158,115}
\definecolor{OIverm}{RGB}{213,94,0}
\definecolor{OIpurple}{RGB}{204,121,167}
\definecolor{OIblack}{RGB}{0,0,0}

\begin{tikzpicture}
\begin{axis}[
  width=0.8\linewidth, height=0.5\linewidth,
  xlabel={Round $p$}, ylabel={Cut fraction},
  xmin=1, xmax=3, ymin=0.65, ymax=0.8,
  grid=both, legend columns=2,
  legend style={at={(0.5,1.02)}, anchor=south},
  legend cell align=left
]

\addplot+[very thick, color=OIblue, mark=o, mark options={solid}]
  coordinates {(1,0.6624) (2,0.7360) (3,0.7836)};
\addlegendentry{Generalized Petersen $GP(15,2)$ — QAOA};

\addplot+[very thick, color=OIorange, mark=square*, dashed]
  coordinates {(1,0.6959) (2,0.7477) (3,0.7589)};
\addlegendentry{Generalized Petersen $GP(15,2)$ — Classical};

\addplot+[very thick, color=OIgreen, mark=o, mark options={solid}]
  coordinates {(1,0.6599) (2,0.7208) (3,0.7456)};
\addlegendentry{Double-layer triangular $2$-lift — QAOA};

\addplot+[very thick, color=OIverm, mark=square*, dashed]
  coordinates {(1,0.6587) (2,0.6964) (3,0.7145)};
\addlegendentry{Double-layer triangular $2$-lift — Classical};

\addplot+[very thick, color=OIpurple, mark=o, mark options={solid}]
  coordinates {(1,0.6925) (2,0.7514) (3,0.7893)};
\addlegendentry{Truncated icosahedron C$_{60}$ — QAOA};

\addplot+[very thick, color=OIblack, mark=square*, dashed]
  coordinates {(1,0.6959) (2,0.7584) (3,0.7868)};
\addlegendentry{Truncated icosahedron C$_{60}$ — Classical};

\end{axis}
\end{tikzpicture}
\caption{Cut fraction versus QAOA round $p$ on three graph families (solid: QAOA; dashed: best classical local).}
\label{fig:qaoa-classical-depth-pgf}
\end{figure}
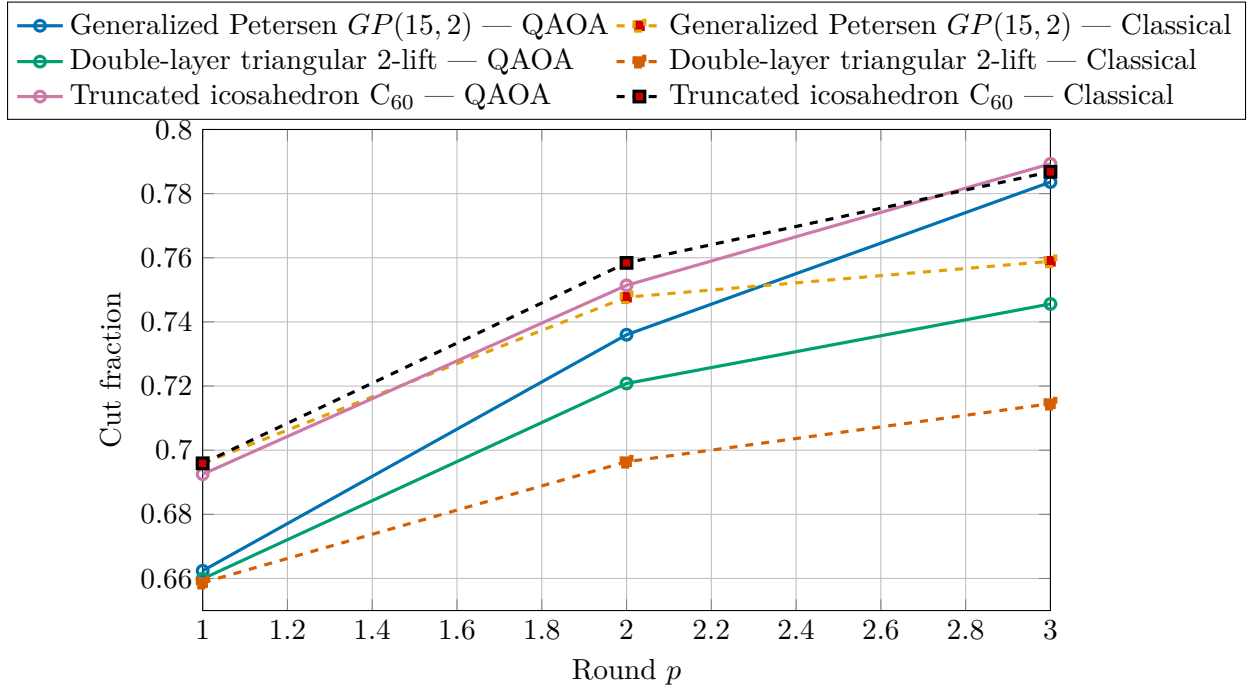

\subsection{Synthesis}

Tabs~\ref{tab:prodK2}, \ref{tab:tri2lift}, and \ref{tab:mobius}, together with Fig.~\ref{fig:qaoa-classical-depth-pgf}, point to three nontrivial features across the generalized Petersen graph \(GP(15,2)\), the double-layer triangular \(2\)-lift,  and the truncated icosahedron graph $C_{60}$.
Note that for \(p=3\) we did not perform exhaustive per-instance angle optimization; instead we used heuristic schedules for 3-regular graphs proposed in~\cite{WLFixedangle} as our initial choices and applied only mild tuning.
Even with this deliberately non-optimized choice, QAOA already exceeds the locality-matched classical baselines (Tabs~\ref{tab:prodK2}, \ref{tab:tri2lift}, \ref{tab:mobius}; Fig.~\ref{fig:qaoa-classical-depth-pgf}).

\medskip\noindent
\textbf{Matched-locality gap depends on the family.}
When QAOA at round \(p\) is compared with classical \(k\)-local baselines at matched lightcone radius (\(k\approx p\)), both the sign and magnitude of the quantum–classical gap vary markedly by graph family. On the double-layer triangular $2$-lift, QAOA shows a consistent advantage already at \(p\le 3\) (cf. Tab~\ref{tab:tri2lift}).

\medskip\noindent
\textbf{Crossover depth on \(GP(15,2)\).}
On \(GP(15,2)\), classical \(p\)-local rules are competitive at very small \(p\), whereas
QAOA overtakes at a moderate depth (see Tab.~\ref{tab:prodK2} and
Fig.~\ref{fig:qaoa-classical-depth-pgf}). At depth \(p=1\), both
\(\mathrm{Th}\text{-}s(\tau)\) and \(\mathrm{BM}\text{-}k\) achieve slightly higher cut
fractions than QAOA under the matched locality budget \(k=p\). By \(p=2\), QAOA already
surpasses the threshold-flip baseline, although \(\mathrm{BM}\text{-}k\) still attains the
best cut fraction. At \(p=3\), even with only mildly tuned angles, QAOA yields the
highest cut fraction among all tested methods on this instance. This behaviour exhibits a
genuine crossover in the constant-round regime: very local classical rules can match or
slightly outperform depth-one QAOA, but shallow QAOA gains a clear advantage once a few
layers are allowed while still remaining in the constant-round regime.


\medskip\noindent
\textbf{Emergent advantage on $C_{60}$.}
On the truncated icosahedron $C_{60}$ (3-regular, girth~5), QAOA surpasses the best matched-local classical baselines once the depth reaches a modest threshold: by $p=3$, QAOA yields the highest cut fraction among all $k\!\approx\!p$ methods under our exact, lightcone-matched evaluation protocol (cf.\ Tab~\ref{tab:mobius}).
The absolute margins are small, but the direction is consistent with a depth requirement for coordinating phases across the pentagon–hexagon tiling and resolving local frustration that strictly $k$-local updates cannot capture at shallower radii.
In short, on $C_{60}$ we observe a numerical crossover at moderate depth: classical locality remains competitive in value, yet the best performance is achieved by QAOA once the problem-dependent depth threshold is cleared, aligning with the crossover pattern observed on $GP(15,2)$.

\section{Conclusions}\label{sec:conclusion}

We provided a complexity-algorithm account of exact expectation evaluation for fixed–round QAOA. On the complexity side, for Max-Cut and round $p\ge 2$, we proved NP–hardness of exact evaluation and, further, NP–hardness under $2^{-O(n)}$ additive error; on the algorithmic side, we designed a bag–level dynamic program over $p$–lightcone neighborhoods whose running time is polynomial in the input size and singly–exponential in a local width parameter (the $p$–local treewidth), enabling exact evaluation on families with (hyper)local width growing at most logarithmically. Using a $p$-local evaluator, we benchmarked against locality-matched classical baselines on $GP(15,2)$, a double-layer triangular $2$-lift, and $C_{60}$. We observe modest gaps at shallow $p$ on the triangular $2$-lift, a $GP(15,2)$ crossover with QAOA ahead at $p{=}3$, and a small $p{=}3$ advantage on $C_{60}$ (3-regular, girth 5). These effects are instance-dependent and limited to the tested rounds.

In the future, we will further conduct in-depth research on the following issues: (i) \emph{Rounds under bounded degree.} For bounded-degree graphs at rounds $p=\Theta(\log n)$, either design a polynomial time procedure that evaluates the QAOA expected cut, or show that no such polynomial-time evaluation exists.
(ii) \emph{Beyond the standard mixer.} Quantify how alternative mixers (e.g., $XY$, constraint–preserving) and generalized alternating–operator families affect additive/relative–error hardness, and integrate exact $p$–local evaluation with structure–aware parameter search to provide reproducible tuning baselines.

\section*{Acknowledgements}
This work was supported in part by the Quantum Science and Technology-National Science and Technology Major Project under Grant No. 2024ZD0300500 and the National Natural Science Foundation of China Grants No. 62325210, 62272441, 12501450.

\bibliographystyle{quantum}
\bibliography{refs}





\onecolumn
\appendix

\makeatletter
\renewcommand{\thetheorem}{\Alph{section}.\arabic{theorem}}\@addtoreset{theorem}{section}
\makeatother

\makeatletter
\renewcommand{\theequation}{\Alph{section}.\arabic{equation}}
\@addtoreset{equation}{section}
\makeatother

\section{Recovering \texorpdfstring{$\mathrm{MAXCUT}(G)$}{MAXCUT(G)} from \texorpdfstring{$\mathrm{MAXCUT}(G')$}{MAXCUT(G')}}
\label{app:C}

\renewcommand{\thelemma}{\Alph{section}.\arabic{lemma}}

This appendix proves that, for the graph
\(
G' = (V',E')
\)
obtained by the four-step construction in Sec.~\ref{QAOAhard}, one can recover in polynomial time the maximum cut of the original graph
\(
G = (V,E)
\) from any maximum cut of $G'$. Recall that the construction consists of the following four components:
\begin{enumerate}
  \item \textbf{Vertex gadgets.} For every $u \in V(G)$, replace $u$ by a complete bipartite graph
  \(
    B(u) \cong K_{n_0+1,\,10n_0}\) with \((L_u, R_u)
  \)
  , where $|L_u| = n_0 + 1$ and $|R_u| = 10n_0$.
  \item \textbf{Synchronous edges.} For every original edge $(u,v) \in E(G)$ and each pair \(u_{1,i} \in L_u\), \(v_{1,i} \in L_v\) with $\forall i \in \{0,1,\dots,n_0\}$, add an edge
  \(
    \{u_{1,i}, v_{1,i}\}.
  \)
  \item \textbf{Global bipartite frame.} Add the bipartite graph \(K_{100n_0^2,\,100n_0^2} = (X,Y)\) with \(|X| = |Y| = 100n_0^2 :=M\)
  and connect the special vertices $x_0 \in X$ and $y_0 \in Y$ to all gadget vertices.
  \item \textbf{Anchor and control edges.} Add a vertex $w$ and connect it to $x_0$, $y_0$, and every $u_{1,0}$; in total this gives $n_0+2$ edges.
\end{enumerate}

To show that $\mathrm{MAXCUT}(G)$ can be recovered from $\mathrm{MAXCUT}(G')$ in polynomial time, we first prove three lemmas.


\begin{lemma}(The global frame must be cut as a whole)
In any maximum cut
\[
(S', V'\setminus S')
\]
of $G'$, the two sides of the global bipartite frame must fall into opposite parts of the cut:
\[
X \subseteq S',\ Y \subseteq V'\setminus S'
\quad\text{or}\quad
Y \subseteq S',\ X \subseteq V'\setminus S'.
\]
\end{lemma}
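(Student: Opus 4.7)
The plan is to run a simple exchange argument that exploits the overwhelming size disparity between the frame $K_{M,M}$ (with $M = 100 n_0^2$) and every other edge class in $G'$. First, I would catalogue the edges of $G'$: the frame contributes $M^2 = 10^4 n_0^4$ edges between $X$ and $Y$, whereas the vertex gadgets, the synchronous inter-gadget edges, the two stars rooted at $x_0, y_0$, and the controller edges at $w$ together amount to only $O(n_0^3)$ edges. The structurally decisive observation is that among frame vertices, only $x_0$ and $y_0$ are incident to edges outside the frame; each of them has non-frame degree exactly $n_0(11 n_0 + 1) + 1 = 11 n_0^2 + n_0 + 1$, while every other $x \in X$ or $y \in Y$ has all of its $M$ incident edges inside the frame.

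Next, for any cut $(S', V'\setminus S')$, I would set $a = |X \cap S'|$ and $b = |Y \cap S'|$, so the frame contributes $a(M-b) + (M-a)b = M(a+b) - 2ab$ cut edges. A short analysis of this quadratic shows that it is at most $M^2$, attained only at the two clean corners $(a,b) \in \{(M,0),(0,M)\}$, and that for any other $(a,b) \in [0,M]^2$ the frame deficit $M^2 - M(a+b) + 2ab$ is at least $M = 100 n_0^2$ (with the interior minimum reaching $M^2/2$).

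The exchange step then proceeds by contradiction: assume some max cut has a non-clean frame and compare it with two candidate repairs, namely sending all of $X$ to $S'$ and all of $Y$ to $\bar S'$, or the mirror choice. Both repairs restore the frame to its maximum $M^2$, a frame gain of at least $M$. A case analysis on the four possible joint positions of $(x_0,y_0)$ in the original cut shows that at least one of the two repairs moves at most one of $\{x_0,y_0\}$; since all other frame vertices have non-frame degree zero, the only non-frame change triggered by the repair is bounded in magnitude by $11 n_0^2 + n_0 + 1$. Hence the net improvement is at least $M - (11 n_0^2 + n_0 + 1) = 89 n_0^2 - n_0 - 1 > 0$ for $n_0 \ge 1$, contradicting the maximality of the assumed cut and forcing the frame to be split cleanly.

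The main obstacle is the little case analysis that selects the right repair direction: a naive choice could move both $x_0$ and $y_0$ simultaneously and incur up to $2(11 n_0^2 + n_0 + 1)$ non-frame losses, which is still smaller than $M$ but tightens the margin and clutters the bookkeeping. The four-case check on the sides of $(x_0,y_0)$ avoids this by guaranteeing a repair that relocates at most one distinguished vertex, after which the lemma reduces to the clean size comparison $\Theta(n_0^4)$ frame contribution versus $\Theta(n_0^2)$ non-frame contribution from $x_0$ and $y_0$.
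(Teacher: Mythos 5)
Your proposal is correct, and it handles the lemma by a direct, uniform exchange argument rather than by the paper's two-case split on the relative position of $x_0$ and $y_0$. The paper fixes the non-frame configuration, compares the value $M^2 + n_0(11n_0+1) + 1$ from a clean frame-split with $x_0,y_0$ opposite against the bound $f(a,c) \le (M-1)M + 2n_0(11n_0+1) + 2$ when $x_0,y_0$ are on the same side, and deduces that only the ``opposite sides'' case can occur; the fact that the frame must then be \emph{fully} bipartitioned is handled implicitly (any non-clean split with $x_0,y_0$ opposite still loses on the frame term while leaving the external contribution of $x_0,y_0$ unchanged). Your version makes exactly this last step explicit and unifies the two cases: any non-corner $(a,b)$ costs the frame at least $M$ edges, and the four-case check on the sides of $(x_0,y_0)$ guarantees a repair direction that relocates at most one of them, so the external loss is at most $11n_0^2+n_0+1$. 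This yields the same margin $89n_0^2 - n_0 - 1 > 0$ as the paper, but without splitting on Cases (1)/(2) first and with the ``frame is cleanly cut'' conclusion falling out directly from a single exchange. Two very minor quibbles: the parenthetical ``interior minimum reaching $M^2/2$'' is slightly misleading, since $(M/2,M/2)$ is a saddle of the deficit $M^2 - M(a+b) + 2ab$ rather than an interior minimum (the true minimum over non-corner integer points is $M$, attained next to the corners, which is what your argument actually uses); and you should state explicitly, as you do implicitly, that all frame vertices other than $x_0,y_0$ have non-frame degree zero, so the repair's only possible external cost comes from the one distinguished vertex you move. Neither affects correctness.
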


\textit{Proof.}
Let \(X' := X \setminus \{x_0\}, Y' := Y \setminus \{y_0\}\), so that
\[
|X'| = |Y'| = M-1 = 100 n_0^2 - 1.
\]
By construction, except for $x_0,y_0$, vertices in $X',Y'$ have no edges going outside the frame; all outside edges are concentrated on $x_0,y_0$.

\paragraph*{Case (1): $x_0$ and $y_0$ are on opposite sides.}
If we put all of $X$ on the side of $x_0$ and all of $Y$ on the side of $y_0$, then all $M^2$ edges of $K_{M,M}$ are cut. In addition, $x_0,y_0$ cut their external edges in total $n_0(11n_0+1) + 1$ edges. Hence, in this case, we obtain
\begin{equation}
M^2 + n_0(11n_0+1) + 1 .
\label{C.1}
\end{equation}

\paragraph*{Case (2): $x_0$ and $y_0$ are on the same side.}
Assume they are both in $V'\setminus S'$. Write
\[
a := |X' \cap S'|,\qquad b := |X'\setminus S'| = (M-1) - a,
\]
\[
c := |Y' \cap S'|,\qquad d := |Y'\setminus S'| = (M-1) - c.
\]
The cut edges inside the frame consist of:
\begin{enumerate}
  \item edges in $X'\times Y'$: they contribute $ad + bc$;
  \item edges from $x_0$ to $Y' \cap S'$: they contribute $c$;
  \item edges from $y_0$ to $X' \cap S'$: they contribute $a$.
\end{enumerate}
Together with at most $2n_0(11n_0+1) + 2$ external cut edges, we get
\[
f(a,c) \le ad + bc + a + c + 2n_0(11n_0+1) + 2.
\]
Substituting $b = (M-1) - a$ and $d = (M-1) - c$ gives
\begin{align}
f(a,c)
&\le a((M-1)-c) + ((M-1)-a)c + a + c + 2n_0(11n_0+1) + 2 \notag \\
&= (M-1)(a+c) - 2ac + a + c + 2n_0(11n_0+1) + 2 \notag \\
&= M (a+c) - 2ac + 2n_0(11n_0+1) + 2.
\label{c.2}
\end{align}
Because of the negative term $-2ac$, the maximum is obtained on the boundary; taking $a=0$ we have
\begin{equation}
f(0,c) \le M c + 2n_0(11n_0+1) + 2 \le (M-1) M +2n_0(11n_0+1) + 2.
\label{C.3}
\end{equation}
Comparing \eqref{C.1} and \eqref{C.3}, and using $M = 100 n_0^2$, we obtain
\[
\bigl(M^2 + n_0(11n_0+1) + 1\bigr) - \bigl((M-1) M + 2n_0(11n_0+1) + 2\bigr)
= M - n_0(11n_0+1) -1 = 89n_0^2 -n_0- 1 > 0.
\]
Thus every arrangement with $x_0,y_0$ on the same side yields a strictly smaller cut than putting the whole frame on opposite sides. By maximality only Case~(1) can occur. \qed

\begin{lemma}(Gadget bipartition is separated)
For every $u \in V(G)$, let \(B(u) \cong K_{n_0+1,\,10n_0}\) be its gadget with bipartition \((L_u,R_u)\), where \(|L_u|=n_0+1\) and \(|R_u|=10n_0\).
In any maximum cut
\[
(S',V'\setminus S')
\]
of $G'$, the gadget \(B(u)\) is separated along this bipartition, i.e.
\[
L_u \subseteq S',\ R_u \subseteq V'\setminus S'
\quad\text{or}\quad
R_u \subseteq S',\ L_u \subseteq V'\setminus S'.
\]
\end{lemma}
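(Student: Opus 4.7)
The plan is to fix the assignment of all vertices outside $B(u)$ in the given maximum cut and to combine Lemma~A.1 with a single-vertex local-optimality argument inside $B(u)$, starting from the $R_u$-side where the external structure is simplest. By Lemma~A.1 the frame $K_{M,M}$ is cut as a whole, so we may assume $X\subseteq S'$ and $Y\subseteq V'\setminus S'$, in particular $x_0\in S'$ and $y_0\in V'\setminus S'$. Write $a_L=|L_u\cap S'|$ and $a_R=|R_u\cap S'|$ for the current counts.

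First, I would analyze every $v\in R_u$. Its only external neighbours are $x_0,y_0$, which sit on opposite sides, so $v$'s external contribution equals $1$ no matter where $v$ is placed. Its internal contribution equals $n_0+1-a_L$ when $v\in S'$ and $a_L$ when $v\in V'\setminus S'$. Maximality forces $v$ onto the strictly larger side, and since this comparison depends only on $a_L$, the same for every $v\in R_u$, the decision is collective: if $a_L>(n_0+1)/2$ then $a_R=0$, if $a_L<(n_0+1)/2$ then $a_R=10n_0$, and if $a_L=(n_0+1)/2$ (possible only for odd $n_0$) the $R_u$-vertices are internally indifferent. In the first two sub-cases, where $a_R\in\{0,10n_0\}$, I would then analyze each $v\in L_u$: it is adjacent to all $10n_0$ vertices of $R_u$, so switching $v$ to the side opposite $R_u$ changes the internal cut by $10n_0$, whereas its external degree is at most $n_0+2$ (two frame edges, at most $n_0-1$ synchronous edges, and the possible $w$-edge when $v=u_{1,0}$). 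Since $10n_0>n_0+2$ for $n_0\ge 1$, the internal swing strictly dominates and $L_u$ is forced entirely to the side opposite $R_u$, yielding the two separated configurations.

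The remaining edge case $a_L=(n_0+1)/2$ I would rule out by a direct global comparison rather than by a local flip. Passing to the separated configuration $\sigma_1$ with $L_u\subseteq S'$ and $R_u\subseteq V'\setminus S'$, the internal cut of $B(u)$ grows from $5n_0(n_0+1)$ to $10n_0(n_0+1)$, a gain of $5n_0(n_0+1)$; the external contribution of $R_u$ is unchanged (each $v\in R_u$ still cuts exactly one of its two edges to $\{x_0,y_0\}$, as above); and the external contribution of $L_u$ can drop by at most $\tfrac12(n_0+1)(n_0+2)$ via the external-degree bound applied to the $(n_0+1)/2$ moved vertices. The net change is at least $(n_0+1)(9n_0-2)/2>0$ for $n_0\ge 1$, contradicting maximality. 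The main obstacle will be the clean bookkeeping of three competing budgets---the internal swing $10n_0$, the external-degree bound $n_0+2$, and the global gap $5n_0(n_0+1)$ versus $\tfrac12(n_0+1)(n_0+2)$---but each required inequality reduces to $n_0\ge 1$ (and the $n_0=1$ case is vacuous since $G$ has no edges), so the bounds are automatic throughout the reduction.
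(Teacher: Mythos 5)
Your proof is correct and rests on the same core mechanism as the paper's: ignore the constant $\{x_0,y_0\}$ contributions via Lemma~A.1, then exploit that the internal degree $10n_0$ of each $L_u$-vertex dwarfs its at-most-$(n_0+2)$ external edges. The bookkeeping differs slightly. The paper first performs a \emph{weakly}-improving move (relocate all of $R_u$ to the majority side of $L_u$, using $a\ge b$), thereby producing another maximum cut in which the strict flip on $L_u$ then applies uniformly; this sidesteps the $a=b$ case entirely, and afterwards a one-line argument (moving a stray $r\in R_u$ gains $n_0+1$ edges) finishes $R_u$. You instead read off the forced side of $R_u$ as a function of $a_L$, handle the generic cases $a_L\gtrless(n_0+1)/2$ by local flips, and then dispatch the borderline $a_L=(n_0+1)/2$ by a direct global swap to the separated configuration. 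Both routes are sound; the paper's version is slightly cleaner in that the weakly-improving preprocessing eliminates the need for the edge case, while yours is a bit more self-contained since it never leaves the given maximum cut until the final contradiction.
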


\textit{Proof.}
Fix any original vertex $u \in V(G)$ and its gadget $B(u)$.
By Lemma~A.1, in any maximum cut of $G'$ the special vertices $x_0$ and $y_0$ lie on opposite sides of the cut.
Moreover, by construction every vertex of every gadget $B(u)$ is adjacent to both $x_0$ and $y_0$.
Therefore, for each vertex $v \in B(u)$, exactly one of the edges $\{v,x_0\}$ and $\{v,y_0\}$ is cut, regardless of which side of the cut $v$ lies on.
In particular, if we compare two cuts that differ only in the placement of vertices inside $B(u)$, then the total contribution of all edges between $B(u)$ and $\{x_0,y_0\}$ to the cut size is the same.
Hence, when reasoning about local modifications inside $B(u)$, we may ignore edges from $B(u)$ to $\{x_0,y_0\}$ and only track edges inside $B(u)$ and the remaining external edges of $B(u)$.

By the construction of synchronous and anchor edges, each vertex in $L_u$ is adjacent to all $10n_0$ vertices of $R_u$ and to at most $n_0+2$ vertices outside $B(u)$.
Vertices in $R_u$ are adjacent only to $L_u$ (besides $x_0,y_0$, which we ignore as argued above).

\paragraph*{Step 1: in any maximum cut all of $L_u$ lie on the same side.}
Suppose for contradiction that there exists a maximum cut $(S_0,V'\setminus S_0)$ of $G'$ such that $L_u$ is split between the two sides.
Write
\[
a := |L_u \cap S_0|,\qquad b := |L_u \cap (V'\setminus S_0)|,
\]
and assume without loss of generality that $a \ge b \ge 1$.

First, we move all vertices of $R_u$ to the side $V'\setminus S_0$ without decreasing the cut.
Indeed, consider any $r \in R_u \cap S_0$.
The edges from $r$ to $L_u$ contribute $b$ cut edges before the move (those to $L_u \cap (V'\setminus S_0)$), and $a$ cut edges after we move $r$ to $V'\setminus S_0$ (those to $L_u \cap S_0$).
Since $a \ge b$ and $r$ has no neighbours outside $B(u)$ other than $x_0,y_0$, the cut size does not decrease.
Performing this operation for all $r \in R_u \cap S_0$ yields a maximum cut $(S_1,V'\setminus S_1)$ with
\[
R_u \subseteq V'\setminus S_1
\]
and still $L_u$ split between the two sides.

Now consider any vertex $\ell \in L_u \cap (V'\setminus S_1)$.
Since $B(u)$ is complete bipartite between $L_u$ and $R_u$ and $R_u \subseteq V'\setminus S_1$, $\ell$ has all $10n_0$ neighbours in $R_u$ on the same side as itself.
If we move $\ell$ from $V'\setminus S_1$ to $S_1$, then:
\begin{itemize}
  \item all $10n_0$ edges from $\ell$ to $R_u$ become cut edges (they were not cut before);
  \item at most $n_0+2$ external edges from $\ell$ to vertices outside $B(u)$ may change from cut to non-cut in the worst case.
\end{itemize}
Thus the net change in the cut size is at least
\[
10n_0 - (n_0+2) = 9n_0 - 2 > 0
\]
for every integer $n_0 \ge 1$, contradicting the maximality of $(S_0,V'\setminus S_0)$.
Therefore, in any maximum cut, all vertices in $L_u$ must lie on the same side.

\paragraph*{Step 2: in any maximum cut all of $R_u$ lie on the opposite side.}
Let $(S_2,V'\setminus S_2)$ be a maximum cut.
By Step~1, all vertices in $L_u$ lie on the same side; without loss of generality we assume
\[
L_u \subseteq S_2.
\]
Suppose, for contradiction, that $R_u$ is not contained in a single side.
Then there exists some $q \in R_u \cap S_2$.
Recall that, besides $x_0,y_0$, the vertex $q$ has no neighbours outside $B(u)$; and as argued at the beginning of the proof, edges to $x_0,y_0$ can be ignored when comparing cuts that differ only on $B(u)$.

Before moving $q$, all its neighbours in $L_u$ lie in $S_2$, so none of the \(|L_u| = n_0+1\) edges from $q$ to $L_u$ are cut.
If we move $q$ from $S_2$ to $V'\setminus S_2$, then all these \(n_0+1\) edges become cut edges, and we do not lose any cut edges incident to $q$ outside $B(u)$.
Therefore the cut size strictly increases by at least $n_0+1 > 0$, contradicting the maximality of $(S_2,V'\setminus S_2)$.
Hence $R_u$ cannot be split: in every maximum cut all vertices in $R_u$ lie on the side opposite to $L_u$.

Combining Steps~1 and~2, we conclude that in any maximum cut of $G'$ the gadget $B(u)$ is separated along its bipartition \((L_u,R_u)\), as claimed. \qed

\begin{lemma}(Projection of a maximum cut is a maximum cut of $G$)

Let $(S', V'\setminus S')$ be any maximum cut of $G'$. Define \(S := \{\, u \in V(G) \mid L_u \subseteq S' \,\}.\)
Then
\(
(S,\, V(G)\setminus S)
\)
is a maximum cut of the original graph $G$.
\end{lemma}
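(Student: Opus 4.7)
The plan is to express the value of any maximum cut $(S', V'\setminus S')$ of $G'$ as an explicit function of $S$ (and of the side of $w$), and then show that any $S^*\subseteq V(G)$ with strictly larger $G$-cut would induce a feasible cut of $G'$ whose value strictly exceeds the maximum, a contradiction.

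First I would use Lemmas~A.1 and~A.2 to rigidify essentially every edge contribution: the frame $(X,Y)$ is cut in full ($M^2$ edges), each gadget $B(u)$ is cut along its bipartition (so its $(n_0+1)(10n_0)$ internal edges are all cut), each of the $n_0(11n_0+1)$ edges from $\{x_0,y_0\}$ to the gadget vertices contributes exactly one cut per vertex (since $x_0,y_0$ lie on opposite sides), and the two edges from $w$ to $\{x_0,y_0\}$ contribute exactly one cut. Only two families of edges are left to account for. The synchronous edges: for each $(u,v)\in E(G)$, all $n_0+1$ parallel edges between $L_u$ and $L_v$ are cut simultaneously and precisely when $u$ and $v$ fall on opposite sides of $(S,V(G)\setminus S)$, contributing $(n_0+1)\cdot \mathrm{cut}_G(S)$ in total. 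The controller edges $\{w,u_{1,0}\}$: writing $k=|S|$, these contribute $k$ or $n_0-k$ depending on the side of $w$, so their total contribution $W$ lies in $\{0,1,\ldots,n_0\}$. Summing, $\mathrm{val}(S')=C_{\mathrm{fix}}+(n_0+1)\,\mathrm{cut}_G(S)+W$, where $C_{\mathrm{fix}}$ collects the forced-constant contributions enumerated above.

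For any $S^*\subseteq V(G)$ I would then build a feasible cut of $G'$ by the natural rule: put $L_u$ on the side indicated by $[u\in S^*]$ and $R_u$ on the opposite side, keep $X,Y$ on opposite sides (consistent with Lemma~A.1), and place $w$ on whichever side maximizes its controller contribution $W^*\ge 0$. This cut has value $C_{\mathrm{fix}}+(n_0+1)\,\mathrm{cut}_G(S^*)+W^*$. Since $(S',V'\setminus S')$ is a maximum cut, $(n_0+1)\,\mathrm{cut}_G(S)+W \ge (n_0+1)\,\mathrm{cut}_G(S^*)+W^*$, and using $W\le n_0$ and $W^*\ge 0$ this rearranges to $(n_0+1)\big(\mathrm{cut}_G(S^*)-\mathrm{cut}_G(S)\big)\le n_0$; integrality of cut values then forces $\mathrm{cut}_G(S^*)\le \mathrm{cut}_G(S)$, so $(S,V(G)\setminus S)$ realizes $\mathrm{MAXCUT}(G)$.

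The main delicacy is the tightness of the final margin: the per-original-edge amplification $n_0+1$ from the synchronous copies exceeds the $w$-placement fluctuation bound $n_0$ by exactly one, and this single unit of slack, combined with integrality of $\mathrm{cut}_G$, is what makes the reduction go through. I therefore expect the careful work to lie in (a) rigorously separating the cut value into the constant part $C_{\mathrm{fix}}$ using Lemmas~A.1 and~A.2 so that no edge is double-counted, and (b) verifying feasibility and $w$-optimality of the constructed competing cut, so that no hidden slack is left unaccounted for when comparing $\mathrm{val}(S')$ against the competing value.
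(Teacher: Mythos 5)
Your proof is correct and follows essentially the same route as the paper's: both use Lemmas A.1 and A.2 to fix the frame, gadget-internal, and $\{x_0,y_0\}$-to-gadget contributions as a constant, isolate the $(n_0+1)\cdot\mathrm{cut}_G(S)$ term from the synchronous edges plus a $w$-dependent remainder in $\{0,\dots,n_0\}$, and then compare against the cut of $G'$ induced by a true maximizer $S^*$, using the one-unit slack between $n_0+1$ and $n_0$ together with integrality. The only cosmetic difference is that you optimize $w$'s placement in the competing cut, while the paper simply bounds its contribution by $[0,n_0]$; both suffice.
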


\textit{Proof.}
By Lemma~A.2, in every maximum cut of $G'$ each gadget $B(u)$ has its two parts entirely on opposite sides, so the above definition is well-defined: for each $u$ exactly one of $L_u,R_u$ is contained in $S'$.

For every original edge $(u,v) \in E(G)$, the construction adds $n_0+1$ synchronous edges \(\{u_{1,0}, v_{1,0}\}, \dots, \{u_{1,n_0}, v_{1,n_0}\}\), with endpoints in $L_u$ and $L_v$, respectively. If by the above definition $u \in S$ and $v \notin S$, then all these $n_0+1$ edges are cut; if $u,v$ fall on the same side, none of them is cut. Hence the contribution of synchronous edges to the cut value of $G'$ is exactly
\(
(n_0+1)\cdot \mathrm{cut}_G(S).
\)

On the other hand, by Lemma~A.1 and Lemma~A.2 the internal cut value of the frame $K_{100n_0^2,100n_0^2}$ and of every gadget $B(u)$ is already fixed at its maximum. The frame contributes
\(
M^2 + n_0 (11n_0+1) + 1,
\)
and each gadget, when separated, contributes
\(
(n_0+1)\cdot 10 n_0.
\)
There are $n_0$ gadgets, so altogether they contribute
\(
10 n_0^2 (n_0+1).
\)
These parts are independent of $S$ and can be grouped into an explicit polynomial in $n_0$:
\[
b(n_0)
:= \bigl(10000 n_0^4 + n_0 (11n_0+1) + 1\bigr) + 10 n_0^2 (n_0+1).
\]
There remain at most $n_0$ “variable” edges incident to the anchor $w$; collect their total contribution into a remainder $r(S,n_0)$, where $0 \le r(S,n_0) \le n_0$. Thus for the maximum cut $(S',V'\setminus S')$ we have
\begin{equation}
\mathrm{cut}_{G'}(S',V'\setminus S')
= b(n_0) + (n_0+1)\cdot \mathrm{cut}_G(S) + r(S,n_0),
\qquad 0 \le r(S,n_0) \le n_0.
\label{C.5}
\end{equation}

Let $S^\ast \subseteq V(G)$ be an actual maximum cut of $G$, so that $\mathrm{cut}_G(S^\ast) = \mathrm{MAXCUT}(G)$. If we carry out the above construction according to $S^\ast$, we obtain a cut of $G'$ of value
\begin{equation}
b(n_0) + (n_0+1)\cdot \mathrm{MAXCUT}(G) + r(S^\ast,n_0).
\label{C.6}
\end{equation}
Since $(S',V'\setminus S')$ is a \emph{maximum} cut of $G'$, comparing \eqref{C.5} and \eqref{C.6} yields
\[
b(n_0) + (n_0+1)\cdot \mathrm{cut}_G(S) + r(S,n_0)
\;\ge\;
b(n_0) + (n_0+1)\cdot \mathrm{MAXCUT}(G) + r(S^\ast,n_0).
\]
Because $0 \le r(S,n_0), r(S^\ast,n_0) \le n_0$, the above forces
\(
\mathrm{cut}_G(S) \ge \mathrm{MAXCUT}(G),
\)
and hence equality must hold:
\(
\mathrm{cut}_G(S) = \mathrm{MAXCUT}(G).
\)
\qed

\subsection*{Polynomial-time recovery}

From \eqref{C.5} we have
\begin{equation}
\mathrm{MAXCUT}(G')
= b(n_0) + (n_0+1)\cdot \mathrm{MAXCUT}(G) + r(S,n_0),
\qquad 0 \le r(S,n_0) \le n_0.
\end{equation}
Since \(b(n_0) = \bigl(10000 n_0^4 + n_0 (11n_0+1) + 1\bigr) + 10 n_0^2 (n_0+1)\) is explicit and depends only on the construction, we can recover $\mathrm{MAXCUT}(G)$ from $\mathrm{MAXCUT}(G')$ by
\begin{equation}
\mathrm{MAXCUT}(G)
= \left\lfloor \frac{\mathrm{MAXCUT}(G') - b(n_0)}{\,n_0+1\,} \right\rfloor.
\end{equation}
This involves only a constant number of arithmetic operations and is clearly polynomial time. Hence, from any maximum cut of $G'$ we can recover $\mathrm{MAXCUT}(G)$ in polynomial time.




\section{ From Complete-Basis Insertions to the Master Formula}\label{app:B}

Inspired by~\cite{Basso2022TQC}, we derive a compact ``master formula'' for two-point observables under fixed-round QAOA that separates on-site mixer amplitudes from two-site cost phases and makes $p$-locality explicit.

\paragraph{Setup.}
Let
\begin{equation}\label{eq:B-psip}
\ket{\psi_p(\bm\gamma,\bm\beta)}
  = \Bigl(\prod_{\ell=1}^{p} e^{-i\beta_\ell B}\,e^{-i\gamma_\ell C}\Bigr)\ket{\bm s}, 
  \qquad \ket{\bm s}=\ket{+}^{\otimes n}.
\end{equation}
For an edge $e=(u,v)$ we consider
\begin{equation}\label{eq:B-target}
\big\langle \bm\gamma,\bm\beta \big| Z_u Z_v \big| \bm\gamma,\bm\beta\big\rangle
= \bra{s}\Bigl(\prod_{\ell=1}^{p} e^{i\gamma_\ell C} e^{i\beta_\ell B}\Bigr)\,Z_u Z_v\,
\Bigl(\prod_{\ell=1}^{p} e^{-i\beta_\ell B} e^{-i\gamma_\ell C}\Bigr)\ket{\bm s}.
\end{equation}

\paragraph{Complete-basis insertions in the $Z$ basis.}

Insert \((2p{+}1)\) resolutions of identity in the \(Z\)-basis between neighboring operators and around \(Z_uZ_v\).
Using that \(C\) is diagonal in the \(Z\)-basis, \(e^{\pm i\bm\beta B}=\bigotimes_v e^{\pm i\bm\beta X_v}\), and \(\bra{\bm s}\bm z\rangle=\bra{\bm z}\bm s\rangle=2^{-n/2}\), we have
\begin{equation}\label{eq:ZLZR-master}
\begin{aligned}
&\langle\bm\gamma, \bm\beta| Z_uZ_v|\bm\gamma, \bm\beta\rangle 
\\&= \langle \bm s|e^{i\gamma_1 C} e^{i\beta_1 B}...e^{i\gamma_p C}e^{i\beta_p B}Z_uZ_ve^{i\gamma_{-p} C}e^{i\beta_{-p} B}... e^{-i\beta_1 B} e^{-i\gamma_1 C}|\bm s\rangle
\\&=\sum_{\{\bm z^{[i]}\}}\langle \bm s|\bm z^{[1]}\rangle e^{i\gamma_1 C(\bm z^{[1]})}\langle \bm z^{[1]}|e^{i\beta_1 B}...|\bm z^{[p]}\rangle e^{i\gamma_p C(\bm z^{[p]})}\langle \bm z^{[p]}|e^{i\beta_p B}|\bm z^{[0]}\rangle z_u^{[0]}z_v^{[0]} 
\\ &\quad\times\langle \bm z^{[0]}|...e^{-i\beta_2 B}|\bm z^{[-2]}\rangle e^{-i\gamma_2 C(z^{[-2]})}\langle \bm z^{[-2]}|e^{-i\beta_1 B}|\bm z^{[-1]}\rangle e^{-i\gamma_1 C(\bm z^{[-1]})}\langle \bm z^{[-1]}|\bm s\rangle 
\\ &=\frac{1}{2^n}\sum_{\{\bm z^{[i]}\}}\exp\left[i\gamma_1 C(\bm z^{[1]}) +...+i\gamma_p C(\bm z^{[p]})-i\gamma_{p} C(\bm z^{[-p]})-... - i\gamma_1 C(\bm z^{[-1]})\right] z_u^{[0]}
\\ &\quad\times z_v^{[0]} \prod_{v = 1}^n\langle z_v^{[1]}|e^{i\beta_1 X}|z_v^{[2]}\rangle\langle z_v^{[2]}|e^{i\beta_2 X}...|z_v^{[0]}\rangle\langle z_v^{[0]}|...e^{-i\beta_2 X}|z_v^{[-2]}\rangle\langle z_v^{[-2]}|e^{-i\beta_1 X}|z_v^{[-1]}\rangle.  
\end{aligned}
\end{equation}

\paragraph{Angle vector and one-site kernel.}
With 
\[
  f(\bm z_v)
  = \frac{1}{2}\,
    \langle z_v^{[1]}|e^{i\beta_1 X}|z_v^{[2]}\rangle\cdots
    \langle z_v^{[p]}|e^{i\beta_p X}|z_v^{[0]}\rangle\,
    \langle z_v^{[0]}|e^{-i\beta_p X}|z_v^{[-p]}\rangle\cdots
    \langle z_v^{[-2]}|e^{-i\beta_1 X}|z_v^{[-1]}\rangle .
\]
and 
\[
\bm\Gamma=\{\gamma_1,\dots,\gamma_p,0,-\gamma_p,\dots,-\gamma_1\}
\]
defined above, \eqref{eq:B-target} becomes the vertex-factorized form
\begin{equation}\label{eq:B-master-global}
\langle\bm\gamma,\bm\beta|Z_uZ_v|\bm\gamma,\bm\beta\rangle
= \frac{1}{2^n}\!\sum_{\{\bm z^{[j]}\}}
\exp\!\Bigl(i\sum_{j=-p}^{p}\Gamma_j\,C\bigl(\bm z^{[j]}\bigr)\Bigr)\,
\bigl(z_u^{[0]} z_v^{[0]}\bigr)\cdot\prod_{x\in V} f\!\bigl(\bm z_x\bigr).
\end{equation}

\paragraph{$p$-locality and restriction to the lightcone.}
Let $H = G_p((u,v))$ be the $p$-local graph of the edge $(u,v)$ as defined in Sec.~\ref{sec:2.1},
with vertex set $V(H) = N_p((u,v))$ and edge set given by~\eqref{eq:2.2}. Since the Heisenberg-evolved observable for the edge $e=(u,v)$ is supported on $V(H)$, the phase term in \eqref{eq:B-master-global} depends only on edges of $H$, and the mixer kernels factorize over $V\setminus V(H)$. Tracing out spins outside $H$ therefore leaves
\begin{equation}\label{eq:B-master-local}
\langle\bm\gamma,\bm\beta|Z_uZ_v|\bm\gamma,\bm\beta\rangle
= \sum_{\{\bm z_x:x\in V(H)\}}
\bigl(z_u^{[0]} z_v^{[0]}\bigr)\,
\exp\!\Bigl(i\sum_{j=-p}^{p}\Gamma_j\,C_H\bigl(\bm z^{[j]}\bigr)\Bigr)\,
\prod_{x\in V(H)} f\!\bigl(\bm z_x\bigr),
\end{equation}
where $C_H$ is the unnormalized cut on $H$ and $\bm z^{[j]}$ is restricted to $V(H)$. \eqref{eq:B-master-local} is the localized master formula used by our dynamic programs.

\section{Lemmas and Proof Details}\label{app:A}

For completeness, we collect the proof sketches of several technical lemmas used in Sec.~\ref{QAOAhard} (the presentation and notation are lightly normalized but the mathematical content is unchanged). Throughout, $Z_{x_0}Z_{y_0}$ denotes the two-qubit observable acting on endpoints $(x_0,y_0)$ and $\mathrm{MC}=\mathrm{MAXCUT}(G')$.

\subsection*{C.1\quad Endpoint Consistency (Lemma 3.2)}

\noindent\textbf{Statement.}
For any vertex $q$, the dependence of the contribution to $\langle Z_uZ_v\rangle$ on $z_q^{[0]}$ falls into the following cases:
\begin{enumerate}
  \item If $q\notin\{u,v\}$ and $z_q^{[p]}\neq z_q^{[-p]}$, then replacing $z_q^{[0]}$ by $-z_q^{[0]}$ flips the sign of the contribution. The two configurations related by this flip therefore cancel in the sum.
  \item If $q\in\{u,v\}$ and $z_q^{[p]}=z_q^{[-p]}$, then flipping $z_q^{[0]}$ again flips the sign of the contribution, so the corresponding pair of configurations cancels.
  \item In all remaining cases (that is, when $q\notin\{u,v\}$ with $z_q^{[p]}=z_q^{[-p]}$, or $q\in\{u,v\}$ with $z_q^{[p]}\neq z_q^{[-p]}$), the contribution is independent of $z_q^{[0]}$.
\end{enumerate}

\medskip
\noindent\textbf{Proof.}
Denote by $W$ the multiplicative factor from all spins and layers other than the two end-layer mixers acting on $q$ (and, when $q\in\{u,v\}$, excluding the explicit observable factor $z_u^{[0]}z_v^{[0]}$). With $\beta_{p}=\frac{\pi}{4}$, the one-qubit mixer matrix elements in the $Z$-basis are
\[
\langle a|e^{i\beta_{p}X}|b\rangle=\begin{cases}
\cos(\frac{\pi}{4})=\tfrac{1}{\sqrt2}, & a=b,\\[2pt]
i\sin(\frac{\pi}{4})=\tfrac{i}{\sqrt2}, & a=-b,
\end{cases}\qquad a,b\in\{\pm1\}.
\]

\smallskip
\emph{Case 1: $q\notin\{u,v\}$ and $z_q^{[p]}\neq z_q^{[-p]}$.}
Fix all other spins and compare the two configurations that differ only by $z_q^{[0]}=+1$ vs.\ $-1$.
The $z_q^{[0]}$-dependent factor is the product of the two end-layer mixers on $q$.
Writing the remaining (fixed) part as $W$, the two contributions are
\begin{equation}\label{eq:A1-1}
W\,\cos\!\Bigl(\tfrac{\pi}{4}\Bigr)\!\left(\mp\, i\,\sin\!\Bigl(\tfrac{\pi}{4}\Bigr)\right),
\end{equation}
which are negatives of each other and sum to zero. This proves Item~1.

\smallskip
\emph{Case 2: $q\in\{u,v\}$ and $z_q^{[p]}=z_q^{[-p]}$.}
Here the observable contributes an extra factor $z_u^{[0]}z_v^{[0]}$, which flips sign when $z_q^{[0]}$ is flipped (exactly one endpoint bit changes). With the other-spin factor denoted by $W$, the two contributions for $z_q^{[0]}=\pm1$ are
\begin{equation}\label{eq:A1-2}
W\cos^2(\tfrac{\pi}{4})\;\text{and}\;-W\sin^2(\tfrac{\pi}{4}),
\end{equation}
when $z_q^{[p]}=z_q^{[-p]}=1$ and
\begin{equation}\label{eq:A1-2}
W-\cos^2(\tfrac{\pi}{4})\;\text{and}\;W\sin^2(\tfrac{\pi}{4}),
\end{equation}
when $z_q^{[p]}=z_q^{[-p]}=-1$. Again negatives of each other and hence cancel. This proves Item~2.

\smallskip
\emph{Case 3: Remaining two situations.}
\begin{enumerate}
\item[(3a)] $q\notin\{u,v\}$ with $z_q^{[p]}=z_q^{[-p]}$.  
For any valid configuration with contribution $W$, flipping $z_q^{[0]}$ changes the mixer product on $q$ from $\cos^2(\frac{\pi}{4})$ to $-(i\sin(\frac{\pi}{4}))^2=\sin^2(\frac{\pi}{4})$ or vice versa. Thus the new contribution equals
\begin{equation}\label{eq:A1-3}
W\cdot\frac{-\,i^{2}\,\sin^{2}(\frac{\pi}{4})}{\cos^{2}(\frac{\pi}{4})}=W,
\end{equation}
or, symmetrically,
\begin{equation}\label{eq:A1-4}
W\cdot\frac{\cos^{2}(\frac{\pi}{4})}{-\,i^{2}\,\sin^{2}(\frac{\pi}{4})}=W,
\end{equation}
so the value is independent of $z_q^{[0]}$.

\item[(3b)] $q\in\{u,v\}$ with $z_q^{[p]}\neq z_q^{[-p]}$.  
Let the original contribution be $W$.
When $(z_q^{[p]},z_q^{[0]},z_q^{[-p]})=(1,1,-1)$ or $(1,-1,-1)$, flipping $z_q^{[0]}$ gives
\begin{equation}\label{eq:A1-5}
W\cdot\frac{-\,i\,\sin(\frac{\pi}{4})\,\cos(\frac{\pi}{4})}{-\,i\,\sin(\frac{\pi}{4})\,\cos(\frac{\pi}{4})}=W,
\end{equation}
and when $(z_q^{[p]},z_q^{[0]},z_q^{[-p]})=(-1,-1,1)$ or $(-1,1,1)$, flipping $z_q^{[0]}$ gives
\begin{equation}\label{eq:A1-6}
W\cdot\frac{i\,\sin(\frac{\pi}{4})\,\cos(\frac{\pi}{4})}{i\,\sin(\frac{\pi}{4})\,\cos(\frac{\pi}{4})}=W.
\end{equation}
In both subcases the contribution is unchanged, hence independent of $z_q^{[0]}$.
\end{enumerate}

Combining the three cases establishes the claim. \hfill\qed

\subsection*{C.2\quad Asynchronous Flips (Lemma 3.3)}
\noindent\textbf{Statement.}
For the observable \(Z_{x_0}Z_{y_0}\), if \(z_{x_0}^{[p]}\neq z_{y_0}^{[p]}\), then there is no contribution to coefficients with exponents
\[
k\ge \Bigl((p-1)\,\mathrm{MC}+200n_0^2+2n_0(11n_0+1)\Bigr).
\]

\medskip

\noindent\textbf{Proof.}
Fix a configuration \(\{ \bm z \} = (z_v^{[j]})_{v\in V(G'),\, j=-p,\dots,p}\) with
\(z_{x_0}^{[p]}\neq z_{y_0}^{[p]}\).
As in Eq.~(3.3), the exponent contributed by \(\{\bm z\}\) is
\[
D(\bm z)
=\sum_{j=1}^p\bigl(C_j(\bm z)-C_{-j}(\bm z)\bigr),
\]
where \(C_j(\bm z)\) denotes the cut value of \(G'\) on layer \(j\). Since every \(C_j(\bm z)\) is a cut of \(G'\), we have \(0\le C_j(\bm z)\le \mathrm{MC}\) for all \(j\), where \(\mathrm{MC}=\mathrm{MAXCUT}(G')\).
In particular, for the inner layers \(j=1,\dots,p-1\) we obtain
\[
C_j(\bm z)-C_{-j}(\bm z)\le \mathrm{MC},
\]
and hence
\[
D(\bm z)
=\sum_{j=1}^{p-1}\bigl(C_j(\bm z)-C_{-j}(\bm z)\bigr)
 + \bigl(C_p(\bm z)-C_{-p}(\bm z)\bigr)
\le (p-1)\,\mathrm{MC}+\bigl(C_p(\bm z)-C_{-p}(\bm z)\bigr).
\]
Therefore, to show that no configuration with \(z_{x_0}^{[p]}\neq z_{y_0}^{[p]}\) can contribute to a coefficient with exponent
\[
k\ge (p-1)\,\mathrm{MC}+200n_0^2+2n_0(11n_0+1),
\]
it suffices to prove that, under this hypothesis,
\begin{equation}\label{eq:async-Cp-gap}
C_p(\bm z)-C_{-p}(\bm z)
<200n_0^2+2n_0(11n_0+1).
\end{equation}

We now analyze \(C_p(\bm z)-C_{-p}(\bm z)\) by exploiting endpoint consistency (Lemma~3.2).
Recall that for any vertex \(q\notin\{x_0,y_0\}\), if \(z_q^{[p]}\neq z_q^{[-p]}\), then flipping \(z_q^{[0]}\) reverses the sign of the contribution of \(\{\bm z\}\) without changing the exponent.
Thus, in the coefficient of any fixed Laurent exponent, all such configurations cancel pairwise.
Consequently, when bounding \eqref{eq:async-Cp-gap} for nonvanishing coefficients, we may restrict attention to configurations obeying
\begin{equation}\label{eq:nonend-eq}
z_q^{[p]}=z_q^{[-p]}\quad\text{for all }q\notin\{x_0,y_0\}.
\end{equation}

Applying Lemma~3.2 with \(q\in\{x_0,y_0\}\) shows that, in order to avoid cancellation, we must have
\(z_{x_0}^{[p]}\neq z_{x_0}^{[-p]}\) and \(z_{y_0}^{[p]}\neq z_{y_0}^{[-p]}\); otherwise flipping the corresponding layer-\(0\) spin would again flip the sign of the contribution and force cancellation.
Since we are in the case \(z_{x_0}^{[p]}\neq z_{y_0}^{[p]}\), we have
\(z_{x_0}^{[p]}z_{y_0}^{[p]}=-1\).
Multiplying the two inequalities
\(z_{x_0}^{[p]}\neq z_{x_0}^{[-p]}\) and \(z_{y_0}^{[p]}\neq z_{y_0}^{[-p]}\) yields
\[
z_{x_0}^{[p]}z_{x_0}^{[-p]}=-1,\qquad
z_{y_0}^{[p]}z_{y_0}^{[-p]}=-1,
\]
so
\[
\bigl(z_{x_0}^{[p]}z_{y_0}^{[p]}\bigr)\bigl(z_{x_0}^{[-p]}z_{y_0}^{[-p]}\bigr)
=1,
\]
and hence also \(z_{x_0}^{[-p]}\neq z_{y_0}^{[-p]}\).
In particular, the spins at \(x_0\) and \(y_0\) are opposite on both boundary layers \(\pm p\).

Recall that
\[
C_j(\bm z)
=\sum_{(u,v)\in E(G')}
\frac{1}{2}\bigl(1-z_u^{[j]}z_v^{[j]}\bigr),
\]
so the difference \(C_p(\bm z)-C_{-p}(\bm z)\) receives a nonzero contribution from an edge \((u,v)\) only if its cut/non-cut status differs between layers \(p\) and \(-p\).
We now consider edges by their relation to \(\{x_0,y_0\}\).

\smallskip\noindent
\emph{Edges not incident to \(x_0\) or \(y_0\).}
For such an edge \((u,v)\) both endpoints satisfy \eqref{eq:nonend-eq}, so
\(z_u^{[p]}=z_u^{[-p]}\) and \(z_v^{[p]}=z_v^{[-p]}\).
Therefore the product \(z_u^{[j]}z_v^{[j]}\) is identical for \(j=p\) and \(j=-p\), and the edge contributes equally to \(C_p(\bm z)\) and \(C_{-p}(\bm z)\).
Hence these edges contribute zero to \(C_p(\bm z)-C_{-p}(\bm z)\).

\smallskip\noindent
\emph{Edges between \(\{x_0,y_0\}\) and \(V(G')\setminus(X\cup Y\cup\{x_0,y_0\})\).}
By construction of \(G'\), every vertex outside the global bipartite frame \(X\cup Y\) that is adjacent to \(x_0\) or \(y_0\) is in fact adjacent to \emph{both} \(x_0\) and \(y_0\) (these are the vertices in the blow-up gadgets \(B(u)\) for \(u\in V(G)\), together with the controller \(w\)).
Let \(u\) be such a vertex.
Then \eqref{eq:nonend-eq} gives \(z_u^{[p]}=z_u^{[-p]}\), while the previous paragraph shows that
\(z_{x_0}^{[-p]}=-\,z_{x_0}^{[p]}\) and \(z_{y_0}^{[-p]}=-\,z_{y_0}^{[p]}\).
Writing \(s=z_u^{[p]}\) and choosing signs so that \(z_{x_0}^{[p]}=+1\), \(z_{y_0}^{[p]}=-1\), we have
\[
\begin{aligned}
\frac{1}{2}\bigl(1-z_u^{[p]}z_{x_0}^{[p]}\bigr)
+\frac{1}{2}\bigl(1-z_u^{[p]}z_{y_0}^{[p]}\bigr)
&=\frac{1}{2}\bigl(1-s\bigr)+\frac{1}{2}\bigl(1+s\bigr)=1,\\[2mm]
\frac{1}{2}\bigl(1-z_u^{[-p]}z_{x_0}^{[-p]}\bigr)
+\frac{1}{2}\bigl(1-z_u^{[-p]}z_{y_0}^{[-p]}\bigr)
&=\frac{1}{2}\bigl(1+s\bigr)+\frac{1}{2}\bigl(1-s\bigr)=1.
\end{aligned}
\]
Thus the \emph{pair} of edges \((u,x_0)\) and \((u,y_0)\) contributes the same total amount to \(C_p(\bm z)\) and to \(C_{-p}(\bm z)\), and hence contributes zero to their difference.
Summing over all such vertices \(u\), we conclude that edges between \(\{x_0,y_0\}\) and \(V(G')\setminus(X\cup Y\cup\{x_0,y_0\})\) do not affect \(C_p(\bm z)-C_{-p}(\bm z)\).

\smallskip\noindent
\emph{Edges within the global frame \(X\cup Y\).}
We are left with edges whose endpoints lie in the bipartite frame.
Among these, the special edge \((x_0,y_0)\) has endpoints that are opposite on both layers \(p\) and \(-p\), so it is cut on both layers and contributes equally to \(C_p(\bm z)\) and \(C_{-p}(\bm z)\); hence its contribution to the difference again vanishes.

All remaining edges incident to \(x_0\) or \(y_0\) have the form \((x_0,y)\) with \(y\in Y\setminus\{y_0\}\) or \((x,y_0)\) with \(x\in X\setminus\{x_0\}\).
There are exactly
\[
(|Y|-1)+(|X|-1)=(100n_0^2-1)+(100n_0^2-1)=200n_0^2-2
\]
such edges.
For each of these edges, the neighbor in \(X\cup Y\) has the same spin on layers \(p\) and \(-p\), while \(x_0\) or \(y_0\) flips between the two layers.
Thus, in the most favorable situation for maximizing \(C_p(\bm z)-C_{-p}(\bm z)\), we may assume that every such edge is cut at layer \(p\) and uncut at layer \(-p\), contributing \(1\) to the difference.
Therefore,
\[
C_p(\bm z)-C_{-p}(\bm z)\le 200n_0^2-2.
\]

This immediately implies
\[
C_p(\bm z)-C_{-p}(\bm z)
\le 200n_0^2-2
<200n_0^2+2n_0(11n_0+1),
\]
which is exactly \eqref{eq:async-Cp-gap}.
Combining this with the earlier bound
\(D(\bm z)\le(p-1)\,\mathrm{MC}+\bigl(C_p(\bm z)-C_{-p}(\bm z)\bigr)\), we obtain
\[
D(\bm z) < (p-1)\,\mathrm{MC}+200n_0^2+2n_0(11n_0+1)
\]
for every configuration with \(z_{x_0}^{[p]}\neq z_{y_0}^{[p]}\) that survives the cancellations enforced by Lemma~3.2.
Hence no such configuration can contribute to coefficients with exponent
\(k\ge(p-1)\,\mathrm{MC}+200n_0^2+2n_0(11n_0+1)\),
as claimed.
\hfill\(\Box\)

\medskip

\subsection*{C.3 Attainability of the Largest Nonzero Exponent (Lemma 3.4)}

\noindent\textbf{Statement.}
For the observable $Z_{x_0}Z_{y_0}$, the largest nonzero exponent in $h_{(x_0,y_0)}$ attainable by the induced polynomial is
\begin{equation}
\Bigl((p-1)\,\mathrm{MC}+200n_0^2+2n_0(11n_0+1)\Bigr).
\end{equation}

We prove two parts: (i) \emph{Maximality}: the upper bound of the exponent in the display above is attainable; (ii) \emph{Non-vanishing}: the coefficient of the term at that exponent is nonzero.

\medskip
\noindent\textbf{Notation.}
For any configuration $\{\bm z\}$, denote its monomial contribution to the exponent term by
\[
\mathrm{Contr}(\bm z)\;:=\; f(\bm z)\,z_{x_0}^{[0]}z_{y_0}^{[0]},
\]
and the corresponding exponent by
\[
D(\bm z)\;:=\; \sum_{j=1}^{p}\!\bigl(C_j(\bm z)-C_{-j}(\bm z)\bigr),
\]
where $C_j(\bm z)$ and $C_{-j}(\bm z)$ are the cut values on layers $j$ and $-j$, respectively. Let $\mathrm{MC}$ be a uniform upper bound for the inner-layer difference.

\paragraph{Maximality}
\emph{Goal and idea.} If each difference $C_j(\bm z)-C_{-j}(\bm z)$ can simultaneously attain its individual upper bound, then their sum $\sum_{j=1}^{p}\!\bigl(C_j(\bm z)-C_{-j}(\bm z)\bigr)$ is naturally maximized.

\begin{enumerate}
\item \emph{Inner layers ($1\le j\le p-1$) fixed at extremum.}
For $j\in[1,p-1]$, take
\(
C_j(\bm z)-C_{-j}(\bm z)=\mathrm{MC},
\)
so that the inner-layer contribution is exactly $(p-1)\,\mathrm{MC}$.

\item \emph{Necessary constraints and non-vanishing on the end layers ($j=\pm p$).}
To avoid cancellation on the end layers and to ensure a nonzero coefficient, Lemmas~3.2 and~3.3 allow us to restrict feasible configurations to those satisfying the following necessary conditions (“endpoint consistency / asynchronous flips yield no large exponent”):
\[
\begin{aligned}
&\text{(i)}\ \forall\,u\notin\{x_0,y_0\}:\ z_u^{[p]}=z_u^{[-p]};\\
&\text{(ii)}\ \forall\,u\in\{x_0,y_0\}:\ z_u^{[p]}\neq z_u^{[-p]};\\
&\text{(iii)}\ z_{x_0}^{[p]}=z_{y_0}^{[p]}.
\end{aligned}
\]
Moreover, to maximize $C_p(\bm z)-C_{-p}(\bm z)$ we impose:
\[
\text{(iv)}\ \text{All non-endpoint spins on layer }p\text{ are identical and opposite to }z_{x_0}^{[p]}=z_{y_0}^{[p]}.
\]

\item \emph{Achievable upper bound for the end-layer maximum difference.}
Under (i)–(iv), the entire effective contribution from the end layers is generated by flipping $x,y$ from $p$ to $-p$; via explicit construction and an upper-bound estimate we have
\[
\begin{aligned}
\max\bigl(C_p(\bm z)-C_{-p}(\bm z)\bigr)&=max_{(u,v)\in E}(d(u)+d(v)-2)
\\&=d(x_0)+d(y_0)-2
\\&=200n_0^2+2n_0(11n_0+1)
\end{aligned}
\]
Here, “attainability’’ follows from the explicit arrangement of endpoints and non-endpoints given in the main text; the “upper bound’’ follows from the maximum gain accrued by the number of edges adjacent to $x_0,y_0$ when they flip between the two end layers, subtracting the amplifier edge between $(x_0,y_0)$ to obtain the net value.
\end{enumerate}

In summary, the inner layers contribute $(p-1)\,\mathrm{MC}$ and the end layers contribute $200n_0^2+2n_0(11n_0+1)$. Therefore the attainable upper bound of the largest exponent is
\[
(p-1)\,\mathrm{MC}+200n_0^2+2n_0(11n_0+1),
\]
namely as in (\ref{eq:3.8}).
\paragraph{Non-vanishing}

We prove that the coefficient of the term at the exponent
\[
D_{\max}=\Bigl((p-1)\,\mathrm{MC}+200n_0^2+2n_0(11n_0+1)\Bigr)
\]
is \emph{nonzero}.

\subparagraph{Configurations achieving the largest exponent: necessary and sufficient conditions}

We first characterize the family of configurations that can reach the largest exponent $D_{max}$. The following conditions are \emph{necessary and sufficient}:
\begin{itemize}
  \item[(1)] For \(i\in[1,p-1]\), take \(C_i(\bm z)=\mathrm{MC}\); for \(i\in[-p,-1]\), take \(C_i(\bm z)=0\).
  \item[(2)] On the end layers, \(z_{x_0}^{[p]}=z_{y_0}^{[p]}\neq z_{x_0}^{[-p]}=z_{y_0}^{[-p]}\).
  \item[(3)] For all \(u\in V\setminus\{x_0,y_0\}\), \(z_u^{[p]}=z_u^{[-p]}\).
  \item[(4)] All non-endpoint spins on layer \(p\) are identical and \emph{opposite} to \(z_{x_0}^{[p]}=z_{y_0}^{[p]}\).
\end{itemize}
\noindent\textit{Explanation of necessity and sufficiency.} If any one of the above fails, the differences \(C_j(\bm z)-C_{-j}(\bm z)\) for all \(j\in[1,p]\) cannot jointly attain their upper bounds; conversely, if (1)–(4) hold, then each layer difference can be maximized, thereby achieving the largest exponent \(D_{max}\).

\subparagraph{Ignoring configurations containing a \(\sin\psi\) factor}
If there exist \(u\in V\) and some \(i\in[1,p-1)\cup[-p+1,-1)\) such that \(z_u^{[i]}\neq z_u^{[i+1]}\), then the amplitude of that configuration contains at least one factor \(\sin\psi=2^{O(-n^2)}\), and its contribution is dominated by configurations without \(\sin\psi\). Hence such configurations can be ignored (this does not affect the final non-vanishing conclusion, Because even if there are $2^{O(n)}$ such terms, the sum of their moduli is smaller than the modulus of a single term without the $\sin\psi$ factor; see also below that their sum is nonzero).

\subparagraph{Fixing the value of \(w\): equal amplitude and end-layer structure}
For convenience, first \emph{fix} the configuration of the controller \(w\). Under (1)–(4), only these configurations can reach the largest exponent. We show that the monomial contributions of these configurations have \emph{equal value}.
\begin{itemize}
  \item \emph{Uniform value of inner-layer transitions:} \\
  (i) For \(u\in V\) and \(i\in[1,p-1)\cup[-p+1,-1)\),
\(
  \langle z_u^{[i]}|e^{i\beta_i X}|z_u^{[i+1]}\rangle=\cos\psi.
\)
  \\
  (ii) For \(i\in V\) and \(i=-p\),
\(
  \langle z_u^{[i]}|e^{i\beta_i X}|z_u^{[i+1]}\rangle=\cos\frac{\pi}{4}.
\)
  \\
  (iii) For the forward boundary layer \(i=p-1\), we use the
  Max-Cut structure. Fix the spin of \(w\), and consider only
  configurations that attain the largest exponent.
  By construction of the reduction, these configurations correspond
  to maximum cuts \((S,V\setminus S)\) of \(G'\) satisfying:
  \begin{itemize}
    \item \(x_0\) and \(y_0\) lie on opposite sides of the cut;
    \item we may assume \(w\in V\setminus S\);
    \item each such maximum cut of \(G'\) induces a maximum cut of the
      original instance \(G\), and we refine our choice so that the
      induced cut on \(G\) maximizes \(|S\cap V(G)|\).
  \end{itemize}
  With this tie-breaking, once the spin of \(w\) is fixed, the number
  of vertices \(u\in V(G)\) that lie in \(S\) is the same for every
  configuration achieving the largest exponent.
  Equivalently, the number of vertices \(u\) for which
  \(z_u^{[p-1]}\neq z_u^{[p]}\) is \emph{independent} of the particular
  maximizing configuration.
  At the boundary choice \(\beta_{p-1}=\frac{\pi}{4}\), each one-qubit
  factor \(\langle z_u^{[p-1]}|e^{i\beta_{p-1}X}|z_u^{[p]}\rangle\)
  has value \(\cos\frac{\pi}{4}\) when \(z_u^{[p-1]}=z_u^{[p]}\) and
  value \(i\sin\frac{\pi}{4}\) when \(z_u^{[p-1]}\neq z_u^{[p]}\),
  and the number of ``flip'' transitions is fixed across all maximizers.
  Therefore the product
  \[
    \prod_{u\in V'}
      \bigl\langle z_u^{[p-1]}\big|e^{i\beta_{p-1}X}\big|z_u^{[p]}\bigr\rangle
  \]
  has the same value for all
  configurations that achieve the largest exponent.
  Hence, within the family of configurations that can reach the largest exponent, the inner-layer contribution is identical.
  \item \emph{Uniform value across the three end layers \((p,0,-p)\):}\\
  (i) For any \(u\in V\setminus\{x_0,y_0\}\), the product
\(
  \langle z_u^{[p]}|e^{i\beta_pX}|z_u^{[0]}\rangle\;\langle z_u^{[0]}|e^{-i\beta_pX}|z_u^{[-p]}\rangle
\)
  takes the \emph{same} value across all configurations achieving the largest exponent;\\
  (ii) The endpoint factor
\(
  z_{x_0}^{[0]}z_{y_0}^{[0]}\!\!\prod_{u\in\{x_0,y_0\}}\!\langle z_u^{[p]}|e^{i\beta_pX}|z_u^{[0]}\rangle\;\langle z_u^{[0]}|e^{-i\beta_pX}|z_u^{[-p]}\rangle
\)
  is also \emph{identical}. 
\end{itemize}
It follows that, with \(w\) fixed, the configurations achieving the largest exponent have \emph{equal contributions}.
Moreover, the sum of contributions corresponding to the two choices of \(w\) is also the same; this will be used together with the “places where \(w\) may vary’’ in the next step.

\subparagraph{The only two places that change \(w\) and the four types of contributions}
Consider when the value of \(w\) \emph{changes}. Clearly, only two places can change the spin of \(w\):
\[
\bigl(z_w^{[p-1]},\,z_w^{[p]}\bigr)\quad\text{and}\quad\bigl(z_w^{[-p]},\,z_w^{[-p+1]}\bigr).
\]
Under the properties of configurations that can reach the largest exponent:
\begin{itemize}
  \item The first change swaps the “bits that should flip’’ with those that should not, thereby flipping \(n-r\) bits and introducing a factor \(i^{\,n-r}\) (where \(r\) is the number of bits that should flip).
  \item The second change flips \emph{all} vertices, multiplying the contribution by \(({-}i)^{\,n}\).
\end{itemize}
Accordingly, classifying by whether \(w\) is changed at these two places, the four representative contributions are
\[
\begin{aligned}
&\text{\(\bigl(\)unchanged, unchanged\(\bigr)\)}:\quad i^{\,r}\,T,\qquad
&&\text{\(\bigl(\)unchanged, changed\(\bigr)\)}:\quad -\,i^{\,r+n}\,T,\\
&\text{\(\bigl(\)changed, unchanged\(\bigr)\)}:\quad i^{\,n-r}\,T,\qquad
&&\text{\(\bigl(\)changed, changed\(\bigr)\)}:\quad -\,i^{\,2n-r}\,T,
\end{aligned}
\]
where \(T>0\) is a common real magnitude, and \(r\) is the maximum, over all maximum-cut assignments, of the number of vertices whose configuration differs from that of \(w\). To compare their \emph{phase sum}, multiply all four terms by \(T^{-1}i^{\,3r}\) to obtain
\[
1,\quad -\,i^{\,n},\quad i^{\,n+2r},\quad -\,i^{\,2n+2r}.
\]
We then distinguish cases by the parity of \(r\) and of \(n\) (in our construction, \(n=211n_0^2+n_0+1\) is odd):
\begin{itemize}
  \item \emph{\(r\) odd:} The four terms are \(1,\ -i^{\,n},\ i^{\,n+2r},\ -1\). Using \(i^{\,n+2r}=-\,i^{\,n}\),
\[
  1-1-i^{\,n}-i^{\,n}=-\,2\,i^{\,n}\neq 0,
\]
  hence the phase sum is nonzero.
  \item \emph{\(r\) even:} The four terms are \(1,\ -i^{\,n},\ i^{\,n},\ -i^{\,2n}\). Since \(i^{\,2n}=(-1)^{n}\),
\[
  1-(-1)^{n}+i^{\,n}-i^{\,n}=1-(-1)^{n}=2\neq 0,
\]
  so the phase sum is again nonzero.
\end{itemize}

\subparagraph{Conclusion}
Within the family of configurations achieving the largest exponent \(D_{max}\), each single contribution has \emph{equal magnitude}; the \emph{sum of phases} over the four representative configurations above is \emph{nonzero}. Therefore the coefficient of the Laurent term at exponent \(D\) is \emph{nonzero}, proving “non-vanishing.’’

\medskip
Combining “maximality’’ and “non-vanishing’’ shows that the exponent in (\ref{eq:3.8}) is both attainable and has a nonzero coefficient, so
\(
D_{\max}=\Bigl((p-1)\,\mathrm{MC}+200n_0^2+2n_0(11n_0+1)\Bigr)
\)
is the \textbf{largest nonzero exponent} for $Z_{x_0}Z_{y_0}$, completing the proof of Lemma~3.4.
\qed

\subsection*{C.4 Upper Bound for Other Edges (Lemma 3.5)}
\noindent\textbf{Statement.}
Let the observable be $Z_iZ_j$ with $(i,j)\notin\{(x_0,y_0),(y_0,x_0)\}$. Then all coefficients of the terms in $h_{(i,j)}$ with exponents \(\ge \Bigl((p-1)\,\mathrm{MC}+200n_0^2+2n_0(11n_0+1)\Bigr)\) vanish.

\noindent\textbf{Proof.}
First suppose exactly one endpoint is in $\{x_0,y_0\}$, say the observable is $Z_{x_0}Z_o$ for some $o\in V(G')\setminus\{y_0\}$.
Because 
\(
\sum_{i=1}^{p-1}(C_i(\bm z)-C_{-i}(\bm z))\le (p-1)\mathrm{MC}
\)
 and Lemma~\ref{lem:5.2}, we have
\[
\begin{aligned}
C_p(\bm z)-C_{-p}(\bm z)&\le d(x_0)+d(o)-2
\\&\le n_0(11n_0+1)+1+100n_0^2+100n_0^2-2
\\&< d(x_0)+d(y_0)-2
\end{aligned}
\]
which is at most the degrees of the two endpoints. So
\[
\sum_{i=1}^{p}(C_i(\bm z)-C_{-i}(\bm z))<\Bigl((p-1)\,\mathrm{MC}+200n_0^2+2n_0(11n_0+1)\Bigr).
\]
which means that the term of the highest non-zero degree in the polynomial \( h_{(x_0,o)}(x) \), obtained by taking edges with exactly one of \( (x_0, y_0) \) as an endpoint as observables, is less than $D_{max}$.

If neither endpoint is in $\{x_0,y_0\}$, then
\begin{equation}\label{eq:A4-3}
\sum_{i=1}^{p-1}(C_i(\bm z)-C_{-i}(\bm z))\le (p-1)\mathrm{MC}
\end{equation}
and
\begin{equation}\label{eq:A4-4}
C_p(\bm z)-C_{-p}(\bm z)< d(i)+d(j)\le 200n_0^2.
\end{equation}
Thus the largest exponent attainable is at most
\(
\,\bigl((p-1)\mathrm{MC}+200n_0^2\bigr)
\)
which is less than $D_{max}$ proving the claim.
\qed

\subsection*{C.5 Magnitude of the Extreme-Term Coefficients (Lemma~\ref{lem:3.7})}
\noindent\textbf{Statement.}
Let $D_{\max}$ be the maximal exponent that appears with nonzero
coefficients in the Laurent polynomial $h_{(x_0,y_0)}(x)$ induced by the observable $Z_{x_0} Z_{y_0}$ (cf.\ Sec.~\ref{QAOAhard}).
Then
\begin{equation*}
  |w(D_{\max})| \;\ge\; 2^{-2n-1}.
\end{equation*}

\noindent\textbf{Proof.}
Fix the canonical family of configurations from Lemma~3.4 that attain $D_{\max}$ in \eqref{eq:3.8}:
layers $j=1,\dots,p-1$ take a common maximum cut; layers $j=-1,\dots,-(p-1)$ take a common
minimum cut; and the boundary layers $\{\pm p,0\}$ are set as in that construction. For any such
configuration, the absolute magnitude of its contribution factorizes into (i) the two overlaps with
$|s\rangle=|+\rangle^{\otimes n}$ and (ii) on-site one-qubit transition amplitudes coming from the
mixers:

\smallskip
\emph{(a) Overlaps with $|s\rangle$.} From the complete-basis insertions, the two overlaps contribute
a uniform factor $2^{-n}$.

\smallskip
\emph{(b) $\frac{\pi}{4}$ mixers at the outer boundary $\boldsymbol{\pm p}$.}
With $\beta_{ p}=\frac{\pi}{4}$, for each qubit and for each of the two boundary mixers the one-qubit
transition amplitude has magnitude in $\{\cos(\frac{\pi}{4}),\sin(\frac{\pi}{4})\}=2^{-1/2}$. Hence the total
outer-boundary mixer magnitude is $2^{-n}$.

\smallskip
\emph{(b$'$) $\frac{\pi}{4}$ mixers at the near-boundary transitions $\boldsymbol{p-1\to p}$ and
$\boldsymbol{-p\to -p+1}$.}
Under the parameter choice $\beta_{p-1}=\frac{\pi}{4}$, these two additional near-boundary
mixer blocks contribute another factor $2^{-n}$ in magnitude across $n$ qubits.

\smallskip
\emph{(c) Inner mixers.} Excluding the four $\frac{\pi}{4}$ boundary/near-boundary mixer blocks
$\{\pm p,\,p-1,\, -p+1\}$, each qubit undergoes exactly $2(p-2)$ inner mixer transitions, each
contributing a factor $\cos\psi$, where the parameter choice in Sec.~\ref{QAOAhard} ensures $\cos\psi \ge 1-2^{-n^2}$.
Therefore, across all qubits,
\[
  (\cos\psi)^{\,2n(p-2)} \;\ge\; \bigl(1-2^{-n^2}\bigr)^{2n(p-2)}
  \;\ge\; 1-2n(p-2)\cdot 2^{-n^2} \;\ge\; \tfrac34
\]
for all sufficiently large $n$ (Bernoulli’s inequality and $p\ge 2$).

\smallskip
Multiplying (a), (b), (b$'$), and (c), each fixed configuration contributes at least
$2^{-n}\cdot 2^{-n}\cdot 2^{-n}\cdot \tfrac34 = 3*2^{-3n-2}$ in absolute value.
By Endpoint Consistency (Lemma~3.2) together with the $\frac{\pi}{4}$ phase-pairing argument, the
integrand is insensitive to the entire layer-$0$ string; hence summing over this layer multiplies
the contribution by $2^{n}$. Note that the terms with the $\sin\psi$ factor are equivalent to a small perturbation $\delta\le 2^{O(n)}*2^{O(-n^2)}=2^{O(-n^2)}$ here. Thus
\[
  |w(D_{\max})| \;\ge\; 2^{n}\cdot \frac34 \cdot 2^{-3n}-\delta \;\ge\; 2^{-2n-1}.
\]
\qed

\end{document}